\renewcommand{\Game}{\mathcal{G}}
\newcommand{\Mach}{\mathcal{M}}
\newcommand{\Kount}{\mathcal{K}}
\newcommand{\<}{\langle}
\renewcommand{\>}{\rangle}
\newcommand{\lv}{\lVert}
\newcommand{\rv}{\rVert}
\newcommand{\bsigma}{\bar{\sigma}}
\newcommand{\btau}{\bar{\tau}}
\renewcommand{\|}{\upharpoonright}
\newcommand{\WW}{\mathbb{W}}
\newcommand{\DD}{\mathbb{D}}
\newcommand{\NN}{\mathbb{N}}
\newcommand{\RR}{\mathbb{R}}
\newcommand{\QQ}{\mathbb{Q}}
\renewcommand{\SS}{\mathbb{S}}
\renewcommand{\AA}{\mathbb{A}}
\newcommand{\BB}{\mathbb{B}}
\newcommand{\LL}{\mathbb{L}}
\newcommand{\EE}{\mathbb{E}}
\newcommand{\playcircle}{{\scriptsize{\Circle}}}
\renewcommand{\l}{\ell}
\renewcommand{\epsilon}{\varepsilon}
\renewcommand{\phi}{\varphi}
\newcommand{\Sat}{\textsc{Sat}}
\newcommand{\coSat}{\textsc{coSat}}
\newcommand{\SubsetSum}{\textsc{SubsetSum}}
\newcommand{\subgameperfect}{\text{subgame-perfect}}
\newcommand{\Nash}{\text{Nash}}
\newcommand{\SPR}{\mathsf{SPR}}
\newcommand{\NR}{\mathsf{NR}}
\newcommand{\R}{\mathsf{R}}
\newcommand{\card}{\mathsf{card}}
\newcommand{\Plays}{\mathsf{Plays}}
\newcommand{\Hist}{\mathsf{Hist}}
\newcommand{\nego}{\mathsf{nego}}
\newcommand{\Occ}{\mathsf{Occ}}
\newcommand{\Inf}{\mathsf{Inf}}
\newcommand{\MP}{\mathsf{MP}}
\newcommand{\MPi}{\underline{\mathsf{MP}}}
\newcommand{\EL}{\mathsf{EL}}
\newcommand{\f}{\mathsf{f}}
\newcommand{\Comp}{\mathsf{Comp}}
\newcommand{\DS}{\mathsf{DS}}
\newcommand{\NDev}{\mathsf{NDev}}
\newcommand{\SPDev}{\mathsf{SPDev}}
\newcommand{\triangleone}{\blacktriangledown\!_1}
\newcommand{\Cl}{\mathcal{C}}
\newcommand{\Poly}{\mathsf{P}}
\newcommand{\NP}{\mathsf{NP}}
\newcommand{\ExpTime}{\mathsf{EXPTIME}}
\newcommand{\coNP}{\mathsf{coNP}}
\newcommand{\PSpace}{\mathsf{PSPACE}}
\newcommand{\bbx}{\bar{\bar{x}}}
\newcommand{\hC}{\hat{C}}
\newcommand{\stack}[2]{\stackrel{#1}{#2}}
\newenvironment{customthm}[1]
{\innercustomthm}
{\endinnercustomthm}
\newenvironment{customlm}[1]
{\innercustomlm}
{\endinnercustomlm}
\newenvironment{customprop}[1]
{\innercustomprop}
{\endinnercustomprop}
\newtheorem{pb}{Problem}
\title{Rational Verification and Checking for Nash and Subgame-perfect Equilibria in Graph Games}
\author{Léonard Brice}{Université Libre de Bruxelles, Belgium}{leonard.brice@ulb.be}{}{}
\author{Jean-François Raskin}{Université Libre de Bruxelles, Belgium}{jraskin@ulb.ac.be}{}{}
\author{Marie van den Bogaard}{Univ Gustave Eiffel, CNRS, LIGM, F-77454 Marne-la-Vallée, France}{marie.van-den-bogaard@univ-eiffel.fr}{}{}
\authorrunning{L. Brice, J.-F. Raskin, and M. van den Bogaard}
\keywords{Games on graphs, Nash equilibria, subgame-perfect equilibria.}
\begin{document}

	\maketitle

\begin{abstract}
    We study two natural problems about rational behaviors in multiplayer non-zero-sum sequential infinite duration games played on graphs: checking problems, that consist in deciding whether a strategy profile, defined by a Mealy machine, is rational; and rational verification, that consists in deciding whether all the rational answers to a given strategy satisfy some specification.
    We give the complexities of those problems for two major concepts of rationality: Nash equilibria and subgame-perfect equilibria, and for five major classes of payoff functions: parity, quantitative reachability, energy, discounted-sum, and mean-payoff.
\end{abstract}

    \section{Introduction}

Formal methods are essential to guarantee the correctness of safety critical computer systems. Techniques like model-checking~\cite{DBLP:reference/mc/2018} or automated theorem proving~\cite{DBLP:journals/pacmpl/EbnerURAM17} are now routinely used to develop systematically hardware pieces as well as embedded control systems.
Nevertheless, there are contexts in which formal methods have not yet been applied successfully large-scale.
That is the case for instance of multi-agent systems, which still represent a challenge for formal verification techniques, because they are usually composed of {\em heterogeneous} components, ranging from traditional pieces of reactive code, to wholly autonomous robots or human users.
Producing operational model abstractions for this diversity of sub-systems can be challenging, or even impossible. 

While it may be inconvenient, to say the least, to produce an operational model of the behavior of a human or a complex autonomous robot, it may be easier to identify the high level objectives of those components.
And taking into account those objectives is often key for reasoning about the correctness of a system that interacts with those components.
Indeed, a system is usually not supposed to be correct in all circumstances, but only when agents in its environment behave in a way that concurs with their own objectives.
This is why we need to develop verification frameworks that allow us to reason on correctness in the presence of rational agents: agents whose behaviors are rational with regards to their high level objectives.
In \emph{rational verification}, a system needs to enforce a property $\phi$, not in all possible executions, but only in those executions in which agents of the environment behave rationally with regards to their own objectives. %

Rationality is the focus point of game theory and can be formalized in several ways. For instance, rational behavior for the agents can be modeled by the notion of \emph{Nash equilibrium} (NE)~\cite{Nas50} in a multiplayer non-zero sum game graph~\cite{DBLP:conf/fossacs/Ummels08}. NEs have been used in a few promising contributions, like in verification of non-repudiation and fair exchange protocols~\cite{DBLP:journals/jcs/KremerR03,DBLP:conf/concur/KremerR01,DBLP:conf/vmcai/ChatterjeeR12}, or planning of self-driving cars interacting with human drivers~\cite{DBLP:conf/rss/SadighSSD16}, etc. 
Nevertheless, those works do not propose a general framework for rational verification and their contributions are rather specific to the particular application domains that they consider. There is thus a need for more systematic study of formal frameworks for rational verification. Such a study has been started recently: for instance, the authors of \cite{GutierrezNPW20} study the automatic verification of an LTL specification in multi-agent systems that behave according to an NE, and in \cite{DBLP:conf/concur/BruyereRT22}, the authors study a setting in which the environment has multiple objectives and only produces behaviors that are Pareto-optimal with regards to those objectives. This work contributes to that line of research by considering a notion of rationality formalized by \emph{subgame-perfect equilibria} (SPEs), a refinement of NEs that is better suited to formalize rationality in sequential games, since NEs suffer from non-credible threats in such contexts (see e.g.~\cite{Osborne04}).

More precisely, we consider here two decision problems. 
First, in the {\em checking problems}, the inputs are: $(i)$ a multiplayer game graph, $(ii)$ a finite state description of a (potentially infinite) set of strategy profiles for the players in the game, and $(iii)$ a description of their objectives. The problem asks to check that all the strategy profiles in the set are NEs, or SPEs. This mathematical setting is well suited to formalize, for instance, that a high level description of a protocol, that contains nondeterminism, is such that all its implementations lead to rational behaviors of the entities participating to the protocol. This setting can be used to formalize elegantly the verification problems solved in~\cite{DBLP:journals/jcs/KremerR03,DBLP:conf/vmcai/ChatterjeeR12}, for instance. 
Second, the {\em rational verification problem} takes as inputs: $(i)$ a multiplayer game graph with a designated player called {\em Leader}, $(ii)$ a finite state description of a (potentially infinite) set of strategies for Leader, $(iii)$ a description of the objective for Leader, and $(iv)$ a description of the objectives of all the other players. It asks whether for all possible fixed strategies $\sigma_\LL$ of Leader (defined by the finite state description), for all possible rational responses of the other agents, the generated outcome satisfies Leader's objective. That problem is well-suited to formalize the verification of correctness of a controller interacting with an environment composed of rational agents, and intending to enforce a given property.

To solve those problems, we first provide two general constructions that reduce those problems to simpler ones. We show that they lead to algorithms that are computationally optimal for a large variety of classes of games with objectives ranging from Boolean $\omega$-regular objectives, like parity objectives, to quantitative ones, like mean-payoff objectives. Several lower complexity bounds require new constructions. We now detail our technical contributions before comparing our results with the existing results in the literature.

\subsection{Contributions}
To solve checking problems, we provide as a first preliminary result a general construction, called the \emph{deviation game} (Definition~\ref{def_deviation_games}): a game that simulates the parallel construction of a play compatible with the strategy profile, and of another play in which one player is deviating.
Thus, the checking problems reduce to the simpler problem of deciding, given a game, whether there exists a play in which some player gets a better payoff than some other one (Corollary~\ref{cor_privilege_problem}).
To the best of our knowledge, there is no general polynomial-time reduction in the other direction, hence the latter problem may be strictly harder than the former; but it turns out to be sufficient to prove that all the checking problems in our five classes of games can be solved with simple polynomial time graph algorithms (Theorems~\ref{thm_parity_checking}, \ref{thm_mp_checking}, \ref{thm_reach_checking}, \ref{thm_energy_det_ne_checking}, and \ref{thm_ds_checking}).
Interestingly, there is one exception: in energy games, the checking problems are $\coNP$-complete (Theorem~\ref{thm_energy_checking}) --- and those problems are closely related to the succinct one counter automaton reachability problem.

To solve the rational verification problems, we provide a general construction, called the \emph{product game} (Definition~\ref{defi_product_game}): we show that, given a game and a finite-state description of a set of Leader's strategies, one can incorporate the memory states of that finite-state description in the arena of the game in a way that Leader is implicitly forced to follow some strategy in the set.
Thus, we show that the rational verification problem reduces in polynomial time to the \emph{universal threshold problem}, a problem that is easier to study algorithmically: given a game, does every equilibrium satisfy a given specification? 
Also, some game classes we analyze have been addressed with slightly different definitions in previous literature. Interestingly, we provide a reduction in the opposite direction as well (Corollary 6).

We use that tool to prove the undecidability of rational verification in energy games (Theorems~\ref{thm_energy_nash_verif} and~\ref{thm_energy_sp_verif}); in the case of subgame-perfect rational verification, we show that undecidability holds even when Leader plays against only two players.
We show that Nash rational verification is co-recursively enumerable in those games, and leave that question open for subgame-perfect rational verification --- but contrary to the Nash setting, SPEs may require infinite memory to reach some payoffs (Proposition~\ref{pptn_energy_infinite_memory}).
In discounted-sum games, we show that the rational verification problems are at least as hard as the \emph{target discounted-sum problem} (Theorem~\ref{thm_ds_verif_hardness}), whose decidability is an open question.
However, we prove that those problems are recursively enumerable (Theorem~\ref{thm_ds_verif_easiness}).
In the case of mean-payoff games, Corollary~\ref{cor_reductions}, combined with older results, entails that the rational verification problems are $\coNP$-complete.
But that case highlights a subtlety in the definition of rational verification: if one wants to check that a strategy is such that \emph{every} rational response satisfies the specification, then when no such response exists, the strategy will be accepted.
In the case of mean-payoff games, that leads to results that can be considered as counter-intuitive.
We thus propose a stronger definition of the rational verification problem, called \emph{achaotic rational verification}, to avoid that weakness: it consists in deciding whether a strategy satisfies the specification against every response that is \emph{as rational as it can be}, using the notions of $\epsilon$-NE and $\epsilon$-SPE, that are quantitative relaxations of NE and SPE.
We show that such a problem is $\Poly^\NP$-complete in mean-payoff games (Theorem~\ref{thm_mp_ach_verif}), and that in every other setting (Nash or subgame-perfect rational verification in the two other game classes), it coincides with rational verification, since rational responses always exist (Proposition~\ref{prop_ach_verif}).
\begin{table*}
\centering
\scriptsize
	\begin{tabular}{c|c|c|c|c|c|c|c|c|c|c|c|c|}
		
		& \multicolumn{2}{c|}{Nash-checking} & \multicolumn{2}{c|}{SP-checking} & \multicolumn{2}{c|}{Nash RV} & \multicolumn{2}{c|}{Ach. Nash RV} & \multicolumn{2}{c|}{SP RV} & \multicolumn{2}{c|}{Ach. SP RV} \\
		\cline{2-13}
		& det. & n.-det. & det. & n.-det. & det. & n.-det. & det. & n.-det. & det. & n.-det. & det. & n.-det. \\
		\hline
		Parity & \multicolumn{4}{c|}{poly.} & \multicolumn{4}{c|}{$\coNP$-comp.} & \multicolumn{4}{c|}{$\coNP$-comp.} \\
		\hline
		QR & \multicolumn{4}{c|}{poly.} & \multicolumn{4}{c|}{$\coNP$-comp.} & \multicolumn{4}{c|}{$\PSpace$-comp.} \\
		\hline
		Energy & poly. & \multicolumn{3}{c|}{$\coNP$-comp. $(*)$} &
		\multicolumn{4}{c|}{undecidable, co-RE $(*)$} &
		\multicolumn{4}{c|}{undecidable $(*)$} \\
		\hline
		DS & \multicolumn{4}{c|}{poly.} & \multicolumn{4}{c|}{TDS-hard, RE $(*)$} & \multicolumn{4}{c|}{TDS-hard, RE $(*)$} \\
		\hline
		MP & \multicolumn{4}{c|}{poly.} & \multicolumn{4}{c|}{$\coNP$-comp.} & \multicolumn{2}{c|}{$\coNP$-comp.} & \multicolumn{2}{c|}{$\Poly^\NP$-comp. $(*)$} \\
		\hline
	\end{tabular}
    \caption{Synthesis of our results}
    \label{table_synthesis}
\end{table*}

A synthesis of those results can be found in Table~\ref{table_synthesis}: \emph{QR} is a short for \emph{quantitative reachability}, \emph{DS} for \emph{discounted-sum}, \emph{MP} for \emph{mean-payoff}, \emph{SP} for \emph{subgame-perfect}, and \emph{RV} for \emph{rational verification}.
Several of them are direct consequences of our general constructions, combined with results drawn from the literature; while some others required more effort and substantial original work.
We highlighted the latter with a star $(*)$ as they form the main technical contributions of our paper.

\subsection{Related works}

During the last decade, multiplayer games and their applications to reactive synthesis have raised a growing attention: the reader may refer to~\cite{DBLP:conf/lata/BrenguierCHPRRS16,DBLP:conf/dlt/Bruyere17,DBLP:journals/siglog/Bruyere21, DBLP:conf/tacas/FismanKL10, DBLP:journals/amai/KupfermanPV16}, and their references.
The concept of {\em rational verification} appears in~\cite{DBLP:journals/corr/abs-2207-02637}, where Gutierrez, Najib, Perelli, and Wooldridge give the complexity of several related problems.
They use a definition that is slightly different from ours: their problem consists in deciding, given a game and a specification, whether all NEs (or one of them) in that game satisfy the specification, without any player representing the system (Leader in our setting).
Still, as we show with Corollary~\ref{cor_reductions}, that problem is strongly related to ours.
In~\cite{DBLP:conf/atal/Steeples0W21}, they also study if $\omega$-regular properties are enforced by NEs induced by mean-payoff objectives.
The objectives considered in those papers are only $\omega$-regular objectives.
Moreover, both in~\cite{DBLP:journals/corr/abs-2207-02637} and in~\cite{DBLP:conf/atal/Steeples0W21} only NEs are considered, while our main contributions are about SPEs, that are arguably better suited for reasoning about sequential games~\cite{Osborne04}, but also require substantially more complex techniques.
In~\cite{filiot_et_al:LIPIcs:2020:12534}, Filiot, Gentilini, and Raskin study \emph{Stackelberg values} of mean-payoff and discounted-sum two-player non-zero sum games, i.e. the payoff that Leader gets when the other player, \emph{Follower}, plays the \emph{best response} that is available with regards to his own objective.
This is a synthesis problem while we consider a verification problem. They consider only one player in the environment while we consider the more general case of $n$ players.

In~\cite{Ummels05}, and later in~\cite{DBLP:conf/fossacs/Ummels08}, Ummels studies SPEs in parity games.
He proves that they always exist, and that deciding whether there exists an SPE in a given parity game that generates a payoff vector between two given thresholds (the \emph{constrained existence problem}, very close to the \emph{universal threshold problem} studied in this paper) is $\ExpTime$-easy and $\NP$-hard.
In~\cite{DBLP:conf/concur/BrihayeBGRB19}, Brihaye, Bruyère, Goeminne, Raskin, and van den Bogaard, study the same problem in quantitative reachability games, and prove that it is $\PSpace$-complete.

In~\cite{DBLP:journals/mor/FleschP17}, Flesch and Predtetchinski give a general procedure to characterize SPEs.
%That procedure requires in general infinitely many iterations, where infinitely many two-player zero-sum of infinite size must be solved.
In~\cite{Concur}, Brice, Raskin, and van den Bogaard introduce the \emph{negotiation function}, a tool that turns Flesch and Predtetchinski's procedure into effective algorithms for a large class of games.
In~\cite{CSL}, they use it to close the gap left by Ummels, proving that the constrained existence problem is $\NP$-complete in parity games, with methods that they use later in~\cite{Icalp} to prove that the same problem is also $\NP$-complete in mean-payoff games.
An alternative procedure to solve such SPE problems is proposed in~\cite{thesis_noemie}, where Meunier constructs a two-player zero-sum game in which one player has a winning strategy if and only if there exists an SPE satisfying the desired constraint in the input game.
That technique is nevertheless often costly, because the size of the constructed game is proportional to the number of possible payoff vectors; and for the same reason, it cannot be applied to games with infinite payoff spaces.
%Thus it cannot be used for mean-payoff, discounted sum, and quantitative reachability games studied in this paper. For parity games, this construction does not lead to optimal algorithms.

Energy objectives have also been widely studied, in connection with the study of vector additions systems with states and Petri nets, but almost always in a two-player zero-sum setting: see for instance~\cite{DBLP:conf/formats/BouyerFLMS08, DBLP:journals/iandc/VelnerC0HRR15, DBLP:journals/amai/KupfermanPV16}.
As for discounted-sum objectives, they are defined for instance by Zwick and Paterson in~\cite{ZwickPaterson}, again in a two-player zero-sum setting.
They are strongly related to the target discounted-sum problem, which is a long-standing open problem, as shown in~\cite{DBLP:conf/lics/BokerHO15} by Boker, Henzinger, and Otop.
To the best of our knowledge, no algorithmic results are known for those classes of objectives in a multiplayer non-zero sum setting.

\subsection{Structure of the paper}
In Section~\ref{sec_background}, we introduce the necessary background.
In Section~\ref{sec_tools}, we present our two general constructions, the deviation game and the product game.
In Section~\ref{sec_parity}, we exploit it to study parity games; in Section~\ref{sec_qr}, quantitative reachability games; in~\ref{sec_energy}, energy games; in Section~\ref{sec_ds}, discounted-sum games; and in Section~\ref{sec_mp}, mean-payoff games.

	\section{Background} \label{sec_background}

\subsection{Graphs, games and strategies}

We call \emph{graph} a finite directed graph, i.e. a pair $(V, E)$ where $V$ is a finite set of \emph{vertices} and $E \subseteq V \times V$ is a set of \emph{edges}.
The edge $(u, v)$, written $uv$, is an \emph{outgoing edge} of $u$.
A \emph{path} in $(V, E)$ is a finite or infinite sequence $\alpha = \alpha_0 \alpha_1 \dots \in V^* \cup V^\omega$ such that for every index $k$, we have $\alpha_k \alpha_{k+1} \in E$.
We write $\Occ(\alpha)$ (resp. $\Inf(\alpha)$) for the set of vertices that occur (resp. that occur infinitely often) in $\alpha$.
For a given index $k$, we write $\alpha_{\leq k} = \alpha_{< k+1} = \alpha_0 \dots \alpha_k$, and $\alpha_{\geq k} = \alpha_{> k-1} = \alpha_k \alpha_{k+1} \dots$
A \emph{cycle} is a finite path $c = c_0 \dots c_n$ with $c_n c_0 \in E$.
A finite path $\alpha$ is \emph{simple} if for every two indices $k \neq \l$, we have $\alpha_k \neq \alpha_\l$.

%\begin{definition}
	We call \emph{non-initialized game} a tuple
	$\Game = \left(\Pi, V, (V_i)_{i \in \Pi}, E, \mu\right)$, where:
		
		\begin{itemize}
			\item $\Pi$ is a finite set of \emph{players};
			
			\item $(V, E)$ is a graph, in which every vertex has at least one outgoing edge;
			
			\item $(V_i)_{i \in \Pi}$ is a partition of $V$, in which $V_i$ is the set of vertices \emph{controlled} by player $i$;
			
			\item a \emph{play} (resp. \emph{history}) in the game $\Game$ is an infinite (resp. finite) path in the graph $(V, E)$, and the set of plays (resp. histories) in $\Game$ is denoted by $\Plays \Game$ (resp. $\Hist \Game$);
			
			\item the \emph{payoff function} $\mu: \Plays \Game \to \RR^\Pi$ maps each play $\pi$ to the tuple $\mu(\pi) = (\mu_i(\pi))_{i \in \Pi}$.
		\end{itemize}

	Given a set of players $P \subseteq \Pi$, we often write $V_P = \bigcup_{i \in P} V_i$.
	When $i$ is a player and when the context is clear, we write $-i$ for the set $\Pi \setminus \{i\}$.
	We often assume that a special player, called \emph{Leader} and denoted by the symbol $\LL$, belongs to the set $\Pi$.
	An \emph{initialized game} is a pair $(\Game, v_0)$, often written $\Game_{\|v_0}$, where $\Game$ is a non-initialized game and $v_0 \in V$ is a vertex called \emph{initial vertex}.
	When the context is clear, we use the word \emph{game} for both initialized and non-initialized games.	
		A play (resp. history) in the initialized game $\Game_{\|v_0}$ is a play (resp. history) that has $v_0$ as first vertex.
		The set of plays (resp. histories) in $\Game_{\|v_0}$ is denoted by $\Plays \Game_{\|v_0}$ (resp. $\Hist \Game_{\|v_0}$).
		We also write $\Hist_i \Game$ (resp. $\Hist_i \Game_{\|v_0}$) for the set of histories in $\Game$ (resp. $\Game_{\|v_0}$) whose last vertex is controlled by player $i$.

		A \emph{strategy} for player $i$ in the initialized game $\Game_{\|v_0}$ is a mapping $\sigma_i: \Hist_i \Game_{\|v_0} \to V$, such that $v\sigma_i(hv)$ is an edge of $(V, E)$ for every $hv$.
		A history $h$ is \emph{compatible} with a strategy $\sigma_i$ if and only if $h_{k+1} = \sigma_i(h_0 \dots h_k)$ for all $k$ such that $h_k \in V_i$.
		This definition naturally extends to plays.
		A \emph{strategy profile} for $P \subseteq \Pi$ is a tuple $\bsigma_P = (\sigma_i)_{i \in P}$, where each $\sigma_i$ is a strategy for player $i$ in $\Game_{\|v_0}$.
		A play, or a history, is \emph{compatible} with $\bsigma_P$ if it is compatible with every $\sigma_i$ for $i \in P$.
  Since the $\sigma_i$'s domains are pairwise disjoint, we sometimes consider $\bsigma_P$ as one function: for $hv \in \Hist \Game_{\|v_0}$ such that $v \in \bigcup_{i \in P} V_i$, we liberally write $\bsigma_P(hv)$ for $\sigma_i(hv)$ with $i$ such that $v \in V_i$.
		A \emph{complete} strategy profile, usually written $\bsigma$, is a strategy profile for $\Pi$.
		Exactly one play is compatible with the strategy profile $\bsigma$: we call it its \emph{outcome} and write $\< \bsigma \>$ for it.
		When $\btau_P$ and $\btau'_Q$ are two strategy profiles with $P \cap Q = \emptyset$, we write $(\btau_P, \btau'_Q)$ for the strategy profile $\bsigma_{P \cup Q}$ such that $\sigma_i = \tau_i$ for $i \in P$, and $\sigma_i = \tau'_i$ for $i \in Q$.

\subsection{Notable classes of games} \label{sec_notable_classes}
 
Here, we will focus on five game classes.
The two first classes are studied mostly as examples: our results will easily come from the general constructions that we present in Section~\ref{sec_tools}, and from the existing literature.
The first class, parity games, is a class of \emph{Boolean games}, i.e. games in which all payoffs are equal either to $0$ or to $1$.
For such games, we say that player $i$ \emph{loses} the play $\pi$ when $\mu_i(\pi) = 0$, and \emph{wins} it when $\mu_i(\pi) = 1$.
The other games are called \emph{quantitative}.

Parity games are Boolean games in which each player wins a play if that play satisfies some \emph{parity condition}, a canonical encoding for $\omega$-regular conditions.

	\begin{definition}[Parity]
		The game $\Game$ is a \emph{parity game} if for each player $i$, there exists a mapping $\kappa_i: V \to \NN$, called \emph{color mapping}, such that for every play $\pi$, we have $\mu_i(\pi) = 1$ if the color $\min_{v \in \Inf(\pi)} \kappa_i(v)$ is even, and $\mu_i(\pi) = 0$ if it is odd.
	\end{definition}

Quantitative reachability games constitute a quantitative version of the classical class of (Boolean) reachability games: each player seeks to reach a given target as fast as possible.

\begin{definition}[Quantitative reachability]
    The game $\Game$ is a \emph{quantitative reachability game} if for each player $i$, there exists a \emph{target set} $T_i \subseteq V$, such that for every play $\pi$, we have:
    $$\mu_i(\pi) = \frac{1}{1 + \inf\{n \in \NN ~|~ \pi_n \in T_i\}},$$
    with the conventions $\inf \emptyset = +\infty$ and $\frac{1}{+\infty} = 0$.
\end{definition}

The three following game classes will require more substantial work.
In those classes, each player $i$'s payoff is based on a \emph{reward mapping} $r_i: E \to \QQ$.
Intuitively, the reward mapping gives the (positive or negative) reward that player $i$ gets for each action.
In energy games, the players seek to keep the aggregated sum of those rewards, their \emph{energy level}, always nonnegative.
That quantity symbolizes any resource that an agent could have to store: fuel, money, \dots

\begin{definition}[Energy]
    In a graph $(V, E)$, we associate to each reward mapping $r$ the \emph{energy level function} $\EL_r: \Hist \Game \to \NN \cup \{\bot\}$ defined by:
    \begin{itemize}
        \item $\EL_r(h_0) = 0$;
        \item $\EL_r(h_{\leq n+1}) = \EL_r(h_{\leq n}) + r(h_nh_{n+1})$ if $\EL_r(h_{\leq n}) \neq \bot$, and $\EL_r(h_{\leq n}) + r(h_nh_{n+1}) \geq 0$;
        \item $\EL_r(h_{\leq n+1}) = \bot$ otherwise.
    \end{itemize}

	The game $\Game$ is an \emph{energy game} if there exists a tuple $(r_i)_{i \in \Pi}$ of reward mappings such that for each $i$ and every $\pi$, we have $\mu_i(\pi) = 0$ if $\EL_{r_i}(\pi_{\leq n}) = \bot$ for some $n$, and $\mu_i(\pi) = 1$ otherwise.
	When the context is clear, we write $\EL_i$ for $\EL_{r_i}$.
\end{definition}

In discounted-sum games, each player's payoff is obtained by summing the rewards that the player obtains with some discount factor applied as the play goes along.

	\begin{definition}[Discounted-sum]
	    In a graph $(V, E)$, we define for each reward mapping $r$ and each \emph{discount factor} $\lambda \in (0, 1)$ the \emph{discounted sum function} $\DS_r^\lambda: h \mapsto \sum_k \lambda^k r(h_kh_{k+1})$.
		Then, we write
	    $\DS^\lambda_r (\pi) = \lim_n \DS^\lambda_r(\pi_{\leq n}).$
		The game $\Game$ is a \emph{discounted-sum game} if there exists a discount factor $\lambda \in (0, 1) \cap \QQ$ and a tuple $(r_i)_{i \in \Pi}$ of reward mappings such that for each $i$ and every $\pi$, we have $\mu_i(\pi) = \DS^\lambda_{r_i}(\pi)$.
		When the context is clear, we write $\DS_i$ for $\DS_{r_i}^\lambda$.
	\end{definition}

In mean-payoff games, a players' payoff is equal to their asymptotic average reward.
	
	\begin{definition}[Mean-payoff]
	    In a graph $(V, E)$, we define for each reward mapping $r$ the \emph{mean-payoff function} $\MP_r: h_0 \dots h_n \mapsto \frac{1}{n} \sum_k r\left(h_k h_{k+1}\right)$.
	    Then, we write
	    $\MPi_r (\pi) = \liminf_n \MP_r(\pi_{\leq n}).$
		The game $\Game$ is a \emph{mean-payoff game} if there exists a tuple $(r_i)_{i \in \Pi}$ of reward mappings, such that for each player $i$, we have $\mu_i = \MPi_{r_i}$.
		When the context is clear, we write $\MP_i$ for $\MP_{r_i}$, and $\MPi_i$ for $\MPi_{r_i}$.
	\end{definition}

Every game $\Game$ from one of those five classes can be encoded with a finite number of bits.
We write $\lv \Game \rv$ for that number.

An example of mean-payoff game is given in Figure~\ref{fig_ex_game}, with two players: player $\Circle$, who controls the vertices $a$ and $c$, and player $\Box$, who controls the vertex $b$.
The initial vertex is $v_0 = a$.
We wrote above each edge the rewards that both players get when that edge is taken.
Three types of plays are possible in that game: the one that loops on the vertex $a$ gives both players the payoff $0$; the ones that loop on the vertex $b$ give both players the payoff $1$; and the ones that loop on the vertex $c$ give both players the payoff $0$.

\begin{figure} 
			\centering
			\begin{tikzpicture}[->,>=latex,shorten >=1pt, initial text={}, scale=0.8, every node/.style={scale=0.7}]
      %\useasboundingbox (0, 1.4) rectangle (4, -0.25);
				\node[state, initial left] (a) at (0, 0) {$a$};
				\node[state, rectangle] (b) at (2, 0) {$b$};
                \node[state] (c) at (4, 0) {$c$};
				\path (a) edge node[above] {$\stack{\playcircle}{0}\stack{\Box}{0}$} (b);
                \path (b) edge node[above] {$\stack{\playcircle}{0}\stack{\Box}{0}$} (c);
				\path (a) edge [loop above] node {$\stack{\playcircle}{0}\stack{\Box}{0}$} (a);
				\path (b) edge [loop above] node {$\stack{\playcircle}{1}\stack{\Box}{1}$} (b);
                \path (c) edge [loop above] node {$\stack{\playcircle}{0}\stack{\Box}{0}$} (c);
			\end{tikzpicture}
			\caption{An example of mean-payoff game}
			\label{fig_ex_game}
		\end{figure}
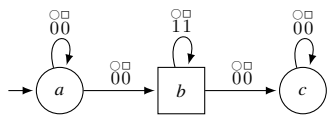

\subsection{Equilibria and rational responses}

In this paper, we study rational behaviors of players: we have, therefore, to define our rationality concepts.
Let us start with the most classical one: \emph{Nash equilibrium}.
		The strategy profile $\bsigma$ is a \emph{Nash equilibrium} (resp. {\em $\LL$-fixed Nash equilibrium}) --- or ($\LL$-fixed) \emph{NE} for short --- in $\Game_{\|v_0}$ if for each player $i$ (resp. each player $i \neq \LL$) and every strategy $\sigma'_i$, called \emph{deviation of $\sigma_i$}, we have $\mu_i\left(\< \sigma'_i, \bsigma_{-i} \>\right) \leq \mu_i\left(\< \bsigma \>\right)$.
		When it is not the case, we call \emph{profitable deviations} the deviations that do not satisfy that inequality.
%	\end{definition}
	
As an example, in the game given in Figure~\ref{fig_ex_game}, two types of NEs can be found: those that eventually loop on the vertex $b$, and give both players the payoff $1$; and those that loop on $a$, but in which player $\Circle$ has no profitable deviation, because if she goes to the vertex $b$, player $\Box$ threatens to go to the vertex $c$ (and player $\Box$ has no profitable deviation, because he does never make any choice).
However, player $\Box$'s threat is not \emph{credible}, since going to the vertex $c$ would give him the payoff $0$, while he could stay on the vertex $b$ and get the payoff $1$.
%Planning such an action is compatible with the definition of a Nash equilibrium only because player $\Box$ does never have to effectively do it; it is what we call a \emph{non-credible threat}.
A stronger rationality concept, that avoids that phenomenon, is the one of \emph{subgame-perfection}.

%	\begin{definition}[Subgame, substrategy]
		Let $hv$ be a history in the game $\Game$. The \emph{subgame} of $\Game$ after $hv$ is the game $\Game_{\|hv} = \left(\Pi, V, (V_i)_i, E, \mu_{\|hv}\right)_{\|v}$, where $\mu_{\|hv}$ maps each play $\pi$ to its payoff in $\Game$, assuming that the history $hv$ has already been played, i.e. to the payoff $\mu_{\|hv}(\pi) = \mu(h\pi)$.
%Note that the initialized game $\Game_{\|v_0}$ is also the subgame of $\Game$ after the one-vertex history $v_0$.
		If $\sigma_i$ is a strategy in $\Game_{\|v_0}$, its \emph{substrategy} after $hv$ is the strategy $\sigma_{i\|hv}: h' \mapsto \sigma_i(hh')$ in the game $\Game_{\|hv}$.
%	\end{definition}

%	\begin{definition}[($\LL$-fixed) $\epsilon$-subgame-perfect equilibrium]
		The strategy profile $\bsigma$ is a \emph{($\LL$-fixed) subgame-perfect equilibrium} --- or \emph{($\LL$-fixed) SPE} for short --- in $\Game_{\|v_0}$ if and only if for every history $h$ in $\Game_{\|v_0}$ (resp. every history $h$ compatible with $\sigma_{\LL}$), the strategy profile $\bsigma_{\|h}$ is a ($\LL$-fixed) Nash equilibrium in the subgame $\Game_{\|h}$.
%	\end{definition}

	NEs and SPEs entail two notions of rationality for the environment's responses to a strategy $\sigma_\LL$ of Leader.
		%A \emph{Nash response} to $\sigma_\LL$ is a strategy profile $\bsigma_{-\LL}$ such that $\bsigma = (\sigma_\LL, \bsigma_{-\LL})$ is an $\LL$-fixed NE.
		%A \emph{subgame-perfect response} to $\sigma_\LL$ is a strategy profile $\bsigma_{-\LL}$ such that $\bsigma$ is an $\LL$-fixed SPE.
A strategy profile $\bsigma_{-\LL}$ is a \emph{Nash response} to $\sigma_\LL$ if the strategy profile $\bsigma = (\sigma_\LL, \bsigma_{-\LL})$ is an $\LL$-fixed NE, and a \emph{subgame-perfect response} if it is an $\LL$-fixed SPE.
The set of Nash (resp. subgame-perfect) responses to $\sigma_\LL$ is written $\NR(\sigma_\LL)$ (resp. $\SPR(\sigma_\LL)$).

Finally, let $\rho \in \{\Nash, \subgameperfect\}$.
We call \emph{$\rho$-equilibria} the NEs if $\rho = \Nash$, and the SPEs if $\rho = \subgameperfect$.
We will similarly talk about \emph{$\LL$-fixed $\rho$-equilibria}, and \emph{$\rho$-responses}.
We write $\rho\R(\sigma_\LL)$ for the set of $\rho$-responses to a strategy $\sigma_\LL$.
%	\end{definition}

\subsection{Mealy machines} \label{sec_def_Mealy_machine}
	
%	\begin{definition}[Mealy machine] \label{def_Mealy_machine}
		A \emph{Mealy machine for player $i$} on a game $\Game$ is a tuple $\Mach = (Q, q_0, \Delta)$, where $Q$ is a finite set of \emph{states}, where $q_0 \in Q$ is the \emph{initial state}, and where $\Delta \subseteq (Q \times V_{-i} \times Q) \cup (Q \times V_i \times Q \times V)$ is a finite set of \emph{transitions}, such that for every $(p, u, q, v) \in \Delta$, we have $uv \in E$, and such that for every $p \in Q$ and $u \in V$, there exists a transition $(p, u, q)$ or $(p, u, q, v) \in \Delta$.
        Specialist readers will have noted that this definition is more general than the classical one, in which it is often assumed that for each $p$ and $u$, there exists exactly one such transition: hereafter, such a machine will be called \emph{deterministic}.
        Results about deterministic Mealy machines can be applied to \emph{programs}, which are supposed to run deterministically; we chose to take a more general definition to capture also \emph{protocols}, which may be given to an agent who would still have some room for manoeuvre in how they apply it.
		
		A strategy $\sigma_i$ in $\Game_{\|v_0}$ is \emph{compatible} with $\Mach$ if there exists a mapping $h \mapsto q_h$ that maps every history $h$ in $\Game_{\|v_0}$ to a state $q_h \in Q$, such that for every $hv \in \Hist_{-i} \Game_{\|v_0}$, we have $(q_h, v, q_{hv}) \in \Delta$, and for every $hv \in \Hist_i \Game_{\|v_0}$, we have $(q_h, v, q_{hv}, \sigma_i(hv)) \in \Delta$.
		The set of strategies in $\Game_{\|v_0}$ compatible with $\Mach$ is written $\Comp_{\|v_0}(\Mach)$.
		If $\Mach$ is deterministic, then there is exactly one strategy compatible with $\Mach$; we call it a \emph{finite-memory} strategy.

We define analogously Mealy machines that capture a set of strategy profiles for several players, including for the whole set $\Pi$.
Note also that every Mealy machine $\Mach$ can be encoded with a finite number of bits: we write $\lv \Mach \rv$ for that number.

Figure~\ref{fig_ex_1player_machine} depicts a one-player Mealy machine on the game of Figure~\ref{fig_ex_game}.
        Each arrow from a state $p$ to a state $q$ labeled $u|v$ denotes the existence of a transition $(p, u, q, v)$ (from the state $p$, the machine reads the vertex $u$, switches to the state $q$ and outputs the vertex $v$).
        Each arrow from a state $p$ to a state $q$ labeled $u$ denotes the existence of a transition $(p, u, q)$ (from $p$, the machine reads $u$, switches to $q$ and outputs nothing).
        It is a machine for player $\Box$, that is not deterministic: from the state $q_0$, reading the vertex $b$, the machine stays in $q_0$ but it can output either $b$ or $c$.
        The strategies that are compatible with it can be described as follows: when player $\Box$ has to play, if the vertex $a$ was seen an odd number of times, then he stays in $b$; in the opposite case, he can either stay in $b$ or eventually go to $c$.

Figure~\ref{fig_ex_multiplayer_machine} depicts a deterministic multiplayer Mealy machine on the same game.
The strategy profile that is compatible with it loops on the vertex $a$, and after a possible deviation of player $\Circle$ that would lead to the vertex $b$, loops once on $b$, before going to $c$.

		\begin{figure}
			\centering
			\begin{tikzpicture}[->,>=latex,shorten >=1pt, initial text={}, scale=0.8, every node/.style={scale=0.7}]
                   %\useasboundingbox (0, 0.5) rectangle (3, -0.5);
				\node[state, rectangle, rounded corners, initial above] (q0) at (0, 0) {$q_0$};
				\node[state, rectangle, rounded corners] (q1) at (3, 0) {$q_1$};
				
				\path (q0) edge[bend left=20] node[above] {$a$} (q1);
                \path (q1) edge[bend left=20] node[below] {$a$} (q0);
				\path (q0) edge[loop left] node[left] {$\begin{matrix} b|b\\ b|c\\ c \end{matrix}$} (q0);
                \path (q1) edge[loop right] node[right] {$\begin{matrix} b|b\\ c \end{matrix}$} (q1);
			\end{tikzpicture}
			\caption{A non-deterministic one-player Mealy machine} \label{fig_ex_1player_machine}
		\end{figure}
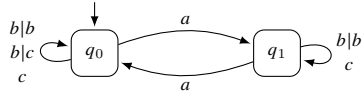
	\begin{figure}
			\centering
			\begin{tikzpicture}[->,>=latex,shorten >=1pt, initial text={}, scale=0.8, every node/.style={scale=0.7}]
				\node[state, rectangle, rounded corners, initial above] (q0) at (0, 0) {$q_0$};
				\node[state, rectangle, rounded corners] (q1) at (3, 0) {$q_1$};
				
				\path (q0) edge node[above] {$b|b$} (q1);
				\path (q0) edge[loop left] node[left] {$\begin{matrix} a|a\\ c|c \end{matrix}$} (q0);
                \path (q1) edge[loop right] node[right] {$\begin{matrix} a|a\\ b|c\\ c|c \end{matrix}$} (q1);
			\end{tikzpicture}
			\caption{A deterministic multiplayer Mealy machine} \label{fig_ex_multiplayer_machine}
		\end{figure}
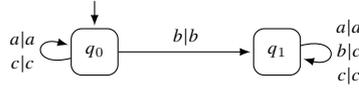

\subsection{Decision problems}

Let us now define checking and rational verification.
We define it for each game class $\Cl$, for each $\rho \in \{\Nash, \subgameperfect\}$, and in both the deterministic and the non-deterministic setting.

\begin{pb}[(Deterministic) $\rho$-checking problem in the class $\Cl$]
	Given a game $\Game_{\|v_0} \in \Cl$ and a (deterministic) Mealy machine $\Mach$ on $\Game$, is every $\bsigma \in \Comp_{\|v_0}(\Mach)$ a $\rho$-equilibrium?
\end{pb}
	
\begin{pb}[(Deterministic) $\rho$-rational verification problem in the class $\Cl$]
	Given a game $\Game_{\|v_0} \in \Cl$, a threshold $t \in \QQ$ and a (deterministic) Mealy machine $\Mach$ on $\Game$, is every $\LL$-fixed $\rho$-equilibrium $\bsigma$ with $\sigma_\LL \in \Comp_{\|v_0}(\Mach)$ such that $\mu_\LL(\< \bsigma \>) > t$?
\end{pb}

	    \subsection{A characterization of Nash equilibria}

Before moving on to our results, let us end this section with a lemma that will be used often along this paper, stating that an NE outcome is a play in which no player has a payoff smaller than what they can enforce when they deviate, and when all the other players make everything they can to punish them.
That is a classical result that can be found under various wordings (see for instance Theorem~1 in~\cite{Concur}); here, we present it under the form that will be the most useful for us hereafter.

\begin{lemma} \label{lm_ne}
    In every game $\Game$, the NE outcomes are exactly the plays $\pi$ such that for each player $i$, for every vertex $\pi_k \in V_i$, there exists a strategy profile $\btau^k_{-i}$ from $\pi_k$ such that $\sup_{\tau_i} \mu_i(\pi_{<k} \< \btau^k_{-i}, \tau_i \>) \leq \mu_i(\pi)$.
\end{lemma}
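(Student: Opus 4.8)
The plan is to prove the two inclusions separately, both times using the standard idea that a Nash equilibrium is a profile in which each player is deterred from deviating by a punishment inflicted by the coalition of all the other players.

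For the left-to-right inclusion, I would fix an NE $\bsigma$ with outcome $\pi = \< \bsigma \>$, a player $i$, and an index $k$ with $\pi_k \in V_i$, and take as punishing profile the substrategy profile $\btau^k_{-i} := \bsigma_{-i\|\pi_{\leq k}}$, that is, the family $(\sigma_{j\|\pi_{\leq k}})_{j \neq i}$ played from $\pi_k$. Given any strategy $\tau_i$ from $\pi_k$, I would build a deviation $\sigma'_i$ of $\sigma_i$ in $\Game_{\|v_0}$ that copies $\sigma_i$ along the prefix $\pi_{\leq k}$ and then switches to $\tau_i$, read on the suffix from $\pi_k$. Since $\pi$ is the outcome of $\bsigma$, the players of $-i$ keep the play on $\pi$ up to $\pi_k$, so $\< \sigma'_i, \bsigma_{-i} \>$ coincides with $\pi$ until $\pi_k$ and from $\pi_k$ onward equals the outcome of $(\btau^k_{-i}, \tau_i)$; hence $\< \sigma'_i, \bsigma_{-i} \> = \pi_{<k} \< \btau^k_{-i}, \tau_i \>$. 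The NE inequality $\mu_i(\< \sigma'_i, \bsigma_{-i} \>) \leq \mu_i(\< \bsigma \>) = \mu_i(\pi)$ then holds for every $\tau_i$, and taking the supremum over $\tau_i$ yields the desired bound.

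For the converse, I would fix a play $\pi$ satisfying the condition, with witnesses $\btau^k_{-i}$, and construct a profile $\bsigma$ enforcing $\pi$ by punishments. I would set $\bsigma$ to follow $\pi$ on every history that is a prefix of $\pi$; on every other history $h$, I would locate the unique first deviation point, i.e. the index $k$ with $h_{\leq k} = \pi_{\leq k}$ but $h_{k+1} \neq \pi_{k+1}$, let $i$ be the player owning $\pi_k$, and have each player $j \neq i$ play according to the $j$-component of $\btau^k_{-i}$ read on the suffix $h_{\geq k}$ (the deviator's behaviour on such histories being irrelevant, hence fixed arbitrarily). This is a well-defined profile whose outcome is $\pi$. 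To verify the NE property, I would fix $i$ and a deviation $\sigma'_i$: if $\< \sigma'_i, \bsigma_{-i} \> = \pi$ there is nothing to prove; otherwise, since all players of $-i$ stay on $\pi$ until someone leaves it, the first deviation is necessarily produced by player $i$ at some index $k$ with $\pi_k \in V_i$, and from that point the players of $-i$ play $\btau^k_{-i}$ while player $i$ follows some induced strategy $\tau_i$ from $\pi_k$. Thus $\< \sigma'_i, \bsigma_{-i} \> = \pi_{<k} \< \btau^k_{-i}, \tau_i \>$, whose $i$-payoff is at most $\sup_{\tau_i} \mu_i(\pi_{<k} \< \btau^k_{-i}, \tau_i \>) \leq \mu_i(\pi)$, as required.

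I expect the main obstacle to be the bookkeeping in the converse: writing $\bsigma$ as a single profile defined on all histories, checking that the first-deviation index and the deviator are well defined, and verifying that the coalition's reaction correctly reindexes the witness $\btau^k_{-i}$ (which lives in the subgame from $\pi_k$) onto the suffix $h_{\geq k}$, so that the induced outcome is exactly $\pi_{<k} \< \btau^k_{-i}, \tau_i \>$ and the hypothesis applies verbatim. The forward direction should be routine once $\sigma'_i$ is aligned with $\pi$ on the prefix $\pi_{\leq k}$.
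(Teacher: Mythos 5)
Your proposal is correct and takes essentially the same route as the paper: the forward direction witnesses the condition with the substrategy profiles $\bsigma_{-i\|\pi_{\leq k}}$ of the given NE, and the converse builds the equilibrium that follows $\pi$ and, upon the first deviation at $\pi_k$ by its owner $i$, punishes with $\btau^k_{-i}$. The paper's own proof is a two-sentence sketch of exactly this argument; yours simply spells out the bookkeeping (alignment of $\sigma'_i$ with the prefix, uniqueness of the first deviation point, and the reindexing of $\btau^k_{-i}$ onto suffixes).
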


\begin{proof}
    Let $\pi$ be a Nash equilibrium outcome, and let $\bsigma$ be an NE such that $\< \bsigma \> = \pi$: then, for each $i$ and every $\pi_k$, the strategy profile $\bsigma_{\|\pi_{\leq k}}$ satisfies the hypothesis given above.
    Conversely, if such strategy profiles exist, then we can define $\bsigma$ as the strategy profile that follows the play $\pi$ and that, after a one-shot deviation $\pi_{\leq k} v$ with $v \neq \pi_{k+1}$, follows the strategy profile $\tau^k_{-i}$, and any strategy for player $i$.
\end{proof}
	
In the next section, we present two general constructions that we will use in the rest of the paper.

	\section{General constructions} \label{sec_tools}

Although very intuitive, the checking and rational verification problems are quite hard to study as they are.
Indeed, their instances include two graph structures: a game and a Mealy machine.
We therefore need preliminary results, that will reduce those problems to simpler ones.

	   \subsection{To solve checking problems: the deviation games}

Deciding the checking problems amounts to searching for a profitable deviation, either to the outcome (Nash-checking problem) or in some subgame (subgame-perfect-checking problem).
That can be achieved through a new game structure, called \emph{deviation game}, in which a play simulates two parallel plays in the original game: one in which the players have to follow the outputs of the Mealy machine, and one in which one of them is allowed to deviate from it.
In that game, two fresh players will measure the payoffs of the deviating player in the original game: one, called \emph{Adam}, is measuring the player's payoff in the non-deviating play, and the other one, called \emph{Eve}, is measuring their payoff in the deviating one.

	\begin{definition}[Deviation games] \label{def_deviation_games}
		Let $\Game_{\|v_0}$ be a game, and let $\Mach$ be a multiplayer Mealy machine in $\Game_{\|v_0}$.
		The associated \emph{Nash deviation game} is the game
		$\NDev(\Game, \Mach)_{\|(q_0, v_0)} = (\{\AA, \EE\}, V', (V'_\AA, V'_\EE), E', \mu')_{\|(q_0, v_0)}$,
		where:
		\begin{itemize}
		    \item the players are \emph{Adam}, written $\AA$, and \emph{Eve}, written $\EE$.
		
			\item The vertex space is $V' = \{(q_0, v_0)\} \cup (Q \times V \times \Pi) \cup (Q \times V \times \Pi \times Q \times V)$, and Eve controls every vertex.

        \item The set $E'$ contains:
        \begin{itemize}
            \item the edge $(q_0, v_0)(q, v, i)$ for each player $i \in \Pi$ and each transition $(q_0, v_0, q, v) \in \Delta$ (the player $i$ is chosen as the deviating player);

            \item the edge $(q_0, v_0)(q, v, i, q, v')$ for each player $i \in \Pi$, each transition $(q_0, v_0, q, v) \in \Delta$, and each edge $v_0v' \in E$ with $v' \neq v$ (the player $i$ is chosen as the deviating player, and starts to deviate immediately);

            \item the edge $(p, u, i)(q, v, i)$ for each $(p, u, q, v) \in \Delta$ (player $i$ has not started to deviate yet);

            \item the edge $(p, u, i)(q, v, i, q, v')$ for each transition $(p, u, q, v) \in \Delta$ with $u \in V_i$ and each edge $uv' \in E$ with $v' \neq v$ (player $i$ starts to deviate);

            \item the edge $(p, u, i, p', u')(q, v, i, q', v')$ for every two transitions $(p, u, q, v), (p', u', q', w) \in \Delta$ with either $w = v'$ or $u' \in V_i$ (player $i$ is deviating).
        \end{itemize}
			
			\item Let $\pi$ be a play in this game, of the form:
   \begin{align*}\pi &= (q_0, v_0) (q_1, v_1, i) \dots (q_{k-1}, v_{k-1}, i) \\ &(q_k, v_k, i, q'_k, v'_k) (q_{k+1}, v_{k+1}, i, q'_{k+1}, v'_{k+1}) \dots
   \end{align*}
			Then, we define $\mu'_\AA(\pi) = \mu(v_0 \dots v_k v_{k+1} \dots)$, and $\mu'_\EE(\pi) = \mu(v_0 \dots v_k v'_{k+1} \dots)$.
            When $\pi$ has the form $\pi = (q_0, v_0) (q_1, v_1, i) (q_2, v_2, i) \dots$, i.e. when player $i$ does never deviate, we define $\mu'_\AA(\pi) = \mu'_\EE(\pi) = \mu(v_0 v_1 v_2 \dots)$.
		\end{itemize}
		
		The \emph{subgame-perfect deviation game}
		$\SPDev(\Game, \Mach)_{\|(q_0, v_0)}$ is defined similarly with, additionnally:
        \begin{itemize}
            \item the edge $(p, u, i)(q, w, i)$ for each player $i \in \Pi$, each transition $(p, u, q, v) \in \Delta$ and each edge $uw \in E$ (player $i$ has not started to deviate, and Eve is looking for a subgame in which a profitable deviation exists);

            \item the edge $(q_0, v_0)(q, w, i)$ for each $(q_0, v_0, q, v) \in \Delta$ and every $v_0w \in E$ (player $i$ is chosen as the deviating player, and Eve is looking for a subgame).
        \end{itemize}
	\end{definition}

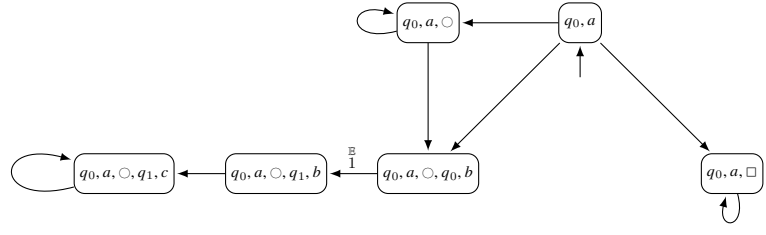
\begin{figure}
			\centering
			\begin{tikzpicture}[->,>=latex,shorten >=1pt, initial text={}, scale=1, every node/.style={scale=0.6}]
                \newcommand{\deltay}{1}
				\node[state, rectangle, rounded corners, initial below] (q0a) at (0, 0*\deltay) {$q_0, a$};
				\node[state, rectangle, rounded corners] (q0aC) at (-2, 0*\deltay) {$q_0, a, \Circle$};
				\node[state, rectangle, rounded corners] (q0aCq0b) at (-2, -2*\deltay) {$q_0, a, \Circle, q_0, b$};
				\node[state, rectangle, rounded corners] (q0aCq1b) at (-4, -2*\deltay) {$q_0, a, \Circle, q_1, b$};
				\node[state, rectangle, rounded corners] (q0aCq1c) at (-6, -2*\deltay) {$q_0, a, \Circle, q_1, c$};
				\node[state, rectangle, rounded corners] (q0aB) at (2, -2*\deltay) {$q_0, a, \Box$};
				
				\path (q0a) edge (q0aC);
                \path (q0aC) edge[loop left] (q0aC);
				\path (q0a) edge (q0aCq0b);
				\path (q0aC) edge (q0aCq0b);
				\path (q0aCq0b) edge node[above] {$\stack{\EE}{1}$} (q0aCq1b);
				\path (q0aCq1b) edge (q0aCq1c);
                \path (q0aCq1c) edge[loop left] (q0aCq1c);
				\path (q0a) edge (q0aB);
                \path (q0aB) edge[loop below] (q0aB);
			\end{tikzpicture}
			\caption{A Nash deviation game}
			\label{fig_ndev}
		\end{figure}
		
		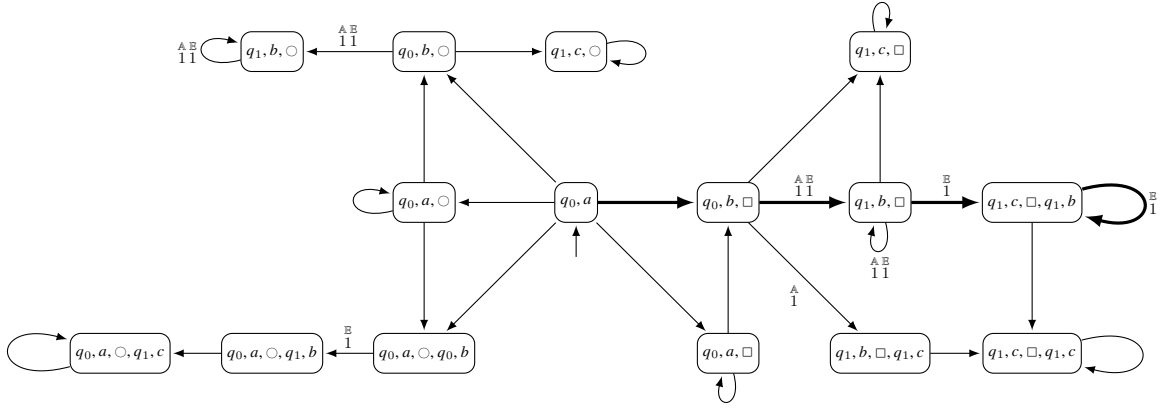
\begin{figure*}
			\centering
			\begin{tikzpicture}[->,>=latex,shorten >=1pt, initial text={}, scale=1, every node/.style={scale=0.6}]
                \newcommand{\deltay}{1}
				\node[state, rectangle, rounded corners, initial below] (q0a) at (0, 0*\deltay) {$q_0, a$};
				\node[state, rectangle, rounded corners] (q0aC) at (-2, 0*\deltay) {$q_0, a, \Circle$};
				\node[state, rectangle, rounded corners] (q0aCq0b) at (-2, -2*\deltay) {$q_0, a, \Circle, q_0, b$};
				\node[state, rectangle, rounded corners] (q0aCq1b) at (-4, -2*\deltay) {$q_0, a, \Circle, q_1, b$};
				\node[state, rectangle, rounded corners] (q0aCq1c) at (-6, -2*\deltay) {$q_0, a, \Circle, q_1, c$};
				\node[state, rectangle, rounded corners] (q0aB) at (2, -2*\deltay) {$q_0, a, \Box$};
    
				\node[state, rectangle, rounded corners] (q0bC) at (-2, 2*\deltay) {$q_0, b, \Circle$};
				\node[state, rectangle, rounded corners] (q1bC) at (-4, 2*\deltay) {$q_1, b, \Circle$};
				\node[state, rectangle, rounded corners] (q1cC) at (0, 2*\deltay) {$q_1, c, \Circle$};
				\node[state, rectangle, rounded corners] (q0bB) at (2, 0*\deltay) {$q_0, b, \Box$};
				\node[state, rectangle, rounded corners] (q1cB) at (4, 2*\deltay) {$q_1, c, \Box$};
				\node[state, rectangle, rounded corners] (q1bB) at (4, 0*\deltay) {$q_1, b, \Box$};
				\node[state, rectangle, rounded corners] (q1bBq1c) at (4, -2*\deltay) {$q_1, b, \Box, q_1, c$};
				\node[state, rectangle, rounded corners] (q1cBq1c) at (6, -2*\deltay) {$q_1, c, \Box, q_1, c$};
				\node[state, rectangle, rounded corners] (q1cBq1b) at (6, 0*\deltay) {$q_1, c, \Box, q_1, b$};

				\path (q0a) edge (q0aC);
                \path (q0aC) edge[loop left] (q0aC);
				\path (q0a) edge (q0aCq0b);
				\path (q0aC) edge (q0aCq0b);
				\path (q0aCq0b) edge node[above] {$\stack{\EE}{1}$} (q0aCq1b);
				\path (q0aCq1b) edge (q0aCq1c);
                \path (q0aCq1c) edge[loop left] (q0aCq1c);
				\path (q0a) edge (q0aB);
                \path (q0aB) edge[loop below] (q0aB);
                
				\path (q0a) edge (q0bC);
                \path (q0aC) edge (q0bC);
				\path (q0bC) edge node[above] {$\stack{\AA}{1} \,  \stack{\EE}{1}$} (q1bC);
    			\path (q1bC) edge[loop left] node[left] {$\stack{\AA}{1}\, \stack{\EE}{1}$} (q1bC);
        		\path (q0bC) edge (q1cC);
            	\path (q1cC) edge[loop right] (q1cC);
				\path[very thick] (q0a) edge (q0bB);
    \path (q0aB) edge (q0bB);
				\path (q0bB) edge (q1cB);
    \path (q1bB) edge (q1cB);
    \path[very thick] (q0bB) edge node[above] {$\stack{\AA}{1}\, \stack{\EE}{1}$} (q1bB);
    \path (q1cB) edge[loop above] (q1cB);
    \path (q0bB) edge node[below left] {$\stack{\AA}{1}$} (q1bBq1c);
    \path (q1bB) edge[loop below] node[below] {$\stack{\AA}{1}\, \stack{\EE}{1}$} (q1bB);
    \path (q1bBq1c) edge (q1cBq1c);
    \path (q1cBq1c) edge[loop right] (q1cBq1c);
    \path[very thick] (q1bB) edge node[above] {$\stack{\EE}{1}$} (q1cBq1b);
    \path[very thick] (q1cBq1b) edge[loop right] node[right] {$\stack{\EE}{1}$} (q1cBq1b);
    \path (q1cBq1b) edge (q1cBq1c);
			\end{tikzpicture}
			\caption{A subgame-perfect deviation game}
			\label{fig_spdev}
		\end{figure*}

As an example, if $\Game_{\|v_0}$ is the game of Figure~\ref{fig_ex_game} and if $\Mach$ is the machine given in Figure~\ref{fig_ex_multiplayer_machine}, then Figure~\ref{fig_ndev} represents the game $\NDev(\Game_{\|v_0}, \Mach)$, and Figure~\ref{fig_spdev} represents the game $\SPDev(\Game_{\|v_0}, \Mach)$.
For the sake of readability, the vertices that are not accessible from $(q_0, v_0)$ have been omitted.
These games are also mean-payoff games, and the rewards are given on the figures --- the rewards equal to $0$ have been omitted.
On Figure~\ref{fig_spdev}, the thick arrows highlight a play that gives a better payoff to Eve than to Adam; as we will see now, that means that the only strategy profile compatible with the machine $\Mach$ is not an SPE.
No such play can be found on the game of Figure~\ref{fig_ndev}, because that strategy profile is an NE.
In other words, the game $\Game_{\|v_0}$ and the machine $\Mach$ form a positive instance of the (deterministic) Nash checking problem, but not of the (deterministic) subgame-perfect checking problem.

	\begin{theorem}[App.~\ref{pf_deviation_game}] \label{thm_deviation_game}
		There exists a strategy profile $\bsigma \in \Comp_{\|v_0}(\Mach)$ that is \emph{not} an NE (resp. SPE) if and only if there exists a play $\pi$ in the game $\NDev(\Game, \Mach)_{\|(q_0, v_0)}$ (resp. $\SPDev(\Game, \Mach)_{\|(q_0, v_0)}$) such that $\mu_\AA(\pi) < \mu_\EE(\pi)$.
	\end{theorem}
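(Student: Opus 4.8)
The plan is to prove both biconditionals (for NE via $\NDev$ and for SPE via $\SPDev$) through a tight correspondence between plays of the deviation game and pairs made of a play that follows $\Mach$ and a play in which a single player deviates from it. First I would record the intended reading of the vertices: a play $\pi$ of $\NDev(\Game, \Mach)$ decomposes, once it enters the five-component vertices, into an \emph{Adam play} $\alpha$ and an \emph{Eve play} $\beta$ over $V$; the three-component vertices force $\alpha$ and $\beta$ to coincide and to carry a run of $\Mach$, while the five-component vertices force $\alpha$ to keep following $\Mach$ and let $\beta$ branch off precisely at the vertices of the chosen deviating player $i$. By construction $\mu_\AA(\pi) = \mu_i(\alpha)$ and $\mu_\EE(\pi) = \mu_i(\beta)$, so $\mu_\AA(\pi) < \mu_\EE(\pi)$ says exactly that $\beta$ is a profitable deviation of player $i$ against the profile that produces $\alpha$. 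For $\SPDev$ the only change is that the two extra edge families let the common prefix of $\alpha$ and $\beta$ be an \emph{arbitrary} history $h$ instead of one following $\Mach$, so the same play witnesses a profitable deviation in the subgame $\Game_{\|h}$.

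For the direction from a non-equilibrium to such a play, suppose $\bsigma \in \Comp_{\|v_0}(\Mach)$ is not an NE and fix a state labelling $h \mapsto q_h$ witnessing $\bsigma \in \Comp_{\|v_0}(\Mach)$. By the definition of NE some player $i$ has a profitable deviation $\sigma'_i$; set $\alpha = \<\bsigma\>$ and $\beta = \<\sigma'_i, \bsigma_{-i}\>$, which coincide until the first vertex of $V_i$ where $\sigma'_i$ disagrees with $\sigma_i$ and satisfy $\mu_i(\alpha) < \mu_i(\beta)$. I then read $\alpha$ and $\beta$ into a play of $\NDev$: the shared prefix uses the machine-following edges, the divergence a deviation-starting edge, and the tail the deviating edges, attaching to each vertex of $\alpha$ the state $q_{\alpha_{\leq j}}$ and to each vertex of $\beta$ the state $q_{\beta_{\leq j}}$, which is legitimate because the labelling of $\bsigma$ also sees the deviating histories. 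This yields $\pi$ with $\mu_\AA(\pi) = \mu_i(\alpha) < \mu_i(\beta) = \mu_\EE(\pi)$. For SPE, non-subgame-perfection gives a history after which some player $i$ has a profitable deviation; extending it up to the vertex where the deviation first occurs, I encode this whole (possibly off-$\Mach$) prefix with the extra edges of $\SPDev$ --- legal since the labelling provides, at each step, a transition of $\Delta$ out of $q_{h_{\leq j}}$ --- and then finish as above.

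For the converse, given a play $\pi$ of $\NDev$ (resp. $\SPDev$) with $\mu_\AA(\pi) < \mu_\EE(\pi)$, I extract the deviating player $i$, the Adam play $\alpha$, the Eve play $\beta$, and the state sequences carried along each, and I rebuild a profile $\bsigma \in \Comp_{\|v_0}(\Mach)$ that is not an NE (resp. SPE). I define $\bsigma$ and its labelling to match the states of $\pi$: on prefixes of $\alpha$ the profile plays so as to continue $\alpha$, on prefixes of $\beta$ the players other than $i$ play so as to continue $\beta$, and on every remaining history the labelling is extended and the moves chosen by any fixed transition of $\Delta$, which exists by the input-enabledness of $\Mach$. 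Because $\alpha$ and $\beta$ share only the prefix up to their divergence and are distinct sequences afterwards, these prescriptions never clash, so $\bsigma$ is a well-defined profile compatible with $\Mach$. In the NE case $\<\bsigma\> = \alpha$ and the strategy reading off $\beta$ is a profitable deviation of player $i$, since $\mu_i(\beta) > \mu_i(\alpha)$; in the SPE case the common prefix encodes a history $h$, the continuation of $\bsigma$ after $h$ equals $\alpha$ from $h$ on, and the same deviation witnesses that $\bsigma_{\|h}$ is not an NE, hence that $\bsigma$ is not an SPE.

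The main obstacle is exactly this reconstruction, and most acutely in the subgame-perfect case. The delicate points are that $\Mach$ may be non-deterministic, so the partial labelling read from $\pi$ must be shown to extend to a labelling defined on \emph{all} histories without violating any transition of $\Delta$; and that the common prefix $h$ may be an arbitrary path revisiting vertices and machine states. Here I would lean on the fact that strategies and labellings are functions of the \emph{entire} history: two visits of one vertex, along $\alpha$ and along $\beta$ or inside $h$, sit at distinct histories and may freely receive distinct states and distinct moves, which is what rules out conflicts. The remaining work is routine: checking the edge-by-edge correspondence asserted in the first paragraph --- that the edge families of $\NDev$ (and the two extra families of $\SPDev$) realise exactly a run of $\Mach$ on the shared prefix, a single branching at a vertex of player $i$, and the free moves of $i$ afterwards --- together with the payoff bookkeeping giving $\mu_\AA(\pi) = \mu_i(\alpha)$ and $\mu_\EE(\pi) = \mu_i(\beta)$.
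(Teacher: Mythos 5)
Your proposal is correct and takes essentially the same approach as the paper: a direct two-way translation between plays of the deviation game and pairs consisting of a machine-compatible profile and a single-player deviation, where the NE case restricts the common prefix to be compatible with the profile and the SPE case uses the extra edge families to allow an arbitrary prefix. The paper's own proof is terser --- in the converse direction it merely asserts the existence of the reconstructed profile $\bsigma \in \Comp_{\|v_0}(\Mach)$, whereas you spell out how the state labelling read off the play is extended to all histories via input-enabledness --- but the underlying argument is the same.
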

	
In other words, the checking problems amount to solve the following problem in the deviation games.

\begin{pb}[Privilege problem in the class $\Cl$]
    Given a game $\Game_{\|v_0} \in \Cl$ with two players $\AA$ and $\EE$, called \emph{Adam} and \emph{Eve}, is every play $\pi$ in $\Game_{\|v_0}$ such that $\mu_\AA(\pi) \geq \mu_\EE(\pi)$?
\end{pb}

Moreover, the size of the deviations games are bounded by a polynomial function of $\lv \Game \rv$ and $\lv \Mach \rv$; and, when the game $\Game$ belongs to a class $\Cl$ among the five classes defined in Section~\ref{sec_notable_classes}, then all deviation games that can be constructed from it also belong to the class $\Cl$.
Hence the following.

\begin{corollary} \label{cor_privilege_problem}
    Let $\Cl$ be a class of games, among the classes of parity games, mean-payoff games, quantitative reachability games, energy games, and discounted-sum games.
    Then, in the class $\Cl$, the Nash-checking and the subgame-perfect checking problems, deterministic or not, reduce to the privilege problem.
\end{corollary}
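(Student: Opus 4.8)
The plan is to let the deviation game itself be the reduction. Given an instance $(\Game_{\|v_0}, \Mach)$ of the $\rho$-checking problem in $\Cl$, I map it to the instance $\NDev(\Game, \Mach)_{\|(q_0, v_0)}$ of the privilege problem when $\rho = \Nash$, and to $\SPDev(\Game, \Mach)_{\|(q_0, v_0)}$ when $\rho = \subgameperfect$, with $\AA$ and $\EE$ in the roles of Adam and Eve. Correctness is then immediate from Theorem~\ref{thm_deviation_game} by complementation: the checking instance is positive, i.e. \emph{every} $\bsigma \in \Comp_{\|v_0}(\Mach)$ is a $\rho$-equilibrium, exactly when \emph{no} such $\bsigma$ fails to be one, which by Theorem~\ref{thm_deviation_game} holds exactly when there is \emph{no} play $\pi$ with $\mu_\AA(\pi) < \mu_\EE(\pi)$ in the corresponding deviation game, that is, exactly when every play $\pi$ satisfies $\mu_\AA(\pi) \geq \mu_\EE(\pi)$ — the positive condition of the privilege problem. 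So the two instances always share the same answer, and the map is a valid many-one reduction.

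It remains to argue that this is a \emph{polynomial-time} reduction that stays inside $\Cl$. The size bound stated just above gives that $\NDev(\Game, \Mach)$ and $\SPDev(\Game, \Mach)$ have size polynomial in $\lv \Game \rv + \lv \Mach \rv$ and are computable in polynomial time, so the only substantive point is that the deviation games, together with Adam's and Eve's payoffs, again belong to $\Cl$. I would prove this class-closure by pulling the class-specific data of $\Game$ back along the two projections $\pi \mapsto v_0 \dots v_k v_{k+1} \dots$ and $\pi \mapsto v_0 \dots v_k v'_{k+1} \dots$ that define $\mu_\AA$ and $\mu_\EE$. The structural facts that make this work are that each edge of a deviation game advances both projections by exactly one original edge, that the two projections agree on the common prefix and diverge only after the chosen deviation point, and that the identity $i$ of the deviating player is fixed along any play (chosen on the first edge and stored in every later vertex).

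Concretely, for the reward-based classes (energy, discounted-sum, mean-payoff) I assign to each deviation-game edge two $r_i$-rewards, one read off from its Adam-image and one from its Eve-image, keeping the same discount factor $\lambda$ in the discounted-sum case; for parity I colour each vertex, separately for $\AA$ and $\EE$, by $\kappa_i$ of its image under the respective projection; and for quantitative reachability I let $\AA$'s (resp. $\EE$'s) target set collect the vertices whose Adam- (resp. Eve-) image lies in $T_i$. In each case the induced payoff of a play equals $\mu_i$ evaluated on the matching projection, which is exactly $\mu_\AA$ (resp. $\mu_\EE$), so the deviation game lies in $\Cl$. The reduction and its correctness are thus a restatement of Theorem~\ref{thm_deviation_game}, and the only genuine work is the bookkeeping of this last paragraph; the main obstacle there is checking that the shared prefix $v_0 \dots v_k$ is handled consistently for both players and that the position-sensitive discounted-sum payoff is preserved precisely because the projection is step-for-step synchronous with a single discount factor.
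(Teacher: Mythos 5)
Your proposal is correct and takes essentially the same route as the paper: the paper, too, obtains the corollary immediately from Theorem~\ref{thm_deviation_game} by complementation, combined with the observations (stated without proof) that the deviation games have size polynomial in $\lv \Game \rv$ and $\lv \Mach \rv$ and remain inside the class $\Cl$. Your final paragraph just makes explicit the class-closure bookkeeping that the paper leaves implicit; the only wrinkle there is the quantitative-reachability case at the initial vertex $(q_0, v_0)$, whose membership in a target set cannot depend on the not-yet-chosen deviating player $i$ --- a corner case (arising only when $v_0 \in T_i$ for some but not all $i$, where that player can simply be dropped as a possible deviator) that the paper glosses over as well.
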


        \subsection{To solve rational verification problems: the product game}

Responding rationally to Leader's strategies that are compatible with $\Mach$ amounts to play rationally in a larger game, in which the machine $\Mach$ has been incorporated.

	\begin{definition}[Product game] \label{defi_product_game}
		Let $\Game_{\|v_0}$ be a game, and let $\Mach$ be a Mealy machine for Leader in $\Game$.
		Their \emph{product game} is the game $\Game_{\|v_0} \otimes \Mach = (\Pi \cup \{\DD\}, V', (V'_i)_i, E', \mu')_{\|(v_0, q_0)}$ where the player $\DD$, called \emph{Demon}, chooses how the machine $\Mach$ will run.
		Formally:
		\begin{itemize}
			\item $V' = (V \times Q) \cup (V \times Q \times Q)$;
			
			\item $V'_\LL = \emptyset$, $V'_i = V_i \times Q \times Q$ for every $i \in \Pi \setminus \{\LL\}$, and $V'_\DD = (V \times Q) \cup (V_\LL \times Q \times Q)$;
			
			\item the set $E'$ contains:
			\begin{itemize}
				\item the edge $(u, p)(u, p, q)$ for each $(p, u, q) \in \Delta$ (if $u \not\in V_\LL$), or $(p, u, q, v) \in \Delta$ (if $u \in V_\LL$);
				
				\item the edge $(u, p, q)(v, q)$ for each $(p, u, q, v) \in \Delta$ (if $u \in V_\LL$);
				
				\item the edge $(u, p, q)(v, q)$ for each $(p, u, q) \in \Delta$, and each $uv \in E$ (if $u \not\in V_\LL$);
			\end{itemize}
			
			\item each payoff function $\mu'_i$ maps every play $(\pi_0, q_0) (\pi_0, q_0, q_1) (\pi_1, q_1) \dots$ to the payoff $\mu_i(\pi_0\pi_1 \dots)$ if $i \neq \DD$, and to the payoff $0$ if $i = \DD$.
		\end{itemize}
	\end{definition}

\begin{figure}
\centering
\begin{tikzpicture}[->,>=latex,shorten >=1pt, initial text={}, scale=0.8, every node/.style={scale=0.65}]
\newcommand{\deltay}{0.75}

   %\useasboundingbox (-2, 4.1*\deltay) rectangle (6, 1.8*\deltay);
   
	\node[state, rectangle, dotted, rounded corners] (bq1) at (-2, 2*\deltay) {$b, q_1$};
	\node[state] (aq0q1) at (0, 2*\deltay) {$a, q_0, q_1$};
    \node[state, rectangle, dotted, rounded corners, initial above] (aq0) at (0, 4*\deltay) {$a, q_0$};
    \node[state, rectangle, dotted, rounded corners] (bq1q1) at (-2, 4*\deltay) {$b, q_1, q_1$};
    \node[state, rectangle, dotted, rounded corners] (aq1) at (2, 2*\deltay) {$a, q_1$};
    \node[state] (aq1q0) at (2, 4*\deltay) {$a, q_1, q_0$};
    \node[state, rectangle, dotted, rounded corners] (bq0q0) at (4, 2*\deltay) {$b, q_0, q_0$};
    \node[state, rectangle, dotted, rounded corners] (bq0) at (4, 4*\deltay) {$b, q_0$};
    \node[state, rectangle, dotted, rounded corners] (cq0) at (6, 2*\deltay) {$c, q_0$};
    \node[state, rectangle, dotted, rounded corners] (cq0q0) at (6, 4*\deltay) {$c, q_0, q_0$};
				
				\path (aq0) edge (aq0q1);
				\path[very thick] (aq0q1) edge (bq1);
				\path (bq1) edge[bend left] node[left] {$\stack{\playcircle}{1}\, \stack{\Box}{1}$} (bq1q1);
				\path (bq1q1) edge[bend left] node[right] {$\stack{\playcircle}{1}\, \stack{\Box}{1}$} (bq1);
				\path (aq0q1) edge (aq1);
				\path (aq1) edge (aq1q0);
				\path (aq1q0) edge (aq0);
				\path (aq1q0) edge (bq0);
				\path (bq0) edge[bend left] node[right] {$\stack{\playcircle}{1}\, \stack{\Box}{1}$} (bq0q0);
				\path (bq0q0) edge[bend left] node[left] {$\stack{\playcircle}{1}\, \stack{\Box}{1}$} (bq0);
				\path (bq0q0) edge (cq0);
				\path (cq0) edge[bend left] (cq0q0);
				\path (cq0q0) edge[bend left] (cq0);
			\end{tikzpicture}
			\caption{A product game}
			\label{fig_product_game}
		\end{figure}
		
		Figure~\ref{fig_product_game} depicts the game $\Game_{\|v_0} \otimes \Mach$, when $\Game_{\|v_0}$ is the game of Figure~\ref{fig_ex_game} and $\Mach$ the machine of Figure~\ref{fig_ex_1player_machine}.
		Leader is then assimilated to player $\Box$, and Demon's vertices are represented by dotted boxes.
The unreachable vertices have been omitted, and we have given only the non-zero rewards.
Since, from the vertex $(a, q_0, q_1)$, player $\Circle$ has always the possibility to go to the vertex $(b, q_1)$ and to get the payoff $1$, it can be shown that every NE and every SPE in that game gives player $\Box$ the payoff $1$.
As we will see now, that means that the strategies compatible with the machine $\Mach$ guarantee the payoff $1$ to player $\Box$ against Nash-rational or subgame-perfect rational responses, i.e. that $\Game_{\|v_0}, 1-\epsilon,$ and $\Mach$, for every $\epsilon > 0$, form a positive instance of the Nash and subgame-perfect rational verification problems.

	\begin{theorem}[App.~\ref{pf_product_game}] \label{thm_product_game}
		Let $\rho \in \{\Nash, \subgameperfect\}$.
		Let $\Game_{\|v_0}$ be a game, let $\Mach$ be a Mealy machine for Leader in $\Game$, and let $t \in \QQ$.
		Then, every $\rho$-response $\bsigma_{-\LL}$ to every strategy $\sigma_\LL \in \Comp_{\|v_0}(\Mach)$ satisfies $\mu_\LL(\< \bsigma \>) > t$ if and only if every $\rho$-equilibrium $\btau$ in the game $\Game_{\|v_0} \otimes \Mach$ satisfies $\mu'_\LL(\< \btau \>) > t$.
	\end{theorem}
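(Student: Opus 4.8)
The plan is to prove the contrapositive of both directions at once, by exhibiting a payoff-preserving correspondence between, on one side, pairs $(\sigma_\LL, \bsigma_{-\LL})$ with $\sigma_\LL \in \Comp_{\|v_0}(\Mach)$ and $\bsigma_{-\LL} \in \rho\R(\sigma_\LL)$ in $\Game_{\|v_0}$, and on the other side, $\rho$-equilibria $\btau$ in $\Game_{\|v_0} \otimes \Mach$. It suffices to show that there is such a pair with $\mu_\LL(\<\bsigma\>) \leq t$ if and only if there is such a $\btau$ with $\mu'_\LL(\<\btau\>) \leq t$. I would rely on two structural features of the product game: the payoff $\mu'_\DD$ is constantly $0$, so Demon never has a profitable deviation; and $V'_\LL = \emptyset$, so Leader owns no vertex and a strategy profile in the product game is entirely determined by Demon's strategy $\tau_\DD$ together with the family $(\tau_i)_{i \neq \LL}$.

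First I would set up the projection. Every play $(\pi_0, q_0)(\pi_0, q_0, q_1)(\pi_1, q_1)\dots$ of the product game projects to the play $\pi_0\pi_1\dots$ of $\Game_{\|v_0}$ and to a run $q_0 q_1 \dots$ of $\Mach$, and by definition of $\mu'$ this projection preserves every payoff $\mu'_i = \mu_i$ for $i \neq \DD$, in particular for $\LL$. The edges are built precisely so that at a main vertex Demon selects the next machine state, at a Leader intermediate vertex $(u,p,q)$ with $u \in V_\LL$ (owned by Demon) Demon selects the machine output, and at a non-Leader intermediate vertex $(u,p,q)$ with $u \in V_i$ player $i$ selects an arbitrary successor $uv \in E$. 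Thus $\tau_\DD$ encodes exactly a state-assignment of $\Mach$ together with Leader's moves, i.e. a strategy $\sigma_\LL$ compatible with $\Mach$ (the witness $h \mapsto q_h$ required by the compatibility definition is read off from Demon's choices, and is defined on \emph{all} histories since $\tau_\DD$ is), while each $\tau_i$ encodes a strategy $\sigma_i$ in $\Game_{\|v_0}$, and conversely. This yields the two liftings: from $(\sigma_\LL, \bsigma_{-\LL})$ I build $\btau$ by letting Demon implement the state-assignment and Leader's outputs and letting $\tau_i$ mimic $\sigma_i$; from $\btau$ I read off $(\sigma_\LL, \bsigma_{-\LL})$.

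Next I would check that these liftings preserve the $\rho$-equilibrium property by matching deviations. A deviation of a non-Leader player $i$ in the product game is exactly a change of its choices at the vertices $(u,p,q)$ with $u \in V_i$, i.e. a choice of a different successor $v$ --- precisely the deviations $\sigma'_i$ available in $\Game_{\|v_0}$; since after such a deviation Demon keeps implementing the same state-assignment (defined on all histories, including those off the original path, which is where the all-histories clause of compatibility is essential), the resulting product play projects to $\<\sigma'_i, \bsigma_{-i}\>$, so the payoff of the deviation in the product game equals $\mu_i$ of the corresponding deviation in the original game. As Demon (constant payoff) and Leader (no vertices) can never deviate profitably, $\btau$ is a Nash equilibrium in the product if and only if no non-Leader player has a profitable deviation against $(\sigma_\LL, \bsigma_{-\LL})$, i.e. if and only if $\bsigma$ is an $\LL$-fixed NE. For the subgame-perfect case I would additionally match histories: the histories of the product game correspond to the $\sigma_\LL$-compatible histories of $\Game_{\|v_0}$ (Leader's moves along a product history always follow the Demon-chosen machine output, hence $\sigma_\LL$), so requiring the Nash property in every product subgame is equivalent to requiring it in every $\sigma_\LL$-compatible subgame $\Game_{\|h}$ --- which is exactly the definition of an $\LL$-fixed SPE. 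Combined with payoff preservation, a response with payoff $\leq t$ in $\Game_{\|v_0}$ lifts to an equilibrium with payoff $\leq t$ in the product and vice versa, giving the stated equivalence.

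The delicate step, on which I expect to spend the most care, is the subgame-perfect direction when $\Mach$ is nondeterministic: a single $\sigma_\LL$-compatible history may admit several machine runs, so I must argue that the extracted state-assignment $h \mapsto q_h$ can be chosen consistently over all histories (so that it genuinely witnesses compatibility of $\sigma_\LL$), and that the quantification ``over all subgames of the product'' matches ``over all $\sigma_\LL$-compatible subgames'' without spuriously introducing or omitting histories. I would handle this by fixing the state-assignment first --- obtained from $\tau_\DD$ in the product-to-original direction, and taken as the compatibility witness in the original-to-product direction --- and then defining Demon's behaviour to be \emph{positional in the state} from it, so that product subgames and $\sigma_\LL$-compatible subgames are put in an explicit bijection.
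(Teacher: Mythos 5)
Your proposal follows the same route as the paper's own proof: a projection/lifting correspondence between $\LL$-fixed $\rho$-equilibria of $\Game_{\|v_0}$ whose Leader component lies in $\Comp_{\|v_0}(\Mach)$ and $\rho$-equilibria of $\Game_{\|v_0}\otimes\Mach$, driven by the two observations that Demon's payoff is constant and that Leader owns no vertex, so that only the remaining players' deviations need to be matched. Your treatment of the Nash case is sound and essentially identical to the paper's.

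The gap is in the subgame-perfect case, at precisely the step you flagged, and your proposed repair does not close it. The claim that ``the histories of the product game correspond to the $\sigma_\LL$-compatible histories of $\Game_{\|v_0}$'' is false when $\Mach$ is nondeterministic: a history of the product game may contain Demon moves (a successor state, or a Leader output) different from those prescribed by $\tau_\DD$, and such a history projects to a history of $\Game_{\|v_0}$ that is \emph{not} compatible with $\sigma_\LL$. An SPE of the product game must induce a Nash equilibrium in every subgame rooted at such a history, whereas the hypothesis that $\bsigma$ is an $\LL$-fixed SPE constrains, by definition, only the $\sigma_\LL$-compatible subgames of $\Game_{\|v_0}$; so nothing is available to establish the equilibrium property of the lifted profile there. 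Making Demon positional in the state does not help: the set of histories of the product game does not depend on Demon's strategy, and the problematic subgames are exactly those in which Demon has already deviated from whatever strategy you fixed. The failure is essential rather than cosmetic: if from a Leader vertex $\Mach$ nondeterministically allows both a harmless sink and an entry into a copy of the SPE-free game of Figure~\ref{fig_chaos}, then $\Game_{\|v_0}$ admits an $\LL$-fixed SPE with compatible $\sigma_\LL$ (go to the sink), yet the product game has no SPE at all, because the chaos copy is reachable by a Demon move and its subgame has no Nash equilibrium in every sub-subgame; hence no lifting whatsoever can produce the required product SPE. For fairness, note that the paper's appendix proof makes the same silent identification (it applies the $\LL$-fixed SPE property of $\bsigma$ to the projection of an \emph{arbitrary} product history), so it is incomplete in exactly the same way; both arguments are complete when $\Mach$ is deterministic, since Demon then has a unique move everywhere and every product history is canonical.
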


Thus, solving the $\rho$-rational verification problem in the game $\Game_{\|v_0}$ amounts to solve the \emph{$\rho$-universal threshold problem} ($\rho$-UT problem) in $\Game_{\|v_0} \otimes \Mach$.

\begin{pb}[$\rho$-universal threshold problem in the class $\Cl$]
		Given a game $\Game_{\|v_0} \in \Cl$, a player $i \in \Pi$, and a threshold $t \in \QQ$, is every $\rho$-equilibrium $\bsigma$ in $\Game_{\|v_0}$ such that $\mu_i(\< \bsigma \>) > t$?
	\end{pb}

Moreover, the size of the product game is bounded by a polynomial function of $\lv \Game \rv$ and $\lv \Mach \rv$; and when the game $\Game$ belongs to a class $\Cl$ among the three classes defined in Section~\ref{sec_notable_classes}, then all product games constructed from it also belong to $\Cl$.
Hence the following.
	
	\begin{corollary} \label{cor_reductions}
		Let $\Cl$ be a game class among energy games, discounted-sum games, and mean-payoff games.
    Then, in the class $\Cl$, for a given $\rho \in \{\Nash, \subgameperfect\}$, the $\rho$-UT problem, the $\rho$-rational verification problem, and the deterministic $\rho$-rational verification problem are reducible to each other in polynomial time.
	\end{corollary}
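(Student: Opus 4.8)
The plan is to establish a cycle of three polynomial-time reductions
$$\text{det-}\rho\text{-RV} \;\leq_p\; \rho\text{-RV} \;\leq_p\; \rho\text{-UT} \;\leq_p\; \text{det-}\rho\text{-RV},$$
from which all pairwise reducibilities follow by composition, since polynomial reductions compose. The first reduction is the identity: every instance of the deterministic rational verification problem is, by definition, already an instance of the (possibly non-deterministic) rational verification problem, as a deterministic Mealy machine is in particular a Mealy machine. The second reduction is exactly Theorem~\ref{thm_product_game}: given an instance $(\Game_{\|v_0}, t, \Mach)$ of $\rho$-RV, the product game $\Game_{\|v_0} \otimes \Mach$ together with player $\LL$ and threshold $t$ is an instance of $\rho$-UT with the same answer, and by the remark following the theorem this instance has polynomial size and stays in $\Cl$ for the three classes considered. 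It remains to supply the third, and only genuinely new, reduction.

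For $\rho$-UT $\to$ det-$\rho$-RV I would reduce an instance $(\Game_{\|v_0}, i, t)$ of the $\rho$-UT problem, with $\Game$ over player set $\Pi$, by adding a \emph{fresh, idle Leader}. Concretely, form $\Game'$ on the same arena with player set $\Pi \cup \{\LL\}$, let $\LL$ control no vertex (so $V_\LL = \emptyset$ and the partition of $V$ among the original players is unchanged), and set Leader's objective to be a copy of player $i$'s: take $r_\LL := r_i$ in the energy, discounted-sum, and mean-payoff cases, so that $\mu'_\LL = \mu_i$ on every play. As deterministic machine $\Mach$ take the trivial one-state machine with transitions $(q_0, u, q_0)$ for every $u \in V$; since $\LL$ controls nothing, $\Comp_{\|v_0}(\Mach)$ consists of the single empty strategy $\sigma_\LL$.

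Correctness of this last reduction rests on the observation that, because $\LL$ controls no vertex, Leader has no available deviation and every history is compatible with $\sigma_\LL$, so the Leader-rationality requirement is vacuous: an $\LL$-fixed $\rho$-equilibrium of $\Game'$ coincides, after restriction to $\Pi$, with an ordinary $\rho$-equilibrium of $\Game$, the deviation options and payoffs of every player $j \neq \LL$ being untouched by the presence of $\LL$ and the outcomes being literally the same plays. Since $\mu'_\LL(\< \bsigma \>) = \mu_i(\< \bsigma \>)$, the det-RV instance $(\Game'_{\|v_0}, t, \Mach)$ is positive iff every $\rho$-equilibrium of $\Game_{\|v_0}$ gives player $i$ a payoff $> t$, which is precisely the $\rho$-UT instance; the construction is clearly of polynomial size and keeps the game in $\Cl$, as we only duplicated one reward mapping. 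The main obstacle, and the reason a dummy Leader is the right device, is the double mismatch between the two problems: $\rho$-UT quantifies over genuine $\rho$-equilibria, in which \emph{all} players are rational, whereas rational verification measures Leader's payoff and, through $\LL$-fixedness, exempts Leader from the rationality constraint. Making Leader a payoff-clone of $i$ that controls nothing reconciles both discrepancies at once, since a player with no moves is automatically rational while its payoff can mirror the tested one.
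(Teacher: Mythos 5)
Your proposal is correct and follows essentially the same route as the paper: the same three-reduction cycle (deterministic RV to RV by inclusion of machines, RV to UT via the product game of Theorem~\ref{thm_product_game}, and UT to deterministic RV by adding an idle Leader with no vertices, payoff cloned from player $i$, and a trivial one-state machine). The correctness argument you give for the dummy-Leader step — that a player controlling no vertex is vacuously rational, so $\LL$-fixed $\rho$-equilibria restrict exactly to $\rho$-equilibria of the original game — is precisely the paper's argument.
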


	\begin{proof}
		
		\begin{itemize}
			\item \emph{The deterministic $\rho$-rational verification problem reduces to the $\rho$-rational verification problem,}
			because a non-deterministic Mealy machine is a Mealy machine.

			\item \emph{The $\rho$-UT problem reduces to the deterministic $\rho$-rational verification problem.}
			
			Let $\Game_{\|v_0}$, $i$ and $t$ form an instance of the $\rho$-UT problem.
			We define the game $\Game'_{\|v_0}$ as equal to the game $\Game_{\|v_0}$, where Leader has been added to the player set, but controls no vertex.
			We define $\mu_\LL = \mu_i$.
			If $\Game$ belongs to the class $\Cl$, so does $\Game'$.
			Let $\Mach$ be the one-state deterministic Mealy machine on $\Game'$ that never outputs anything.
			Then, a strategy profile $\bsigma$ in $\Game'_{\|v_0}$ is an $\LL$-fixed $\rho$-equilibrium, if and only if it is an $\LL$-fixed $\rho$-equilibrium with $\sigma_\LL \in \Comp_{\|v_0}(\Mach)$, if and only if the strategy profile $\bsigma_{-\LL}$ is a $\rho$-equilibrium in the game $\Game_{\|v_0}$.
			As a consequence $\Game_{\|v_0}$, $i$, and $t$ form a positive instance of the $\rho$-UT problem, if and only if $\Game'_{\|v_0}$, $\Mach$, and $t$ form a positive instance of the deterministic $\rho$-rational verification problem.
            Moreover, the latter can be constructed from the former in polynomial time.

			\item \emph{The $\rho$-rational verification problem reduces to the $\rho$-UT problem,} by Theorem~\ref{thm_product_game}, and since the product game $\Game_{\|v_0} \otimes \Mach$ can be constructed from $\Game_{\|v_0}$ and $\Mach$ in polynomial time. \hspace{1em plus 1fill}\qedhere
   %\qedhere
		\end{itemize}
	\end{proof}

	\section{Parity games} \label{sec_parity}

Let us now apply those general constructions to our first class of games, and the easiest to study: parity games.
 
	    \subsection{Checking problems}

By Corollary~\ref{cor_privilege_problem}, in parity games, the checking problems reduce to the privilege problem, which consists in finding a play in a given game that satisfies a parity condition and that falsify another.
That can be done in polynomial time.

	\begin{theorem} \label{thm_parity_checking}
		In the class of parity games, the Nash-checking and the subgame-perfect checking problems, deterministic or not, can be decided in polynomial time.
	\end{theorem}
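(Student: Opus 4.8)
The plan is to invoke Corollary~\ref{cor_privilege_problem}, which reduces both checking problems (deterministic or not, Nash or subgame-perfect) to the privilege problem on the associated deviation game, itself a parity game whose size is polynomial in $\lv \Game \rv$ and $\lv \Mach \rv$. It therefore suffices to decide the privilege problem in polynomial time on a parity game with players $\AA$ and $\EE$. Since parity games are Boolean, the inequality $\mu_\AA(\pi) < \mu_\EE(\pi)$ can hold only when $\mu_\AA(\pi) = 0$ and $\mu_\EE(\pi) = 1$, that is, when the play $\pi$ satisfies Eve's parity condition while violating Adam's. Thus the given game is a positive privilege instance if and only if there is \emph{no} play $\pi$ with $\min_{v \in \Inf(\pi)} \kappa_\EE(v)$ even and $\min_{v \in \Inf(\pi)} \kappa_\AA(v)$ odd, and I am left to decide the existence of such a witness play.

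Next I would reduce this existence question to a finite search over colour pairs, using the standard fact that the truth of a parity condition on a play depends only on its set $\Inf(\pi)$ of infinitely visited vertices, and that a nonempty set $S$ equals $\Inf(\pi)$ for some play starting at $v_0$ exactly when $S$ is reachable from $v_0$ in $(V, E)$ and the subgraph induced by $S$ is strongly connected and contains at least one edge (so that a cyclic walk visiting all of $S$ infinitely often can be appended to a prefix reaching $S$). Hence a witness play exists if and only if there are an even colour $e$, an odd colour $o$, and a reachable, nontrivial strongly connected set $S$ with $\min_{v \in S} \kappa_\EE(v) = e$ and $\min_{v \in S} \kappa_\AA(v) = o$.

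Finally, the algorithm enumerates all pairs $(e, o)$ with $e$ even in the image of $\kappa_\EE$ and $o$ odd in the image of $\kappa_\AA$; there are at most $|V|^2$ such pairs. For each pair it restricts the graph to $V_{e,o} = \{v : \kappa_\EE(v) \geq e, \kappa_\AA(v) \geq o\}$, computes the strongly connected components of the induced subgraph, and checks whether some component containing at least one edge holds both a vertex with $\kappa_\EE(v) = e$ and a vertex with $\kappa_\AA(v) = o$, and is reachable from $v_0$ in $(V, E)$. Each such test is a linear-time strongly-connected-component decomposition plus a reachability computation, so the whole procedure runs in polynomial time; by Theorem~\ref{thm_deviation_game} it reports a witness play (equivalently, a compatible profile that is not an equilibrium, hence a negative checking instance) precisely when some pair succeeds, and otherwise certifies that every compatible profile is a $\rho$-equilibrium. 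The only point demanding genuine care — and the step I would justify most carefully — is the equivalence between witness plays and reachable nontrivial strongly connected sets realizing the prescribed colour minima, namely both the construction of a concrete play with $\Inf(\pi) = S$ from such a set $S$ and the converse that every witness play induces one; the remainder is routine graph bookkeeping.
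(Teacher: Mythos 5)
Your proposal is correct and follows essentially the same route as the paper: reduce all four checking problems to the privilege problem via Corollary~\ref{cor_privilege_problem}, then decide the privilege problem by searching in polynomial time for a reachable cycle (equivalently, a reachable nontrivial strongly connected set) whose minimal Adam-colour is odd and minimal Eve-colour is even. The paper phrases this as guessing a vertex pair $(u,v)$ with suitable colours connected by paths through vertices of no smaller colour, while you enumerate colour pairs and use an SCC decomposition; these are the same argument, and your version merely spells out the witness-play/strongly-connected-set equivalence that the paper leaves implicit.
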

	
	\begin{proof}
		By Corollary~\ref{cor_privilege_problem}, those four problems reduce to the privilege problem.
		Let $\Game_{\|v_0}$ be a parity game in which there exists a play $\pi$ such that $\mu_\AA(\pi) < \mu_\EE(\pi)$.
		Then, we have $\mu_\AA(\pi) = 0$, and $\mu_\EE(\pi) = 1$.
		Therefore, there exist two vertices $u, v \in \Inf(\pi)$, such that $\kappa_\AA(u)$ is odd, $\kappa_\EE(v)$ is even, and there exists a path from $u$ to $v$ and a path from $v$ to $u$ that both traverse only vertices $w$ such that $\kappa_\AA(w) \geq \kappa_\AA(w)$ and $\kappa_\EE(w) \geq \kappa_\EE(w)$.
		Conversely, if such vertices $u$ and $v$ exist, then there exists a play $\pi$ satisfying $\mu_\AA(\pi) < \mu_\EE(\pi)$.
		The existence of such vertices can be checked in polynomial time.
	\end{proof}

	\subsection{Rational verification}

As for rational verification problems, they reduce by Corollary~\ref{cor_reductions} to UT problems, which are subproblems of problems already studied in~\cite{Ummels05}, \cite{DBLP:conf/fossacs/Ummels08}, and later in~\cite{CSL}.
In a nutshell, those problems belong to the class $\coNP$, because when there exists a NE or SPE outcome that makes some player $i$ lose, there exists one that has a simple form, and that can be guessed in polynomial time.
The lower bounds can be obtained by a slight adaptation on a reduction from $\coSat$ that was already presented in~\cite{Ummels05}.
	
	\begin{theorem}[App.~\ref{pf_parity_verif}] \label{thm_parity_verif}
		In the class of parity games, the Nash rational and the subgame-perfect rational verification problems, deterministic or not, are $\coNP$-complete.
	\end{theorem}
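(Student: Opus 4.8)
The plan is to route everything through the \emph{universal threshold} (UT) problem. First I would observe that the product-game construction of Theorem~\ref{thm_product_game} preserves the parity structure: given color mappings $\kappa_i$ on $\Game$, the projection $(\pi_0,q_0)(\pi_0,q_0,q_1)(\pi_1,q_1)\dots \mapsto \pi_0\pi_1\dots$ used in Definition~\ref{defi_product_game} lets one pull the colors back to $V'$, so $\Game_{\|v_0} \otimes \Mach$ is again a parity game of polynomial size. Hence, by exactly the argument of Corollary~\ref{cor_reductions} (which never uses anything specific to the quantitative classes beyond class preservation of the product game), the $\rho$-rational verification problem, deterministic or not, and the $\rho$-UT problem are interreducible in polynomial time in the parity setting. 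It therefore suffices to prove that the $\rho$-UT problem is $\coNP$-complete, for both $\rho = \Nash$ and $\rho = \subgameperfect$.

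For membership in $\coNP$, I would show the complement is in $\NP$. Since payoffs are Boolean we may assume $t \in [0,1)$ (degenerate thresholds are decided directly, as parity games always have equilibria), and then the complement asks whether there is a $\rho$-equilibrium outcome $\pi$ with $\mu_i(\pi) = 0$, i.e.\ in which player $i$ \emph{loses}. For $\rho = \Nash$, Lemma~\ref{lm_ne} characterizes NE outcomes: $\pi$ is one iff, for every player $j$ and every vertex $\pi_k \in V_j$, the coalition $-j$ can hold player $j$ to payoff at most $\mu_j(\pi)$ from $\pi_k$. In a Boolean game this means precisely that every player $j$ losing along $\pi$ also loses the two-player zero-sum parity game (player $j$ against coalition $-j$) from each of his own vertices visited by $\pi$. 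As zero-sum parity games admit positional optimal strategies, whenever such a losing NE outcome exists there is one whose outcome is a lasso of polynomial length and whose punishing coalition strategies are positional. The $\NP$ algorithm thus guesses a lasso $\pi$ together with, for each losing player $j$, a single positional coalition strategy $\btau_{-j}$, and verifies in polynomial time that (i) $\pi$ is a valid lasso with $\mu_i(\pi)=0$, and (ii) after fixing $\btau_{-j}$, player $j$ cannot satisfy his parity condition from any $\pi_k \in V_j$ — a one-player (solitaire) parity check on a finite graph, solvable in polynomial time.

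For $\rho = \subgameperfect$ the outline is the same, but the polynomial-size certificate is the delicate point, since there is no one-line positional characterization of SPE outcomes. Here I would invoke the analysis of SPEs in parity games of~\cite{CSL} (refining~\cite{Ummels05,DBLP:conf/fossacs/Ummels08}), which establishes that the constrained SPE existence problem in parity games is $\NP$-complete; in particular, whenever some SPE outcome makes player $i$ lose, this is witnessed by a play together with punishment data of polynomial size, verifiable in polynomial time. The $\subgameperfect$-UT problem is exactly the complement of that existence problem under the Boolean reading of the threshold, hence lies in $\coNP$. For $\coNP$-hardness, I would start from the reduction of~\cite{Ummels05} encoding $\Sat$ into the existence of an NE (resp.\ SPE) outcome in which a designated player loses; reading the same construction for the complement yields a reduction from $\coSat$ to the $\rho$-UT problem, namely a parity game in which every $\rho$-equilibrium lets player $i$ win iff the input formula is unsatisfiable. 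Composing this with the generic reduction from UT to deterministic rational verification (add a Leader controlling no vertex with $\kappa_\LL = \kappa_i$ and the trivial one-state machine, exactly as in the proof of Corollary~\ref{cor_reductions}) transfers the hardness to the deterministic, hence also the non-deterministic, rational verification problems, giving $\coNP$-completeness in all four cases.

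The hard part will be the $\subgameperfect$ membership: unlike the NE case, the succinct witness cannot be read off directly from Lemma~\ref{lm_ne}, so it must be extracted from the finer structural results of~\cite{CSL} (the negotiation-function fixpoint and its polynomial-size certificates), and some care is needed to check that the Boolean threshold bookkeeping matches precisely the constrained-existence statement proved there.
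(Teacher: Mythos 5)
Your proposal matches the paper's proof essentially step for step: both reduce rational verification to the universal threshold problem via the product game (the argument of Corollary~\ref{cor_reductions}, which the paper also applies to parity games), obtain the subgame-perfect upper bound from the $\NP$-completeness result of~\cite{CSL}, certify negative Nash instances via Lemma~\ref{lm_ne} by guessing a polynomially bounded lasso together with punishing coalition strategies, and derive $\coNP$-hardness from the Ummels-style $\coSat$ construction, transferred to rational verification by adding a Leader who controls no vertex. The only differences are presentational: the paper spells out the modified game $\Game^\phi$ explicitly and cites~\cite{CSL} for the polynomial lasso bound, whereas you certify punishment with explicit positional strategies (equivalent to the paper's $\NP \cap \coNP$ certificates) and justify the lasso bound informally by positional determinacy.
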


	\section{Quantitative reachability} \label{sec_qr}

Again, Corollaries~\ref{cor_privilege_problem} and~\ref{cor_reductions} enable us to solve efficiently the problems we are interested in; readily in the case of checking problems and subgame-perfect rational verification, and with some further work in the case of Nash rational verification.

	\subsection{Checking problems}
	
	\begin{theorem} \label{thm_reach_checking}
		In quantitative reachability games, the Nash-checking and the subgame-perfect checking problems, deterministic or not, can be decided in polynomial time.
	\end{theorem}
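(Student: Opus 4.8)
The plan is to invoke Corollary~\ref{cor_privilege_problem}, which reduces all four checking problems to the privilege problem in quantitative reachability games: given a game with two players $\AA$ and $\EE$, decide whether every play $\pi$ satisfies $\mu_\AA(\pi) \geq \mu_\EE(\pi)$. So it suffices to decide, in polynomial time, whether there exists a play $\pi$ with $\mu_\AA(\pi) < \mu_\EE(\pi)$, i.e. a play in which Eve reaches her target set $T_\EE$ strictly sooner than Adam reaches $T_\AA$. Recall that in the quantitative reachability setting, $\mu_i(\pi) = \frac{1}{1 + n_i}$ where $n_i$ is the first index at which $\pi$ hits $T_i$ (and $0$ if $T_i$ is never reached), so $\mu_\AA(\pi) < \mu_\EE(\pi)$ holds exactly when $\EE$ reaches her target at some finite step $n_\EE$ while $\AA$ either never reaches $T_\AA$ or reaches it strictly later, at some $n_\AA > n_\EE$.

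The key observation is that such a witnessing play, if it exists, can be taken to have a very simple shape: a simple path from $v_0$ reaching $T_\EE$ for the first time at step $n_\EE$, such that no vertex among $\pi_0, \dots, \pi_{n_\EE}$ lies in $T_\AA$, followed by any continuation that either avoids $T_\AA$ forever or delays it past $n_\EE$. Since $\mu_\EE$ depends only on the prefix up to the first visit to $T_\EE$, and since we only need $\AA$'s first visit to $T_\AA$ to occur strictly after $n_\EE$ (or never), the cleanest approach is to search directly for a prefix. Concretely, first I would compute, for each vertex $v$, the shortest distance $d_\EE(v)$ from $v_0$ to $T_\EE$ along paths that avoid $T_\AA$ entirely; this is a single breadth-first search in the subgraph of $(V,E)$ obtained by deleting the vertices of $T_\AA$ (or by treating them as sinks). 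If $T_\EE$ is reachable from $v_0$ in that restricted subgraph, then there is a finite path reaching $T_\EE$ at some step $n_\EE$ that never visited $T_\AA$ beforehand, hence with Adam's first visit strictly later or never; this yields a play $\pi$ with $\mu_\AA(\pi) < \mu_\EE(\pi)$. Conversely, if every path from $v_0$ to $T_\EE$ must pass through $T_\AA$ first, then on every play either $\EE$ never reaches $T_\EE$, giving $\mu_\EE(\pi) = 0 \leq \mu_\AA(\pi)$, or $\AA$ reaches $T_\AA$ at or before Eve's first visit to $T_\EE$, giving $\mu_\AA(\pi) \geq \mu_\EE(\pi)$; in either case no witness exists.

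A subtle point to handle carefully is the boundary case where $\AA$ and $\EE$ reach their targets at the same step, and the case where $v_0$ itself lies in one of the target sets (step $0$), since strictness of the inequality matters: a witness requires $n_\EE < n_\AA$, so a play in which both reach their targets simultaneously does \emph{not} count. Deleting exactly the vertices of $T_\AA$ before the search enforces $n_\AA > n_\EE$, because any path counted by the search touches $T_\AA$ only possibly at steps strictly after the first visit to $T_\EE$. I would also make sure to distinguish the subcase $T_\AA \subseteq$ the deleted set correctly when $v_0 \in T_\AA$, in which case $\mu_\AA(\pi) = 1$ is maximal and no witness can exist. The reachability computation is a single graph traversal, so the whole procedure runs in time polynomial in $\lv \Game \rv$.

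The main obstacle is not the algorithmic complexity, which is elementary, but getting the reduction bookkeeping exactly right: one must verify that the witness play built from a restricted-graph path is indeed an \emph{infinite} play (every vertex has an outgoing edge by assumption, so any finite path extends to a play, and the tail can be chosen to keep $\AA$ away from $T_\AA$ or at least past step $n_\EE$), and that the converse direction correctly covers the possibilities $n_\EE = +\infty$ and $n_\AA \leq n_\EE$. Once these cases are checked, Corollary~\ref{cor_privilege_problem} immediately upgrades the polynomial-time decision of the privilege problem to polynomial-time decision of the Nash-checking and subgame-perfect-checking problems, deterministic or not.
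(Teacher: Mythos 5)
Your proposal is correct and follows the same route as the paper: reduce all four checking problems to the privilege problem via Corollary~\ref{cor_privilege_problem}, then observe that a witness play is exactly one that reaches Eve's target set without first traversing Adam's, which is decided by a single polynomial-time reachability search in the graph with $T_\AA$ removed. Your treatment of the strictness/boundary cases (simultaneous reaching, $v_0 \in T_\AA$) just makes explicit what the paper leaves implicit in its one-line argument.
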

	
	\begin{proof}
		By Corollary~\ref{cor_privilege_problem}, those four problems reduce to the privilege problem.
	    In a quantitative reachability game, a play in which Adam's payoff is strictly smaller than Eve's one is a play that reaches Eve's target set without traversing Adam's one.
	    The existence of such a play can be decided by polynomial time classical graph search algorithms.
\end{proof}

	\subsection{Subgame-perfect rational verification}

\begin{theorem} \label{thm_reach_sp_verif}
    In quantitative reachability games, the  subgame-perfect rational verification problem, deterministic or not, is $\PSpace$-complete.
\end{theorem}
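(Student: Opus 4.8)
The plan is to squeeze this class between the subgame-perfect universal threshold problem (SP-UT problem) and the constrained existence problem, which is $\PSpace$-complete in quantitative reachability games by Brihaye et al.~\cite{DBLP:conf/concur/BrihayeBGRB19}. The first task is to re-establish, for quantitative reachability games, the polynomial inter-reducibility between rational verification and the UT problem, i.e.\ the content of Corollary~\ref{cor_reductions}, from which this class was deliberately left out. The easy direction, ``SP-UT $\to$ deterministic rational verification,'' goes through verbatim as in the proof of Corollary~\ref{cor_reductions}: given an instance $(\Game_{\|v_0}, i, t)$ of the SP-UT problem, I would add a Leader who controls no vertex and whose target set equals $T_i$, and take the trivial one-state machine. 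The enlarged game is again a quantitative reachability game, its SPEs are in bijection with the $\LL$-fixed SPEs with $\sigma_\LL$ compatible with the machine, and $\mu_\LL = \mu_i$ is preserved, so the two instances are equivalent.

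The delicate direction, ``rational verification $\to$ SP-UT,'' is exactly where quantitative reachability escapes Corollary~\ref{cor_reductions}, and I expect it to be the main obstacle. The product game of Definition~\ref{defi_product_game} interleaves one Demon vertex between every two original vertices, so the projected payoff $\mu'_i$ of Theorem~\ref{thm_product_game} (namely $1/(1+n)$, with $n$ the reaching time in the projected play) is \emph{not} the quantitative reachability payoff carried by the product arena, where the target is reached after $2n$ steps, giving $1/(1+2n)$. I would resolve this by observing that NEs and SPEs depend only on the preference order each player has over plays, and that $n \mapsto 2n$ is strictly monotone; hence endowing the product arena with the natural target sets $T'_i = \{(v,q) : v \in T_i\}$ produces a genuine quantitative reachability game whose NEs and SPEs coincide with those of the abstract product game to which Theorem~\ref{thm_product_game} applies (both reach/non-reach and faster/slower comparisons agree). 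It then remains to translate the threshold: a bound $t$ on $\mu_\LL$ becomes a bound $t'$ (computable in polynomial time, e.g.\ $t' = 1/(2N)$ where $N$ is the largest reaching time giving payoff $>t$) such that $\mu_\LL > t$ in the original play iff $\mu'_\LL > t'$ in the product play. This yields the polynomial reduction from the SP rational verification problem, deterministic or not, to the SP-UT problem.

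Finally I would pin down the complexity of the SP-UT problem itself. Its complement asks whether there exists an SPE $\bsigma$ with $\mu_i(\<\bsigma\>) \le t$, a single-coordinate special case of the constrained existence problem; this places the complement, and therefore the SP-UT problem (since $\PSpace$ is closed under complement), in $\PSpace$. For the lower bound I would read the $\PSpace$-hardness construction of~\cite{DBLP:conf/concur/BrihayeBGRB19} as producing (or being easily adapted to produce) instances that constrain a single player's payoff, so that constrained existence reduces to the complement of the SP-UT problem; complementing once more shows that the SP-UT problem is $\PSpace$-hard. Chaining the reductions then gives the result: SP rational verification lies in $\PSpace$ via ``rational verification $\to$ SP-UT $\to$ constrained existence,'' and is $\PSpace$-hard via ``constrained existence $\to$ SP-UT $\to$ deterministic rational verification,'' with the non-deterministic variant inheriting hardness since it subsumes the deterministic one. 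Everything outside the payoff-distortion argument of the second paragraph is routine bookkeeping on top of Theorem~\ref{thm_product_game} and the known status of constrained existence.
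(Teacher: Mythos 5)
Your proof is correct, and at the top level it follows the same route as the paper's: reduce subgame-perfect rational verification to the subgame-perfect universal threshold problem, whose complement is a single-constraint instance of the constrained existence problem shown $\PSpace$-complete in~\cite{DBLP:conf/concur/BrihayeBGRB19} (and, like the paper's footnote, you note that the hardness construction there must be read as, or adapted to, constraining a single player's payoff). The genuinely valuable difference is your second paragraph. The paper's own proof is a one-liner invoking Corollary~\ref{cor_reductions}, whose statement covers only energy, discounted-sum and mean-payoff games; quantitative reachability is excluded from it precisely because the product game of Definition~\ref{defi_product_game} interleaves Demon vertices, so the product arena equipped with the natural target sets yields the payoff $1/(1+2n)$ rather than the projected payoff $1/(1+n)$ to which Theorem~\ref{thm_product_game} refers. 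Your fix --- NEs and SPEs are purely ordinal notions (every defining comparison has the form $\mu_i(h\pi)\le\mu_i(h\pi')$, so agreement of preference orders on full plays also gives agreement in every subgame), $n\mapsto 2n$ is strictly monotone, and the threshold can be translated --- is exactly the missing step that makes the paper's appeal to Corollary~\ref{cor_reductions} legitimate for this class. So rather than merely reproducing the paper's argument, you have made explicit a patch that the paper leaves unstated. One small slip: your example threshold $t'=1/(2N)$ is off by one, since $1/(1+2N)<1/(2N)$ means the reaching time $n=N$ itself would be rejected; any $t'$ with $1/(2N+3)\le t'<1/(2N+1)$ works, e.g.\ $t'=1/(2N+2)$, or directly $t'=t/(2-t)$ for $t\in(0,1]$, with the degenerate cases $t\le 0$ and $t\ge 1$ handled trivially. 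Since you only needed the existence of some polynomial-time computable $t'$, this does not affect the validity of the argument.
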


\begin{proof}
    It has been proved in~\cite{DBLP:conf/concur/BrihayeBGRB19} that the complement of the UT problem was $\PSpace$-complete\footnote{As for parity games, the problem studied in that paper was more general, but the lower bound still holds by a slight adaptation of the proof.}.
    The result follows by Corollary~\ref{cor_reductions}.
\end{proof}

	\subsection{Nash rational verification}

Negative instances of the Nash rational verification problem can be recognized by guessing an NE outcome, and checking it using Lemma~\ref{lm_ne}; hence that problem is $\coNP$-easy.
The matching lower bound can be established by reduction from the problem $\coSat$.

\begin{theorem}[App.~\ref{pf_reach_nash_verif}] \label{thm_reach_nash_verif}
    In quantitative reachability games, the  Nash rational verification problem, deterministic or not, is $\coNP$-complete.
\end{theorem}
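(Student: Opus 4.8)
The plan is to establish the two matching bounds separately, using the product game to dispose of the constraint that Leader must follow $\Mach$.

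\textbf{Upper bound ($\coNP$).} I would show that the complement lies in $\NP$, namely that the existence of a \emph{bad response} --- an $\LL$-fixed NE $\bsigma$ with $\sigma_\LL \in \Comp_{\|v_0}(\Mach)$ and $\mu_\LL(\<\bsigma\>) \le t$ --- is recognizable in $\NP$. First I would invoke Theorem~\ref{thm_product_game} to pass to the product game $\Game_{\|v_0} \otimes \Mach$, replacing the search for a bad response by the search for a plain NE $\btau$ with $\mu'_\LL(\<\btau\>) \le t$ in that polynomial-size game; this is convenient because there Leader controls no vertex and Demon has a constant payoff, so Lemma~\ref{lm_ne} applies directly and every punishing coalition may use Demon freely. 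Then I would guess the outcome $\pi = \<\btau\>$ as a lasso: since every payoff depends only on the first time each target is met, $\pi$ may be taken ultimately periodic, and I would allow the repetition counts of its cycles to be written in binary, so that even an exponentially large hitting time (needed when $t$ is tiny and Leader must reach its target very late) has a polynomial description. To verify that $\pi$ is an NE outcome I would, for every player $i \ne \LL$ and every vertex $\pi_k \in V_i$, compute by an attractor fixpoint the least number of steps $\val_i(\pi_k)$ that $i$ can force before reaching $T_i$ against the coalition $-i$; Lemma~\ref{lm_ne} then reduces to checking inequalities of the form $\frac{1}{1+k+\val_i(\pi_k)} \le \mu_i(\pi)$, which by the periodicity of the lasso need only be tested at its polynomially many distinct positions. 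Verifying $\mu'_\LL(\pi) \le t$ closes this direction.

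\textbf{Lower bound ($\coNP$-hardness).} I would reduce from $\coSat$. As in the proof of Corollary~\ref{cor_reductions}, the Nash-UT problem reduces to (deterministic) Nash rational verification by adjoining a passive Leader with $\mu_\LL = \mu_i$ and the trivial one-state machine --- a construction that keeps the game quantitative reachability --- so it suffices to make the Nash-UT problem $\coNP$-hard. Given $\phi = \bigwedge_k C_k$ over $x_1, \dots, x_n$, I would build a quantitative reachability game with one player per variable together with a distinguished player $i$, designed so that the NEs keeping $\mu_i(\<\bsigma\>) \le t$ correspond exactly to the satisfying assignments of $\phi$: each variable player encodes a truth value by the branch it is prepared to defend, a clause-gadget lets player $i$ reach its target quickly precisely when some clause is left unsatisfied, and the variable players' own targets are positioned so that no candidate assignment offers any of them a profitable deviation. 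Hence some NE keeps $i$ below $t$ if and only if $\phi$ is satisfiable, equivalently every NE keeps $i$ above $t$ if and only if $\phi \in \coSat$.

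\textbf{Main obstacle.} The delicate part is the membership proof. Because $t$ is written in binary, a bad NE may have to delay Leader's target for exponentially many steps, so the outcome cannot be guessed vertex by vertex; the argument therefore hinges on first showing that such a delaying outcome can always be realized as a lasso whose delaying portion is a single cycle carrying no profitable one-shot deviation, and then exploiting the eventual periodicity of that cycle, together with the precomputed punishment values $\val_i$, to collapse the infinitely many NE inequalities of Lemma~\ref{lm_ne} to finitely many checks. By comparison the hardness reduction is routine, the only point needing care being to engineer the clause-gadget so that an unsatisfied clause genuinely offers player $i$ an NE-consistent shortcut to its target.
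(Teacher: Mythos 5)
Your overall architecture --- pass from responses to plain NEs (product game / universal threshold problem), precompute each player's best enforceable hitting time by a fixpoint, certify a bad response by a polynomial-size ultimately periodic outcome checked through Lemma~\ref{lm_ne}, and prove hardness by a \Sat-type reduction --- is the same as the paper's. But your membership argument has a genuine gap, and it sits exactly where you placed your ``main obstacle''. The missing idea is the paper's key combinatorial observation: the enforceable values $\lambda(v)$ all lie in $\{0, \frac{1}{n}, \frac{1}{n-1}, \dots, 1\}$ where $n = \card V$, so whenever Lemma~\ref{lm_ne} imposes a non-vacuous constraint on some player, that player's target must be reached within at most $n$ further steps. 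Hence, along any NE-consistent play with $\mu_\LL \le t$, each player is ``pending'' on at most one interval of length at most $n$, and a pigeonhole over the first $2np+n$ positions ($p = \card\Pi$) yields two occurrences of the same vertex with no pending constraint in between; cutting there produces a lasso $hc^\omega$ with $|h| \le 2np+n$ and $c$ a simple cycle that is still an NE outcome. Crucially, if the original play reached Leader's target only after the cut point --- in particular, exponentially late --- then the truncated play never reaches it at all, so its Leader-payoff is $0 \le t$. Your premise that ``the outcome cannot be guessed vertex by vertex'' is therefore false, and the binary repetition counts are a red herring. Worse, they do not buy completeness either: to show that every bad NE can be brought into your guessed form (``a lasso whose delaying portion is a single cycle carrying no profitable one-shot deviation''), you need precisely the pending-constraint pigeonhole above, which your proposal names as the hinge but never supplies; NE outcomes are not ultimately periodic in general, and ``payoffs depend only on first hitting times'' does not by itself give periodicity. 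Without that lemma, the $\NP$ certificate scheme has no completeness proof.

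Two smaller points. First, the product game of a quantitative reachability game is not itself a quantitative reachability game (each original move becomes two product moves, so payoffs are $1/(1+m)$ in projected time, not in product time); this is harmless for your route, since Lemma~\ref{lm_ne} and attractor computations apply to the product game directly and payoffs are monotone in hitting times, but your inequality $\frac{1}{1+k+\val_i(\pi_k)} \le \mu_i(\pi)$ needs the corresponding time-dilation adjustment. (The paper is equally loose here: it invokes Corollary~\ref{cor_reductions}, which is stated only for energy, discounted-sum and mean-payoff games.) Second, your hardness direction is a shape, not a construction: the paper's concrete gadget uses Solver plus one player per clause, with the clause players able to defect to a sink $\blacktriangle$ that is also Solver's target, so that an NE keeping Solver at payoff $0$ exists iff $\phi$ is satisfiable; your variant (one player per variable, with the distinguished player acquiring a shortcut to its target whenever a clause is unsatisfied) is plausibly realizable, but as written nothing enforces the claimed correspondence between NEs and satisfying assignments, so that half would also need to be completed.
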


	\section{Energy games} \label{sec_energy}

Let us now tackle game classes that require additional new techniques: first energy objectives.

	\subsection{Checking problems}

Energy games are the only one of our five classes in which the checking problems cannot be solved in polynomial time (unless $\Poly = \NP$);
except the simplest of them, the deterministic Nash-checking problem. 
Indeed, in that case, only one play must be compared to potential profitable deviations: the outcome deterministically generated by the Mealy machine.

\begin{theorem}[App.~\ref{pf_energy_det_ne_checking}] \label{thm_energy_det_ne_checking}
    In energy games, the deterministic Nash-checking problem can be decided in polynomial time.
\end{theorem}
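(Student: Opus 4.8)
The plan is to exploit the fact that in the deterministic Nash-checking problem, the Mealy machine $\Mach$ is deterministic, so by the remark in Section~\ref{sec_def_Mealy_machine} there is exactly one strategy profile $\bsigma \in \Comp_{\|v_0}(\Mach)$, and it is finite-memory. Hence its outcome $\pi = \<\bsigma\>$ is an ultimately periodic play of the form $\pi = h c^\omega$ where $h$ and $c$ are histories whose length is bounded by $\lv \Game \rv \cdot \lv \Mach \rv$ (at most the number of product states $V \times Q$), and $\pi$ can be computed in polynomial time by running the machine until a state in $V \times Q$ repeats. By the definition of energy objectives, $\mu_i(\pi) = 1$ if and only if $\EL_i(\pi_{\leq n}) \neq \bot$ for all $n$; since $\pi$ is ultimately periodic, this is decidable in polynomial time — one checks that no prefix drops the energy level below $0$, and that the periodic part $c$ does not cause an unbounded accumulation issue (in fact, if the energy never hits $\bot$ along $h$ and one full copy of $c$, and the sum of rewards over $c$ is nonnegative, the level stays nonnegative forever; if the sum over $c$ is negative it will eventually drop to $\bot$, which is again checkable by examining finitely many copies of $c$).

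Having computed $\pi$ and the vector $\mu(\pi)$, the task reduces by Lemma~\ref{lm_ne} to deciding whether $\bsigma$ is an NE, i.e. whether some player $i$ has a profitable deviation. The natural approach is: for each player $i$ and each index $k$ with $\pi_k \in V_i$, decide whether player $i$, deviating from the history $\pi_{\leq k}$ while the coalition $-i$ responds optimally according to $\bsigma_{-i}$, can achieve payoff $1$ when $\mu_i(\pi) = 0$. The crucial simplification is that the opponents' strategies are already \emph{fixed} by the deterministic machine $\Mach$: against a fixed finite-memory coalition strategy, player $i$ faces a one-player optimization problem in the product of the arena with the machine's memory. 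Concretely, first I would check whether $\mu_i(\pi) = 1$ already — if so no deviation is profitable for $i$ (the payoff is maximal). If $\mu_i(\pi) = 0$, I would build the graph in which player $i$'s free moves are interleaved with the deterministic responses dictated by $\Mach$, and ask whether player $i$ can play so as to keep $\EL_i$ nonnegative forever, i.e. whether player $i$ wins a one-player energy (safety) game against a fixed environment.

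The main obstacle, and the reason this step is delicate rather than immediate, is that deciding whether player $i$ can keep the energy level nonnegative forever in this fixed-environment graph is precisely a one-player energy game, whose polynomial-time solvability must be invoked carefully: the winning condition is that there is an infinite path from $\pi_k$ (in the appropriate product graph) along which the running sum of $r_i$ never becomes negative. This is equivalent to reachability of a cycle with nonnegative energy, reachable while staying nonnegative, which can be decided in polynomial time by the standard energy-game machinery (the sufficient credit / minimal initial energy computations, or equivalently a search for a reachable nonnegative cycle avoiding the $\bot$ sink). I would therefore reduce the profitable-deviation test, for each of the polynomially many triples $(i, k)$, to such a one-player energy reachability question on a graph of size polynomial in $\lv \Game \rv$ and $\lv \Mach \rv$, and conclude that the whole procedure runs in polynomial time.

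Putting it together, the algorithm is: compute the unique ultimately periodic outcome $\pi$ and its payoff vector; for each player $i$ with $\mu_i(\pi) = 0$ and each position $\pi_k \in V_i$, solve the associated one-player energy reachability problem against the fixed coalition response extracted from $\Mach$; answer \emph{yes} (the instance is positive, i.e. $\bsigma$ is an NE) precisely when no such profitable deviation exists for any $(i,k)$. Correctness follows from Lemma~\ref{lm_ne}, which characterizes NE outcomes by the absence of such deviations, together with the observation that against the fixed deterministic profile $\bsigma_{-i}$ the optimal punishment coalition is already determined, so the existential quantifier over $\btau^k_{-i}$ collapses and only player $i$'s optimization remains. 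Each subroutine is polynomial, and there are polynomially many of them, giving the claimed polynomial-time bound.
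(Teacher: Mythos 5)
Your high-level reduction is the same as the paper's: the deterministic machine freezes the coalition, so detecting a profitable deviation for player $i$ becomes a one-player energy problem on the product of $\Game$ with $\Mach$, solved by the known polynomial-time algorithm for one-player energy games (the paper cites Theorem~7 of~\cite{DBLP:conf/formats/BouyerFLMS08} for exactly this), and your first step (the lasso outcome and its payoff vector) matches the paper exactly. However, the way you instantiate the deviation test --- one energy game per pair $(i,k)$ with $\pi_k \in V_i$ and $k$ ranging over the lasso --- has a genuine gap, in two places. (a) \emph{Initial credit:} a deviation branching off at $\pi_k$ must be evaluated with initial energy $\EL_i(\pi_{\leq k})$, not $0$; this credit never appears in your proposal, and your phrasing (``an infinite path from $\pi_k$ along which the running sum of $r_i$ never becomes negative'') reads as credit $0$, which makes the test unsound. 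For instance, if the outcome is $v_0 \xrightarrow{+5} a \xrightarrow{-1} a \xrightarrow{-1} \cdots$ with $v_0 \in V_{-i}$ and $a \in V_i$, player $i$ loses the outcome, and his only good deviation at $a$ takes an edge of reward $-3$ to a vertex with a $0$ self-loop: it is winning thanks to the credit $5$ accumulated on the prefix, yet every running-sum-from-$\pi_k$ test rejects it, so your algorithm would wrongly declare $\bsigma$ an NE. (b) \emph{Enumeration:} deviations may branch at position $k + j|c|$ for any $j \geq 1$; these positions share the vertex and machine state of position $k$ but carry a different energy level, so restricting to the polynomially many lasso positions is not automatic. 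It does hold, but only by an argument you do not give: when $\mu_i(\pi) = 0$, either the energy already dies within $hc$ (confining all admissible branch points there), or the cycle sum of $r_i$ is negative, in which case later occurrences have smaller credit and are dominated by the first occurrence, by monotonicity of energy objectives in the initial credit.

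Both obligations evaporate if you use the paper's formulation, which you should adopt: for each player $i$ who loses the outcome, build a single product game on $V \times Q$ whose edges follow $\Delta$ everywhere, except that at vertices of $V_i$ every edge of $E$ is allowed (the machine state still being updated according to $\Delta$), and ask whether there exists \emph{any} play from $(v_0, q_0)$ with initial energy $0$ that player $i$ wins. The projections of the plays of this game are exactly the plays compatible with $\bsigma_{-i}$, so this one query quantifies over all branch points with the correct energy bookkeeping, and $\bsigma$ is an NE if and only if no losing player wins his product game. A smaller point: correctness should be argued from the definition of an NE against the fixed $\bsigma_{-i}$, not from Lemma~\ref{lm_ne}, which characterizes the plays that are outcomes of \emph{some} NE (with existentially quantified punishing profiles) --- the wrong predicate for a checking problem; your test happens to be the right one precisely because you fix the coalition to the machine's, but the appeal to that lemma is misplaced.
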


In the other cases, either the non-determinacy or the need to study subgames entails a $\coNP$ lower bound, that we can prove by reduction from the problem $\SubsetSum$.
The matching upper bound can be obtained by an algorithm that searches for configurations (a vertex, a state of the memory and an energy vector) from which a profitable deviation exists, and that is accessible from the initial configuration.
We can then use the fact that reachability in 1-dimensional vector addition systems with states is $\NP$-easy --- see~\cite{DBLP:conf/concur/HaaseKOW09}.

	\begin{theorem}[App.~\ref{pf_energy_checking}] \label{thm_energy_checking}
		In energy games, the Nash-checking, the subgame-perfect-checking, and the deterministic subgame-perfect-checking problems are $\coNP$-complete.
	\end{theorem}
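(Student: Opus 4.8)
The plan is to prove two separate bounds. For membership in $\coNP$, I would show that negative instances --- those where some $\bsigma \in \Comp_{\|v_0}(\Mach)$ fails to be an equilibrium --- admit a polynomial-size certificate. By Corollary~\ref{cor_privilege_problem}, each checking problem reduces to the privilege problem on the appropriate deviation game, so I need to certify the existence of a play $\pi$ with $\mu_\AA(\pi) < \mu_\EE(\pi)$. In an energy deviation game this means a play on which Eve's energy level stays nonnegative forever (so $\mu_\EE = 1$) while Adam's eventually drops below zero (so $\mu_\AA = 0$). The key structural observation is that such a play can be taken to have a lasso shape: a simple prefix reaching some configuration $(v, q, \vec{e})$ (vertex, memory state, and current energy vector for both $\AA$ and $\EE$), followed by a simple cycle on which Eve's reward sum is nonnegative. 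The certificate is the witnessing configuration from which a profitable deviation exists, together with the reachability witness; searching for it amounts to a reachability question in a low-dimensional vector addition system with states. Invoking the $\NP$-easiness of reachability in $1$-dimensional VASS from~\cite{DBLP:conf/concur/HaaseKOW09}, the whole guess-and-check procedure runs in nondeterministic polynomial time in the complement, placing the checking problems in $\coNP$.

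For the lower bound, I would reduce from $\SubsetSum$, which asks whether a subset of a given multiset of integers sums to a prescribed target. The idea is to build an energy game in which choosing an element to include or exclude corresponds to a binary branching gadget, each branch adjusting the energy level by the element's value or by zero. A profitable deviation should exist precisely when some subset hits the target, so that the instance is a \emph{negative} instance of the checking problem (a strategy profile in $\Comp_{\|v_0}(\Mach)$ that is not an equilibrium) if and only if the $\SubsetSum$ instance is positive. Since $\SubsetSum$ is $\NP$-complete, this makes the checking problem $\coNP$-hard. I would construct three variants of this gadget to cover the three cases named in the theorem: the non-deterministic Nash-checking problem (where the nondeterminism of $\Mach$ itself supplies the subset choices), the subgame-perfect-checking problem, and the deterministic subgame-perfect-checking problem (where the need to examine arbitrary subgames, rather than a single outcome, re-introduces the branching even when $\Mach$ is deterministic).

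The subtlety separating these cases --- and the reason the deterministic Nash-checking problem escapes to polynomial time in Theorem~\ref{thm_energy_det_ne_checking} while these three do not --- is worth emphasizing in the hardness construction. In the deterministic Nash setting only the single outcome $\<\bsigma\>$ must be tested, so there is no combinatorial choice for the reduction to exploit; in each of the three remaining cases, either the machine's nondeterminism or the universal quantification over subgames provides exactly the exponential family of candidate plays that the subset choices can be encoded into. I would design a single arena serving all three reductions, toggling between them by whether the choice vertices are controlled by the deviating player and whether they appear inside a reachable subgame.

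I expect the main obstacle to be the upper bound, specifically the careful reduction of the privilege problem to VASS reachability. Energy levels are unbounded, so the deviation game's configuration space is infinite, and one must argue that the search for a profitable-deviation configuration can be confined to a VASS reachability instance of polynomially bounded dimension and magnitude; this requires pinning down exactly what information about the energy vectors of $\AA$ and $\EE$ must be tracked, and showing that the nonnegativity constraint for Eve together with the eventual-negativity goal for Adam can be phrased as a single reachability target. Handling both players' energy levels simultaneously --- one that must never fail and one that must eventually fail --- while staying within the $1$-dimensional VASS result of~\cite{DBLP:conf/concur/HaaseKOW09} (rather than a higher-dimensional model whose reachability is far harder) is the delicate point; I would likely decouple the two constraints, treating Eve's nonnegativity via a structural invariant on the chosen lasso and reserving the genuine counter for Adam's failure.
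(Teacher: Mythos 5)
Your hardness plan coincides with the paper's: a \SubsetSum{} arena in which one player's include/exclude choices accumulate the candidate sum in the other player's energy level, followed by a two-way exit whose rewards $-t$ and $-t-1$ make a deviation profitable exactly when the accumulated energy equals $t$; the machine's non-determinism supplies the subset choices for Nash-checking, and the quantification over subgames supplies them for (even deterministic) subgame-perfect checking, which is precisely the paper's Figure~\ref{fig_G_subsetsum} with its two Mealy machines, and precisely the contrast with Theorem~\ref{thm_energy_det_ne_checking} that you point out. That half of the proposal is sound.

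The genuine gap is in the upper bound. Routing through Corollary~\ref{cor_privilege_problem} commits you to solving the privilege problem on an energy \emph{deviation game}, and there the two energy constraints are simultaneous: after the branch point, every edge of the single play updates both Adam's and Eve's counters, and on the entire segment between the branch point and the step where Adam's counter fails, Eve's counter must stay nonnegative while Adam's must be tracked so as to certify its eventual failure. That is a two-counter question (one $\NN$-constrained counter plus a second tracked counter with a terminal negativity test), which is \emph{not} an instance of the one-counter reachability problem of~\cite{DBLP:conf/concur/HaaseKOW09}; your proposed repair --- pushing Eve's nonnegativity into ``a structural invariant on the chosen lasso'' --- does not close it, because Eve's constraint binds on that middle segment and on the prefix, not only on the cycle that is ultimately repeated (and note that a ``simple'' prefix cannot exist anyway, since reaching the relevant configurations may require pumping energy up to values exponential in the encoding). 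Indeed, the paper warns that the privilege problem may be strictly harder than checking, and energy games are exactly the class where it abandons Corollary~\ref{cor_privilege_problem} for the upper bound. Its actual argument works in the product graph $G_i$ of the game and the machine, tracking \emph{only} the potential deviator $i$'s energy: the prescribed and the deviating plays coincide up to the branch point (so one counter suffices there), and after the branch point they are two \emph{independent} continuations, each carrying a single energy condition that can be precomputed as a threshold --- $a(w)$, the least initial credit from which player $i$ can win, and $b(w)$, the greatest initial credit from which the machine-compliant continuation loses --- in polynomial time via~\cite{DBLP:conf/formats/BouyerFLMS08}. Only then does a single succinct one-counter reachability query (reach some $(v,q)$ with energy $n$ admitting distinct successors $w \neq w'$ with $n + r((v,q)w) \leq b(w)$ and $n + r((v,q)w') \geq a(w')$) invoke the $\NP$ result of~\cite{DBLP:conf/concur/HaaseKOW09}. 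Without this decoupling idea, your $\coNP$ membership claim is unsupported.
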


\subsection{Nash rational verification}

Rational verification problems are undecidable in this class, as we will show by reduction from the halting problem of two-counter machines (the reader who is not familiar with those machines may refer to App.~\ref{app_two_counter_machines}).
However, Nash rational verification is co-recursively enumerable.

\begin{theorem}[App.~\ref{pf_energy_nash_verif}] \label{thm_energy_nash_verif}
    In energy games, the  Nash rational verification problem, deterministic or not, is undecidable and co-recursively enumerable.
\end{theorem}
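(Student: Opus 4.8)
The plan is to establish the two halves of the statement separately: undecidability via a reduction from the halting problem for two-counter (Minsky) machines, and membership in co-RE via a semi-decision procedure for the negation. By Corollary~\ref{cor_reductions}, it suffices to reason about the $\Nash$-UT problem in energy games, since the three formulations are interreducible in polynomial time. So I would state everything in terms of: given an energy game, a player $i$, and a threshold $t$, is every NE $\bsigma$ such that $\mu_i(\< \bsigma \>) > t$? Since energy payoffs are Boolean ($0$ or $1$), the only meaningful threshold separates $\mu_i = 0$ from $\mu_i = 1$, so the negation asks for the existence of an NE in which player $i$ loses, i.e. runs out of energy. The negation is thus an existence-of-bad-NE question, which is the natural target for both the reduction and the semi-decision procedure.

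For the undecidability direction I would encode a deterministic two-counter machine $\Mach$ into an energy game so that a faithful simulation of a halting computation corresponds to a losing NE for the designated player, while non-halting computations force that player to win. The standard difficulty with counter machines in an energy setting is that energy levels are one-dimensional and monotone in a restricted way, whereas a two-counter machine needs two independently testable counters with zero-tests. I would therefore use the two auxiliary objectives available to the other players to police the simulation: Leader (or environment players) are given energy objectives whose reward mappings track the counter values, and the equilibrium condition is used to rule out cheating, so that the only rational (NE) behaviour of the environment is to simulate the machine correctly. The key gadget is to make any attempt to falsify a zero-test or to mis-increment/decrement a counter into a \emph{profitable deviation} for some player, so that faithful simulation is forced by the NE requirement rather than by the graph structure alone. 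Reaching the halting state of $\Mach$ should be arranged to coincide with the designated player's energy dropping to $\bot$ (payoff $0$), so that $\Mach$ halts if and only if there exists an NE with $\mu_i = 0$, i.e. if and only if the UT instance is negative. The hardest part will be getting the zero-tests right: I expect to need to exploit Lemma~\ref{lm_ne}, using the punishment strategies $\btau^k_{-i}$ to guarantee that a player who lies about a counter being zero can be punished down to a losing payoff, whereas honest play cannot be punished, so that the NE outcomes are exactly the honest simulations.

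For the co-RE direction I would give a semi-decision procedure that halts and accepts precisely on the negative instances, i.e. when there exists an NE outcome $\pi$ with $\mu_i(\pi) = 0$. By Lemma~\ref{lm_ne}, an NE outcome is characterized by the existence, for each player $j$ and each occurrence $\pi_k \in V_j$, of punishing profiles $\btau^k_{-j}$ bounding player $j$'s best response value by $\mu_j(\pi)$. The plan is to enumerate candidate finite prefixes together with a finite description of the eventual behaviour: since player $i$ loses exactly when $\EL_i(\pi_{\le n}) = \bot$ for some finite $n$, a witness to $\mu_i = 0$ only needs a finite prefix reaching energy $\bot$ for player $i$, after which the play can be completed arbitrarily subject to the NE constraints of the other players. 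The procedure enumerates such finite prefixes and, for each, checks that it can be extended to a genuine NE outcome by verifying the punishment condition of Lemma~\ref{lm_ne} at every visited vertex; this verification reduces to solving finitely many two-player energy games (one player against the coalition), each of which is decidable. If some prefix passes, accept. The main obstacle in this direction is confirming that the NE constraints of the \emph{other} players can always be met over the infinite suffix using only finitely checkable data; I would handle this by observing that the relevant best-response values in energy games are determined by reachability of a $\bot$-configuration in an associated one-counter / VASS structure, which is decidable, so each candidate check terminates, and the overall enumeration is a correct semi-decision procedure for the complement, giving co-recursive enumerability.
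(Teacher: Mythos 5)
Your undecidability half follows the same strategy as the paper (reduce from the halting problem of a two-counter machine, let energy levels carry the counter values, and use the NE condition to police the simulation), but the reduction itself is never constructed, and the one mechanism you do commit to would fail. You propose that ``a player who lies about a counter being zero can be punished down to a losing payoff'' via the punishment profiles of Lemma~\ref{lm_ne}. That has the wrong shape for a fake zero-test: the liar's own energy level \emph{is} the counter, and at the moment of the lie it is strictly positive, so no continuation can make the liar lose on account of the lie --- energy objectives can only penalize low energy, never high energy, and punishment acts on the future, not on the fact that the counter was positive. To destroy NE-ness of a lying play you need \emph{some} player to have a profitable deviation, and since the liar is perfectly content, it must be somebody else. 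This is exactly what the paper's gadget supplies: each counter $C$ is tracked by \emph{two} players, $C^\top$ (who resolves the test) and $C^\bot$ (who controls the claimed-zero branch), and $C^\bot$ has an escape edge of reward $-1$ into a winning sink $\blacktriangle$. The escape is survivable precisely when the counter is positive, and it is profitable because on the honest halting path $C^\bot$ loses at $q_\f$; at an honest zero-test the escape is suicide, so honest runs remain NEs. (Faking a \emph{positive} test is handled by your mechanism: there the liar $C^\top$ herself loses and would prefer to loop at the initial vertex.) Without this two-players-per-counter device, or an equivalent one, there is no reduction; stating the goal ``make cheating a profitable deviation'' is not the construction.

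Your co-RE half, by contrast, takes a genuinely different and viable route. The paper first proves that if some NE makes player $i$ lose then a finite-memory one does (Dickson's lemma to close a lasso, plus memoryless punishment strategies), and then enumerates deterministic Mealy machines, deciding NE-ness by Theorem~\ref{thm_energy_checking} and semi-deciding ``player $i$ loses'' on the fly. You instead enumerate finite prefixes: since in energy games every losing player loses at a finite time, a bad NE is witnessed by a prefix in which all losers (including $i$) have already lost, together with (a) the punishment conditions of Lemma~\ref{lm_ne} for each loser at each of their pre-loss vertices --- finitely many coalition-versus-player energy games with fixed initial credit, each decidable --- and (b) the existence of an infinite continuation keeping every surviving player's energy nonnegative, a non-termination question for a multi-dimensional VASS, also decidable. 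That skeleton is sound and complete and avoids the finite-memory lemma entirely. Two details need tightening, though: the enumeration must range over prefixes in which \emph{all} losers have lost, not just player $i$ (otherwise the suffix check is either unsound or misses witnesses where other players lose late), and the suffix check is a multi-dimensional infinite-run problem, not ``reachability of a $\bot$-configuration'' as you describe it.
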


\begin{proof}[Proof sketch]
    We prove here that the Nash UT problem is undecidable and co-recursively enumerable.
    The theorem will follow by Corollary~\ref{cor_reductions}.

\begin{itemize}
    \item \emph{Undecidability.}
    We show undecidability by reduction from the halting problem of a two-counter machine.
    Let $\Kount$ be a two-counter machine.
	We define an energy game $\Game_{\|q_0^1}$ with five players --- players $C_1^\top$, $C_1^\bot$, $C_2^\top$, $C_2^\bot$, and $\WW$, called \emph{Witness} --- by assembling the gadgets presented in Figure~\ref{fig_gadget} --- the rewards that are not presented are equal to $0$, and the players controlling relevant vertices are written in blue.
	%For each state of $\Kount$, we define from one to two vertices, plus the additional vertex $\blacktriangle$.
	Then, a play in $\Game_{\|v_0}$ that does not reach the vertex $\blacktriangle$ simulates a sequence of transitions of $\Kount$, that can be a valid run or not: at each step, the counter $C_i$ is captured by the energy level of player $C_i^\top$, always equal to the energy level of player $C_i^\bot$.
	For each counter $C_i$, the player $C_i^\bot$ will have a profitable deviation if that play fakes a test to $0$, by going to the vertex $\blacktriangle$; and the player $C_i^\top$ will lose, and therefore have a profitable deviation by staying in $q_0^i$ if it fakes a positive test.
 Thus, as shown in the complete version of this proof, every NE outcome in the game $\Game_{\|q_0^1}$ is won by Witness if and only if the machine $\Kount$ does not terminate.
    As a consequence, the halting problem of two-counter machines reduces to the Nash UT problem in energy games, which is therefore undecidable.
	
\begin{figure}
    \centering
    \begin{subfigure}[b]{0.24\textwidth}
			\centering
			\begin{tikzpicture}[->,>=latex,shorten >=1pt, initial text={}, scale=0.8, every node/.style={scale=0.7}]
				\node (1) at (3, 0) {};
				\node[state, initial left] (q0) at (0, 0) {$q_0^1$};
                \node[state] (q0') at (2, 0) {$q_0^2$};
                \node[blue] (pq0) at (0, -0.6) {$C_1^\top$};
                \node[blue] (pq0') at (2, -0.6) {$C_2^\top$};
				\path (q0) edge (q0');
                \path (q0') edge (1);
				\path (q0) edge[loop above] (q0);
                \path (q0') edge[loop above] (q0');
			\end{tikzpicture}
			\caption{Initial state}
			\label{fig_gadget_initial}
		\end{subfigure}
		\begin{subfigure}[b]{0.2\textwidth}
			\centering
			\begin{tikzpicture}[->,>=latex,shorten >=1pt, initial text={}, scale=0.8, every node/.style={scale=0.7}]
				\node[state, initial above] (qf) at (0, 0) {$q_\f$};
				\path (qf) edge[loop right] node[right] {$\stack{C_1^\bot}{-1}\,\stack{C_2^\bot}{-1}\,\stack{\WW}{-1}$} (qf);
			\end{tikzpicture}
			\caption{Final state}
			\label{fig_gadget_final}
		\end{subfigure}
		\begin{subfigure}[b]{0.2\textwidth}
			\centering
			\begin{tikzpicture}[->,>=latex,shorten >=1pt, initial text={}, scale=1, every node/.style={scale=0.8}]
				\node[state, initial left] (q) at (0, 0) {$q$};
				\node (2) at (1, 0) {};
				\path (q) edge node[above] {$\stack{C^\top}{1}\,\stack{C^\bot}{1}$} (2);
			\end{tikzpicture}
			\caption{Incrementations}
			\label{fig_gadget_incrementation}
		\end{subfigure}
	\begin{subfigure}[b]{0.3\textwidth}
		\centering
		\begin{tikzpicture}[->,>=latex,shorten >=1pt, initial text={}, scale=1, every node/.style={scale=0.8}]
            \newcommand{\deltay}{0.5}
            %\useasboundingbox (-1.2, 2.1*\deltay) rectangle (2, -0.75*\deltay);
			\node (1) at (-1, 0) {};
			\node[state] (q) at (0, 0) {$q$};
			\node (2) at (2, 0*\deltay) {(if $C > 0$)};
			\node[state] (q') at (0, 2*\deltay) {$q'$};
			\node (3) at (2, 2*\deltay) {(if $C = 0$)};
			\node[state] (s) at (-1.2, 2*\deltay) {$\blacktriangle$};
   \node[blue] (pq) at (-0.4, -0.4) {$C^\top$};
   \node[blue] (pq') at (-0.4, 2*\deltay-0.4) {$C^\bot$};
			
			\path (1) edge (q);
			\path (q) edge node[above] {$\stack{C^\top}{-1}\,\stack{C^\bot}{-1}$} (2);
			\path (q) edge (q');
			\path (q') edge (3);
			\path (q') edge node[above] {$\stack{C^\bot}{-1}$} (s);
			\path (s) edge [loop left] (s);
		\end{tikzpicture}
		\caption{Tests}
		\label{fig_gadget_test}
	\end{subfigure}
    \caption{Gadgets}
    \label{fig_gadget}
\end{figure}

    \item \emph{Co-recursive enumerability.}
    As shown in the complete version of this proof, in an energy game $\Game_{\|v_0}$, if there exists an NE that makes some player $i$ lose, then there exists a finite-memory one.
    Thus, a semi-algorithm that recognizes the negative instances of the UT problem consists in enumerating the finite-memory complete strategy profiles on $\Game_{\|v_0}$, and for each of them, to check (by diagonalization):
    \begin{itemize}
        \item whether it is an NE: that is decidable (in polynomial time), by Theorem~\ref{thm_energy_checking};

        \item whether it makes player $i$ lose: that is recursively enumerable, by constructing step by step its outcome and computing the energy levels on the fly.
    \end{itemize}
    We have a negative instance of the UT problem if and only if at least one finite-memory strategy profile satisfies those two conditions.
    The Nash UT problem is therefore co-recursively enumerable. \hspace{1em plus 1fill}\qedhere
\end{itemize}
\end{proof}

\subsection{Subgame-perfect rational verification}

In the subgame-perfect setting, the previous construction could also prove undecidability.
But we choose to present a refinement of it, that proves a stronger result.

	\begin{theorem}[App.~\ref{pf_energy_sp_verif}] \label{thm_energy_sp_verif}
		In energy games, the subgame-perfect rational verification problem, deterministic or not, is undecidable, even when Leader plays against only two players.
	\end{theorem}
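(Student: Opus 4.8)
The plan is to reduce the halting problem of (deterministic) two-counter machines to the subgame-perfect rational verification problem, refining the five-player construction of Theorem~\ref{thm_energy_nash_verif} so that it uses only Leader together with two further players. Given a two-counter machine $\Kount$, I would build an energy game $\Game_{\|v_0}$ with player set $\{\LL, 1, 2\}$ out of the same kind of gadgets as in Figure~\ref{fig_gadget}: a token travels through instruction gadgets reflecting the (deterministic) control flow of $\Kount$, the value of counter $C_i$ is recorded by the energy level of player $i$, and reaching the halt gadget $q_\f$ is the unique losing event for Leader, who plays the role of Witness and controls no vertex. Since Leader controls nothing, its only strategy is compatible with the trivial one-state Mealy machine, so the $\LL$-fixed subgame-perfect responses to it are exactly the SPEs of $\Game_{\|v_0}$, and the (deterministic and non-deterministic) rational verification instance asks precisely whether every SPE outcome is won by Leader. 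The whole reduction then amounts to proving that every SPE outcome of $\Game_{\|v_0}$ is won by Leader if and only if $\Kount$ does not halt.

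The key new ingredient is the gadget redesign that halves the number of counter-players. In the Nash construction each counter needs a $\top$-copy, to detect a faked positive test (a decrement performed while the energy is $0$), and a $\bot$-copy, to detect a faked zero test (the $C_i = 0$ branch taken while the energy is positive). Here I would merge the two copies of counter $C_i$ into the single player $i$ and exploit that an SPE must be a Nash equilibrium in every subgame, not only along its own outcome. Concretely, at each test gadget player $i$ controls the branch vertices: a faked positive test sends its energy to $\bot$, so in the subgame at that test player $i$ has the credible profitable deviation of refusing the decrement, which subgame-perfection forbids on the outcome; a faked zero test is punished by an escape edge toward a sink $\blacktriangle$ whose rewards are calibrated to be profitable for player $i$ exactly when its energy is positive. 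Because the subgame-perfect requirement quantifies over all subgames, a single checker per counter now rules out both kinds of cheating, even on sub-histories lying off the honest outcome.

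Correctness then splits into two directions. If $\Kount$ halts, I would take the unique honest run, which reaches $q_\f$ and on which Leader loses, and complete it into an $\LL$-fixed SPE by equipping players $1$ and $2$ with punishing continuation strategies; subgame-perfection of this profile is certified by the characterization of equilibrium outcomes in subgames (in the spirit of Lemma~\ref{lm_ne}), checking that no test-deviation is profitable in any subgame. Hence there is an SPE outcome lost by Leader, and the instance is negative. Conversely, if $\Kount$ does not halt, I would argue that every SPE outcome must simulate $\Kount$ faithfully: any faked test would, in the very subgame where it occurs, hand the responsible counter-player a profitable deviation, contradicting subgame-perfection; the outcome is therefore the honest infinite run, which never reaches $q_\f$, so Leader wins and the instance is positive. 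Combining the two directions yields undecidability, with Leader playing against exactly two players; the reformulation between the universal-threshold and rational verification phrasings, if needed, is provided by Corollary~\ref{cor_reductions} and Theorem~\ref{thm_product_game}.

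The main obstacle I anticipate is the faithful-simulation argument with only one checker per counter. I must guarantee, for every subgame reached after a faked test, that player $i$ genuinely has a profitable deviation whose punishment is itself credible, while simultaneously ensuring that on honest runs no player has a profitable deviation in any subgame, so that the honest (halting or infinite) runs survive as SPE outcomes. Calibrating the escape and sink rewards so that the relevant deviation is profitable precisely when the corresponding energy level is positive --- adding enough escape edges to punish every form of cheating without creating spurious profitable deviations on correct runs --- is the delicate point, and is exactly where the extra leverage of subgame-perfection over Nash equilibrium must be spent.
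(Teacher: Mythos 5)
Your overall skeleton is the paper's: a reduction from the halting problem of a two-counter machine $\Kount$, with the two counters tracked by the energy levels of two players, Leader added as a dummy Witness controlling no vertex, and the passage between the universal-threshold and rational-verification formulations handled by Corollary~\ref{cor_reductions}. The gap lies in the only place where this theorem has genuine technical content (it is one of the paper's starred contributions): the design of the two-player test gadgets, and the mechanism you sketch cannot work. You make player $i$ both the \emph{faker} (she controls the branch vertices of counter $i$'s tests) and the \emph{checker} (she controls the escape edge on its zero branch, calibrated to be ``profitable for player $i$ exactly when its energy is positive''). But profitability of a deviation is always relative to the deviator's payoff on the outcome under scrutiny, which forces a dilemma. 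If players $1$ and $2$ win at $q_\f$ (only Leader loses there, as you stipulate), then an outcome that fakes a zero test of counter $i$ and proceeds to $q_\f$ already gives player $i$ payoff $1$; escaping to $\blacktriangle$ also gives her $1$, so the escape is not a \emph{strictly} profitable deviation, the unfaithful outcome survives as an SPE outcome, and the direction ``$\Kount$ does not halt $\Rightarrow$ every SPE is won by Leader'' fails (e.g.\ for the machine of Figure~\ref{fig_counter_machine2}, faking the zero test at the first visit reaches $q_\f$ and no player can strictly improve). If instead you make player $i$ lose at $q_\f$ so that her escape becomes strictly profitable, then the honest halting run is no longer an SPE: at an honest \emph{positive} test she can deviate to the zero branch and then escape — both vertices are hers, so no opponent strategy can interfere — winning instead of losing at $q_\f$; now the direction ``$\Kount$ halts $\Rightarrow$ some SPE is lost by Leader'' collapses. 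A single self-checking player per counter cannot simultaneously be the designated loser (so that fakes are punished) and a winner (so that the honest run stays in equilibrium).

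The paper resolves this with an \emph{asymmetric}, cross-checking design rather than a symmetric one. Player $C_1$ controls the test branchings of \emph{both} counters; player $C_2$ is the unique designated loser at $q_\f$ and performs all the checking; and the escapes are multi-step chains alternating control between the two players ($q' \to a \to \{\blacktriangle,\blacktriangledown\}$ for tests of $C_1$, and $q''' \to b \to c \to \{\blacktriangle,\triangleone\}$ for tests of $C_2$, with sinks losing for both players, for neither, or for $C_1$ alone). The alternation is what lets a reward calibration ``read'' the opponent's energy: after $C_2$ deviates from $q'$ to $a$, player $C_1$ credibly goes to $\blacktriangle$ precisely when her own energy is positive, because that is then her only winning move, so the profitability of $C_2$'s escape is governed by $C_1$'s energy through $C_1$'s subsequent credible choice — something a one-step escape to a sink controlled by the tested player cannot achieve. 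On top of the gadget design, the paper must still exhibit the honest profile explicitly (with energy-dependent, hence infinite-memory, punishing strategies in every subgame, including off-path ones where energy levels may be $\bot$) and verify the NE property case by case in each subgame; Lemma~\ref{lm_ne} characterizes NE outcomes only, so invoking it ``in the spirit of'' does not discharge that verification. In short, your last paragraph correctly names the delicate point, but the mechanism you offer to address it is exactly the part that fails, and repairing it amounts to reconstructing the paper's asymmetric construction and its credibility analysis.
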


Again, the proof shows that, in particular, that problem is not recursively enumerable in energy games.
It might still be the case that it is co-recursively enumerable.
That would in particular be the case if finite memory was sufficient for an SPE to make any player $i$ lose, when that is possible, as in the case of NEs.
Unfortunately, one cannot follow this approach, because that statement is false: in order to be able to punish some player, without making another player lose, an SPE may have to memorize their energy levels, and therefore require infinite memory, as it will be the case in the example that follows.
We leave therefore the question open.

	\begin{proposition}[App.~\ref{pf_energy_infinite_memory}] \label{pptn_energy_infinite_memory}
		In the energy game presented in Figure~\ref{fig_energy_infinite_memory}, there exists an SPE that makes player $\Box$ lose, but no finite memory SPE can achieve that result.
	\end{proposition}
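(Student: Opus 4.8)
The plan is to establish the two halves of the statement separately: exhibit an infinite-memory SPE whose outcome makes player $\Box$ lose, and then rule out any finite-memory SPE by a pumping argument. For the positive claim I would construct an explicit strategy profile $\bsigma$ on the game of Figure~\ref{fig_energy_infinite_memory} whose main outcome forces $\EL_\Box$ to hit $\bot$, and whose punishment scheme is designed so as to remain credible in every subgame. The delicate design point is that along any deviation in which $\Box$ escapes into a ``charging'' region and accumulates a large energy level, the other players must answer with a phase that drains $\Box$'s energy all the way back to $\bot$ \emph{without} driving their own energy below $0$: the moment at which they switch between draining $\Box$ and protecting themselves is read off the current energy vector, which is precisely why the profile needs unbounded memory. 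To verify that $\bsigma$ is an SPE I would apply the definition directly, checking for every history $h$ that $\bsigma_{\|h}$ is an NE of $\Game_{\|h}$, using Lemma~\ref{lm_ne}: player $\Box$ already has payoff $0$ and so cannot improve by deviating, while every other player keeps a nonnegative energy level along every induced outcome and hence has no profitable deviation either.

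For the negative claim I would argue by contradiction. Suppose $\btau$ is a finite-memory SPE, realised by a Mealy machine with finite state set $Q$, whose outcome makes $\Box$ lose. Because the machine is finite, the actions prescribed by $\btau$ after a history $h$ depend only on the pair $(\last(h), q_h)$ and not on the actual energy vector, which is unbounded. Subgame-perfection is used crucially here: consider the subgame reached after $\Box$ deviates into the charging region and builds up an energy level $N$. To stay an SPE the continuation must simultaneously (i) still drive $\Box$ to $\bot$ --- otherwise $\Box$ would obtain payoff $1$, giving a profitable deviation from the losing main outcome and contradicting that $\btau_{\|h}$ is an NE of the subgame --- and (ii) keep every other player's energy nonnegative --- otherwise some such player, being asked to play itself into $\bot$, would have a profitable deviation, so $\btau$ would fail to be an SPE. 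The position at which the strategy must switch from draining $\Box$ to sparing the others grows with $N$, but a finite machine can only realise this switch at one of finitely many state-determined configurations; choosing $N$ larger than that bound makes (i) and (ii) jointly unsatisfiable, the desired contradiction.

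I expect the impossibility half to be the main obstacle, and specifically the step making precise that finite memory cannot implement an energy-dependent threshold. The invariant to pin down is that, along any run segment of the machine, the reserve a player can guarantee is bounded by the net accumulation over a simple (cycle-free) segment plus bounded per-cycle contributions; to exceed any fixed bound the machine must repeat a $(\text{vertex},\text{state})$ pair, and the resulting cycle, when pumped to raise one player's reserve, also pumps $\Box$'s energy, so constraints (i) and (ii) remain in tension no matter how the finite machine behaves. Formalising this pumping argument is where the real work lies; the existence half is comparatively routine once the reserve-and-threshold schedule has been fixed, the only care being the verification of subgame-perfection in the subgames entered after $\Box$'s deviations.
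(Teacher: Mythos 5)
Your overall plan coincides with the paper's: exhibit an explicit infinite\hyp{}memory SPE, then rule out finite\hyp{}memory SPEs by a pigeonhole argument on machine states, the point in both cases being that the punisher must stop the draining phase at a moment that depends on an unbounded energy value. However, two of your steps, as stated, would not go through. In the positive half, your verification ``player $\Box$ already has payoff $0$ and so cannot improve by deviating'' is backwards: payoff $0$ is the \emph{worst} payoff in an energy game, so improvement to payoff $1$ is exactly what has to be excluded. The paper's actual check is that in every subgame, player $\Circle$'s continuation still drives $\Box$ and $\Diamond$ to $\bot$: her profile routes to $d$ through whichever of $b,c$ was not just used for a refusal, and at $d$ she takes the loop $dd$ exactly $\EL_{\Circle}(hd)$ times before exiting to $e$. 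Since the edges $b\to d$ and $c\to d$ pay $+1$ to $\Circle$ only, her energy at $d$ always exceeds that of $\Box$ and $\Diamond$ by one, so this count ruins both of them while she survives; moreover any \emph{unilateral} refusal by $\Box$ is met by $\Diamond$'s cooperation (and vice versa), so every deviation still ends in the drain. Note also that the drain is not selective, as your sketch suggests: the loop at $d$ costs all three players equally, and it is the $+1$ head start that makes the punishment credible.

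In the negative half there is a more serious gap: your constraint (i) is justified by a one\hyp{}shot deviation argument (``otherwise $\Box$ would obtain payoff $1$, giving a profitable deviation''), but $\Box$ cannot unilaterally steer the play into the pumped\hyp{}up subgames. Raising the energies requires $\Circle$ to keep re\hyp{}entering $b$ or $c$, and whenever her strategy enters $c$ it is $\Diamond$, not $\Box$, who must return to $a$; so the high\hyp{}energy history is not reachable by a deviation of $\Box$ alone. The paper closes this by an induction on the number of returns to $a$: for \emph{every} history $ha$ compatible with $\tau_{\Circle}$, both $\mu_\Box(\<\btau_{\|ha}\>)$ and $\mu_\Diamond(\<\btau_{\|ha}\>)$ equal $0$, each inductive step applying subgame\hyp{}perfection to a return\hyp{}to\hyp{}$a$ deviation by \emph{whichever} of $\Box,\Diamond$ controls the current vertex, and using crucially that $\Box$ and $\Diamond$ have identical reward functions, so that a continuation which punishes one automatically punishes the other. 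Once this is in place, the contradiction is sharper than your ``jointly unsatisfiable (i) and (ii)'': take the history of $n$ round trips, where $n$ is the number of states of $\Mach$, so that all energies equal $n$; the continuation must then take the loop $dd$ more than $n$ times, hence some machine state repeats at $d$ and the deterministic machine loops on $d$ forever; but then $\Circle$ herself is ruined and has a profitable deviation at $d$ (exit to $e$ while her energy is still nonnegative), contradicting subgame\hyp{}perfection. So the tension is not between draining $\Box$ and sparing the \emph{other} players, but between draining $\Box$ and the punisher's own survival --- and it is the punisher's deviation that delivers the contradiction.
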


		\begin{figure}
			\centering
			\begin{tikzpicture}[->,>=latex,shorten >=1pt, initial text={}, scale=0.8, every node/.style={scale=0.7}]
   %\useasboundingbox (0, 1.1) rectangle (6, -1.1);
				\node[state, initial left] (a) at (0, 0) {$a$};
				\node[state, rectangle] (b) at (2, 1) {$b$};
				\node[state, diamond] (c) at (2, -1) {$c$};
				\node[state] (d) at (4, 0) {$d$};
				\node[state] (e) at (6, 0) {$e$};
				\path (a) edge[bend right=20] (b);
				\path (b) edge[bend right=20] node[above left] {$\stackrel{\playcircle}{1}\stackrel{\Box}{1} \stackrel{\Diamond}{1}$} (a);
				\path (a) edge[bend left=20] (c);
				\path (c) edge[bend left=20] node[below left] {$\stackrel{\playcircle}{1}\stackrel{\Box}{1} \stackrel{\Diamond}{1}$} (a);
				\path (b) edge node[above] {$\stackrel{\playcircle}{1}$} (d);
				\path (c) edge node[below] {$\stackrel{\playcircle}{1}$} (d);
				\path (d) edge [loop above] node {$\stackrel{\playcircle}{-1} \, \stackrel{\Box}{-1} \, \stackrel{\Diamond}{-1}$} (d);
				\path (d) edge (e);
				\path (e) edge [loop right] (e);
			\end{tikzpicture}
			\caption{A game where infinite memory is necessary to make player $\Box$ lose}
			\label{fig_energy_infinite_memory}
		\end{figure}
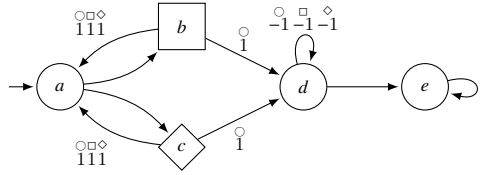

		\section{Discounted-sum games} \label{sec_ds}

	\subsection{Checking problems}

We will now move to discounted-sum objectives.
As in mean-payoff games, the privilege problem, and therefore the checking problems, can be solved in polynomial time by a Bellman-Ford-like algorithm.

	\begin{theorem}[App.~\ref{pf_ds_checking}] \label{thm_ds_checking}
		In discounted-sum games, the Nash-checking and the subgame-perfect checking problems, deterministic or not, can be decided in polynomial time.
	\end{theorem}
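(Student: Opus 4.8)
The plan is to reuse the machinery already in place: by Corollary~\ref{cor_privilege_problem}, all four variants of the checking problem (Nash or subgame-perfect, deterministic or not) reduce in polynomial time to the privilege problem in the class of discounted-sum games, so it suffices to decide the latter in polynomial time. Fix a two-player discounted-sum game with reward mappings $r_\AA, r_\EE$ and discount factor $\lambda \in (0,1) \cap \QQ$. Since $\mu_\AA(\pi) = \DS^\lambda_{r_\AA}(\pi)$ and $\mu_\EE(\pi) = \DS^\lambda_{r_\EE}(\pi)$ depend linearly on the edge rewards, setting $r := r_\EE - r_\AA$ yields $\mu_\EE(\pi) - \mu_\AA(\pi) = \DS^\lambda_r(\pi)$ for every play $\pi$. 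Hence the privilege problem has a positive answer exactly when $\sup_\pi \DS^\lambda_r(\pi) \leq 0$, the supremum ranging over all plays from $v_0$.

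The whole task thus reduces to computing $\val(v_0) := \sup_\pi \DS^\lambda_r(\pi)$ and testing its sign. First I would observe that this value function satisfies the Bellman optimality equation
$$\val(v) = \max_{vw \in E}\bigl(r(vw) + \lambda\,\val(w)\bigr),$$
and that, because $0 < \lambda < 1$ and the rewards are bounded, the associated operator is a $\lambda$-contraction for the supremum norm; it therefore admits a unique fixed point, which is $\val$, and the optimum is attained by a positional (memoryless) strategy, i.e.\ by selecting one maximizing outgoing edge at each vertex. In particular the supremum is a maximum, so the test $\val(v_0) \leq 0$ is well-defined and answers the privilege problem.

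It remains to compute $\val$ \emph{exactly} in polynomial time. The clean way is linear programming: $\val$ is the least vector satisfying $\val(v) \geq r(vw) + \lambda\,\val(w)$ for every edge $vw$, so it is obtained by minimizing $\sum_v \val(v)$ subject to those (polynomially many) linear constraints. Since $\lambda$ and the rewards are rationals of size polynomial in $\lv \Game \rv$, the LP has an exact rational optimum computable in polynomial time, from which the comparison $\val(v_0) \leq 0$ is immediate. Intuitively this is the ``Bellman-Ford-like'' fixpoint iteration $\val^{(t+1)}(v) = \max_{vw}\bigl(r(vw) + \lambda\,\val^{(t)}(w)\bigr)$, which converges geometrically by the contraction property. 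The only genuine obstacle is precisely this exactness requirement: value iteration on its own merely approximates $\val$, so one must either extract the optimal positional strategy once the maximizing edges stabilize and then solve the resulting linear system in closed form, or appeal directly to the LP formulation; once $\val(v_0)$ is known exactly, comparing it to $0$ finishes the proof.
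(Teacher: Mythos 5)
Your proposal is correct, and it shares the paper's first two steps: the reduction of all four checking variants to the privilege problem via Corollary~\ref{cor_privilege_problem}, and the observation that, setting $r = r_\EE - r_\AA$, the question becomes whether some play $\pi$ from $v_0$ satisfies $\DS^\lambda_r(\pi) > 0$. Where you genuinely diverge is the algorithmic core. The paper stays combinatorial: it shows by a cycle-removal argument that a witness play can be taken of the form $hc^\omega$ with $|h|, |c|$ bounded by the number of vertices, and then searches for such a lasso (a vertex reachable from $v_0$ in $k$ steps with discounted sum $x$ and from itself in $\ell$ steps with discounted sum $y$, such that $x + y\lambda^k/(1-\lambda^\ell) > 0$) via a Bellman-Ford-style computation. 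You instead invoke the classical theory of one-player discounted optimization: the Bellman operator is a $\lambda$-contraction, its unique fixed point $\val$ is attained by a positional strategy, and $\val$ is the pointwise-least solution of the linear constraints $\val(v) \geq r(vw) + \lambda\,\val(w)$, hence computable exactly as the optimum of a polynomial-size linear program. Both routes are sound; yours is conceptually cleaner, yields the exact optimal value rather than just its sign, and rightly flags that value iteration alone only approximates (so the LP, or solving the linear system of the stabilized positional strategy, is what makes the argument rigorous), whereas the paper's route is more elementary and self-contained, needing only a dynamic-programming graph search instead of exact LP solving.
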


        \subsection{Rational verification}

Before moving to rational verification problems, let us define the following decision problem.

	\begin{pb}[Target discounted-sum problem]
		Given four quantities $\lambda, a, b, t \in \QQ$ with $0 < \lambda < 1$, is there a sequence $(u_n)_{n \in \NN} \in \{a, b\}^\omega$ such that $\sum_{n \in \NN} u_n \lambda^n = t$?
	\end{pb}
	
	Although it is a quite natural problem that appears in many different fields, the target discounted-sum (TDS) problem turns out to be surprisingly hard to solve, and its decidability status is still open.
 The interested reader may refer to~\cite{DBLP:conf/lics/BokerHO15} for more details.
	The following theorem shows that rational verification problems are at least as difficult.
	
	\begin{theorem} \label{thm_ds_verif_hardness}
		The TDS problem reduces to the complements of the (deterministic) Nash rational and subgame-perfect rational verification problems in discounted-sum games.
	\end{theorem}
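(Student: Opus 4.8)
The plan is to reduce the TDS problem to the complement of the $\rho$-UT problem, for $\rho\in\{\Nash,\subgameperfect\}$ simultaneously, and then invoke Corollary~\ref{cor_reductions}: since the $\rho$-UT problem and the (deterministic) $\rho$-rational verification problem are polynomially inter-reducible, so are their complements. Fix a TDS instance $(\lambda,a,b,t)$ and write $S=\{\sum_{n\in\NN}u_n\lambda^n\mid u\in\{a,b\}^\omega\}$; the instance is positive exactly when $t\in S$. Note that $S$ is a bounded subset of $\RR$ with $\min S=\min(a,b)/(1-\lambda)$ and $\max S=\max(a,b)/(1-\lambda)$, so $t\in S$ forces $\min S\le t\le\max S$.

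\textbf{Construction.} I would build a discounted-sum game with discount factor $\lambda$ and three players: a generator $G$, a guard $H$, and a passive measured player $M$. The core is a generator vertex $g$, controlled by $G$, carrying two self-loops of $M$-reward $a$ and $b$ respectively (and $G$-reward $0$, so that $G$ is indifferent); staying forever in $g$ realizes the $M$-payoff $\sum_n u_n\lambda^n$ for the freely chosen $u\in\{a,b\}^\omega$, i.e. an arbitrary element of $S$. The initial vertex $v_0$ is controlled by $H$ and offers two moves: \emph{enter}, leading to $g$, or \emph{exit}, leading to a sink self-loop $w$. On the enter branch I set $H$'s rewards equal to $M$'s, so that $\mu_H=\mu_M=\lambda s$ whenever the play enters and generates $s\in S$ (the factor $\lambda$ coming from the one-step delay before reaching $g$). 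On the exit branch I tune the constant self-loop rewards so that $\mu_H=\lambda t$ while $\mu_M=\lambda t+1$; every reward is rational and of polynomial size, and the game has a constant number of vertices. The UT instance is this game, the measured player $M$, and the threshold $\lambda t$.

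\textbf{Correctness.} The guard $H$ can, from $v_0$, secure the value $\lambda t$ by exiting, and this value cannot be lowered by the other players since $w$ is a fully controlled self-loop. Hence by Lemma~\ref{lm_ne} (for NE), and since every SPE is an NE of $\Game_{\|v_0}$ (for SPE), any equilibrium whose outcome \emph{enters} and generates $s$ must satisfy $\lambda s=\mu_H\ge\lambda t$, i.e. $s\ge t$; on such an outcome $\mu_M=\lambda s\ge\lambda t$, with equality iff $s=t$. Any equilibrium whose outcome \emph{exits} has $\mu_M=\lambda t+1>\lambda t$. Consequently, an equilibrium with $\mu_M\le\lambda t$ exists iff there is an entering equilibrium with $s=t$. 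Conversely, when $t\in S$ we have $\min S\le t$, so the profile in which $H$ enters, $G$ generates a bit sequence summing to $t$, and $M$ stays passive is an NE — and, since $G$ is indifferent in every subgame and $w$ is a trivial self-loop, it is even an SPE — realizing $\mu_M=\lambda t$. Thus the UT-complement instance is positive iff $t\in S$ iff the TDS instance is positive, and the argument is literally identical for $\rho=\Nash$ and $\rho=\subgameperfect$.

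\textbf{Main obstacle.} The delicate point is turning the trivial threshold condition ``$\mu_M\le\lambda t$'' into the genuinely hard condition ``$t\in S$'': the generator alone only lets $M$ range over all of $S$, so a guard that pins the value from below to exactly $t$ is essential, and it must do so in a way that is (i) independent of the global time at which the enforcing choice occurs — handled by making the secured value come from a single self-loop and absorbing the unavoidable discount factor into the threshold — and (ii) simultaneously valid for Nash and subgame-perfect equilibria, which is why I keep $G$ \emph{strictly} indifferent (so that every $s\in S$ remains realizable in every subgame) and place all enforcement at the root vertex $v_0$, where subgame-perfection coincides with the Nash deviation test of Lemma~\ref{lm_ne}. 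Checking that no spurious equilibria with $\mu_M\le\lambda t$ arise — in particular that exit outcomes stay strictly above the threshold, and that the guard condition $\min S\le t$ needed to realise $s=t$ is automatically granted by $t\in S$ — is the bookkeeping that the sign and scaling choices above are designed to make routine.
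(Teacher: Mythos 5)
Your reduction is correct in substance and follows the same high-level route as the paper: build a constant-size discounted-sum gadget in which an indifferent generator player produces an arbitrary sequence of $\{a,b\}$-rewards, use a guard's deviation to a sink ``outside option'' to pin the generated discounted sum against $t$, run the argument for NE and SPE simultaneously (indifference of the generator makes the witness profile subgame-perfect for free), and transfer to the rational verification problems via Corollary~\ref{cor_reductions}. The one structural difference is how $s=t$ gets pinned. The paper uses \emph{two} guards with opposite rewards: player $\Box$ receives the generated values and player $\Diamond$ their negation, and their respective sink options (worth $t\lambda^2$ and $-t\lambda^2$) force $s\geq t$ and $s\leq t$ in every equilibrium that enters the generator; the measured player is then the generator $\playcircle$ itself, whose payoff is negative exactly when generation happens, so no threshold tuning is needed. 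You instead use \emph{one} guard, which only forces $s\geq t$, and you let the universal-threshold condition on the passive measured player ($\mu_M=\lambda s\leq\lambda t$) supply the missing upper bound $s\leq t$. Both mechanisms are sound; yours is slightly leaner (one enforcement player), while the paper's pins $s=t$ in every entering equilibrium independently of the threshold, which is why its final accounting is immediate.

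One formal defect you should repair: in this paper a graph has $E\subseteq V\times V$ and rewards are functions on $E$, so your generator vertex $g$ cannot carry two distinct self-loops with rewards $a$ and $b$ --- parallel edges do not exist in this model. The standard fix (and what the paper's Figure~\ref{fig_reduction_tds2} does) is to use two generator vertices, fully connected among themselves including self-loops, where every edge entering the first (resp.\ second) vertex carries reward $a$ (resp.\ $b$) for $H$ and $M$; the generated word is then read off the sequence of visited vertices, the entering payoff is still $\lambda s$ up to re-indexing, and the rest of your argument goes through verbatim.
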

	
	\begin{proof}
	    We present here a reduction to the complements of the Nash universal and subgame-perfect UT problems; the result follows by Corollary~\ref{cor_reductions}.
	    Let $a, b, t \in \QQ$, let $\lambda \in \QQ \cap (0, 1)$, and let $\Game_{\|v_0}$ be the discounted-sum game of Figure~\ref{fig_reduction_tds2}, with discount factor $\lambda$.
		In that game, there exists an NE $\bsigma$ with $\mu_\playcircle(\< \bsigma \>) < 0$, if and only if there exists an SPE $\bsigma$ with $\mu_\playcircle(\< \bsigma \>) < 0$, if and only if $a, b, t$, and $\lambda$ form a positive instance of the TDS problem.
	
		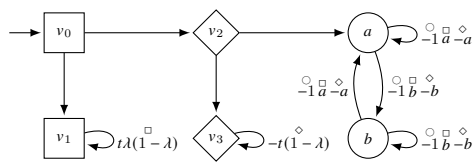
\begin{figure}
			\centering
			\begin{tikzpicture}[->,>=latex,shorten >=1pt, initial text={}, scale=1, every node/.style={scale=0.6}]
                \newcommand{\deltay}{0.7}
                   %\useasboundingbox (-0, 2.3*\deltay) rectangle (4, -0.25);
				\node[state, rectangle, initial left] (v0) at (0, 2*\deltay) {$v_0$};
				\node[state, rectangle] (v1) at (0, 0*\deltay) {$v_1$};
				\node[state, diamond] (v2) at (2, 2*\deltay) {$v_2$};
				\node[state, diamond] (v3) at (2, 0*\deltay) {$v_3$};
				\node[state] (a) at (4, 2*\deltay) {$a$};
				\node[state] (b) at (4, 0*\deltay) {$b$};
				\path (a) edge[bend left=20] node[right] {$\stack{\playcircle}{-1}  \, \stack{\Box}{b}  \, \stack{\Diamond}{-b}$} (b);
				\path (b) edge[bend left=20] node[left] {$\stack{\playcircle}{-1}  \, \stack{\Box}{a}  \, \stack{\Diamond}{-a}$} (a);
				\path (b) edge[loop right] node[right] {$\stack{\playcircle}{-1}  \, \stack{\Box}{b}  \, \stack{\Diamond}{-b}$} (b);
				\path (a) edge[loop right] node[right] {$\stack{\playcircle}{-1}  \, \stack{\Box}{a}  \, \stack{\Diamond}{-a}$} (a);
				\path (v0) edge (v1);
				\path (v0) edge (v2);
				\path (v2) edge (v3);
				\path (v2) edge (a);
				\path (v1) edge[loop right] node[right] {$\stack{\Box}{t \lambda (1-\lambda)}$} (v1);
				\path (v3) edge[loop right] node[right] {$\stack{\Diamond}{-t (1 - \lambda)}$} (v3);
			\end{tikzpicture}
			\caption{A game constructed from an instance of TDS}
			\label{fig_reduction_tds2}
		\end{figure}
		
		Indeed, if such an NE or SPE exists, it necessarily reaches the vertex $a$.
		But then, player $\Box$ must get at least the payoff $\mu_\Box(v_0v_1^\omega) = t \lambda^2$, and player $\Diamond$ the payoff $\mu_\Diamond(v_0v_2v_3^\omega) = -t \lambda^2$, otherwise they would have a profitable deviation.
		If such a play exists, then we have a positive instance of the TDS problem.
		Conversely, from such a positive instance, one can construct a play from $v_0$ in which player $\playcircle$ gets the payoff $\frac{\lambda^2}{1-\lambda}$, player $\Box$ the payoff $t \lambda^2$, and player $\Diamond$ the payoff $-t \lambda^2$, and none of them has a profitable deviation in any subgame. \hspace{1em plus 1fill}\qedhere
	\end{proof}
	
	The previous theorem suggests that finding algorithms solving those problems is a very ambitious objective.
	However, in the sequel, we will show that like the TDS problem, the rational verification problems are recursively enumerable.
The key idea is the following: a property of discounted-sum objectives is that when a play gives to some player a payoff that is strictly smaller than some threshold, that can be seen on finite prefixes of those plays.
Therefore, although strategy profiles are in general infinite objects that exist in uncountable number, profitable deviations can be found by analyzing their behaviors on a finite (but unbounded) number of histories.

	\begin{theorem}[App.~\ref{pf_ds_verif_easiness}] \label{thm_ds_verif_easiness}
		In discounted-sum games, the Nash rational and the subgame-perfect rational verification problems, deterministic or not, are recursively enumerable.
	\end{theorem}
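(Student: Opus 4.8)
The plan is to invoke Corollary~\ref{cor_reductions} to reduce both problems, in their deterministic and non-deterministic versions, to the corresponding $\rho$-UT problem, and then, for each $\rho \in \{\Nash, \subgameperfect\}$, to build a semi-algorithm that halts exactly on the \emph{positive} instances of the $\rho$-UT problem, i.e. the games $\Game_{\|v_0}$ in which every $\rho$-equilibrium $\bsigma$ satisfies $\mu_i(\< \bsigma \>) > t$. The engine is the finite-witness property of discounted sums: writing $W = \max_{e, j} |r_j(e)|$, for every play $\pi$ and index $n$ one has $|\mu_j(\pi) - \DS_j(\pi_{\leq n})| \leq \eta_n$ with $\eta_n = \frac{\lambda^n}{1-\lambda} W \to 0$. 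Hence a \emph{strict} inequality $\mu_j(\pi) > c$ or $\mu_j(\pi) < c$ is always confirmed by a finite prefix of $\pi$; since the threshold $t$ is itself a strict bound, this is what places the positive side on the recursively enumerable side.

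First I would run an iterative-deepening search over the (finitely branching) tree of histories of $\Game_{\|v_0}$, pruning a history $h = \pi_0 \dots \pi_n$ as soon as one of two tests fires. \textbf{(Good)}: the prefix already forces player $i$ above the threshold, i.e. $\DS_i(h) - \eta_n > t$, so that $\mu_i(\pi) > t$ for \emph{every} extension $\pi$ of $h$. \textbf{(Bad)}: the prefix already rules out $h$ extending to an equilibrium outcome. For the Nash case I read this off Lemma~\ref{lm_ne}: by determinacy of zero-sum discounted-sum games the harshest punishment that coalition $-j$ can inflict on $j$ from $\pi_k$ is an approximable value $\val_j(\pi_k)$, and Lemma~\ref{lm_ne} says $\pi$ is an NE outcome exactly when $\DS_j(\pi_{\leq k}) + \lambda^k \val_j(\pi_k) \leq \mu_j(\pi)$ for every player $j$ and every $k$ with $\pi_k \in V_j$; I then prune $h$ whenever, for some such $j$ and $k \leq n$, one has $\DS_j(\pi_{\leq k}) + \lambda^k \val_j(\pi_k) > \DS_j(h) + \eta_n$, the right-hand side bounding $\mu_j$ from above over all extensions of $h$. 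Both tests involve only the rational partial sums $\DS_j$, the rational radii $\eta_n$, and the finitely many values $\val_j(v)$, which I would compute by value iteration; because every test is a strict inequality, confirming it through a good-enough approximation never triggers a false prune.

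Correctness rests on a König's lemma argument. On a positive instance, take any infinite branch $\pi$: if $\pi$ is an NE outcome then $\mu_i(\pi) > t$ strictly, so \textbf{(Good)} fires once $\eta_n$ is small enough; if $\pi$ is not an NE outcome then some $j, k$ violate the characterization strictly, and as $\DS_j(h) + \eta_n \downarrow \mu_j(\pi)$ the test \textbf{(Bad)} eventually fires. Thus every branch is cut at finite depth; the un-pruned part of a finitely branching tree with no infinite branch is finite, so the search reaches a depth at which all surviving histories are pruned, and the procedure halts and accepts. Conversely, on a negative instance a genuine NE outcome $\pi$ with $\mu_i(\pi) \leq t$ is cut neither by \textbf{(Bad)} (it satisfies the characterization, so the upper bound never undercuts it) nor by \textbf{(Good)} (its lower bound never exceeds $t$), so that branch survives forever and the procedure loops — precisely the behaviour required for recursive enumerability of the positive instances.

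It remains to instantiate \textbf{(Bad)} for subgame-perfect equilibria, where Lemma~\ref{lm_ne} no longer applies: here I would replace the antagonistic value $\val_j$ by the requirement values produced by the negotiation-function characterization of SPE outcomes of~\cite{Concur, Icalp}, which again yields a family of strict local conditions failing on non-SPE plays and witnessed at finite depth, after which the convergence $\eta_n \to 0$ and König's lemma close the argument verbatim. The main obstacle is exactly this point: one must verify that these SPE-requirement values are approximable to arbitrary precision in discounted-sum games, so the strict pruning tests stay effective, and that the failure of a non-SPE play to meet its requirements is genuinely strict rather than a boundary equality; establishing these two facts for the subgame-perfect setting is where the real work lies, the Nash case serving as the clean template.
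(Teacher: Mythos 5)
Your Nash half is essentially correct, and it takes a genuinely different route from the paper. You enumerate the tree of \emph{plays} and prune it using the outcome characterization of Lemma~\ref{lm_ne}, instantiated with the zero-sum punishment values $\val_j(v)$ of discounted-sum games; since those games admit optimal positional strategies, the infimum in Lemma~\ref{lm_ne} is attained, the values are rational and computable (or approximable with certified error bounds), and your two strict-inequality tests are sound and eventually fire, so the K\H{o}nig argument goes through. The paper instead enumerates the tree of \emph{strategy profiles}: a node of depth $n+1$ fixes the choices $\bsigma(h^{(0)}), \dots, \bsigma(h^{(n)})$ on an enumeration of histories, and a node is cut as soon as two histories of equal length $p$, agreeing up to position $k$, one compatible with the partial profile in the subgame at $k$ and one compatible with the profile in which a single player $j$ deviates, have partial discounted sums separated by more than $2M\lambda^{p-k-1}$ --- a finite certificate of a strictly profitable deviation that requires no value computation at all.

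The genuine gap is the subgame-perfect half, and it sits exactly where you stop. The tool you invoke --- the negotiation-function characterization of SPE outcomes from~\cite{Concur} and~\cite{Icalp} --- is developed there for \emph{prefix-independent} payoffs (parity, mean-payoff); discounted sums are prefix-dependent, and neither the validity of such a characterization for discounted-sum games nor the approximability of its requirement values has been established anywhere. The paper's own Theorem~\ref{thm_ds_verif_hardness} should make you wary: SPE-related quantities in discounted-sum games live in TDS-hard territory, so ``the requirement values are approximable to arbitrary precision'' is an unproven assumption doing the heavy lifting, not a detail. As written, your proposal therefore proves only the Nash statement and reduces the subgame-perfect one to two facts you explicitly leave open. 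The paper's profile-tree construction is precisely what makes this unnecessary: because a branch is an entire strategy profile, failure of the SPE property is witnessed by a strictly profitable deviation in \emph{some} subgame, and strictness plus convergence of discounted sums places that witness between two finite histories at finite depth; no play-level characterization of SPE outcomes, and no value or requirement computation, is ever needed. Replacing your (Bad) test for SPEs by that deviation-detection rule (and restricting it to on-path histories to recover the Nash case) closes the gap and handles both solution concepts uniformly.
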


	\section{Mean-payoff games} \label{sec_mp}

Let us now end with mean-payoff games.
For the checking as well as for the classical rational verification problems, our results can be derived from the general constructions of Section~\ref{sec_tools} and from the existing literature.
However, we will show that in mean-payoff games, there are examples that highlight a limit of our definition of rational verification, and that play in favour of a more subtle one, leading to distinct complexity results.

	\subsection{Checking problems}

As in parity games, Corollary~\ref{cor_privilege_problem} enables us to solve the checking problems in polynomial time, since the privilege problem itself reduces to the search of a negative cycle.

	\begin{theorem} \label{thm_mp_checking}
		In the class of mean-payoff games, the Nash-checking and the subgame-perfect checking problems, deterministic or not, can be decided in polynomial time.
	\end{theorem}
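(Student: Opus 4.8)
By Corollary~\ref{cor_privilege_problem}, the four checking problems all reduce to the privilege problem: given a mean-payoff game $\Game_{\|v_0}$ with players $\AA$ and $\EE$, decide whether every play $\pi$ satisfies $\mu_\AA(\pi) \geq \mu_\EE(\pi)$. So the plan is to show that this two-player privilege problem can be solved in polynomial time, and the complexity bound on the deviation games (polynomial in $\lv \Game \rv$ and $\lv \Mach \rv$) then transfers the result back to the checking problems.

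The key observation is that the privilege problem asks exactly whether there is \emph{no} play $\pi$ with $\mu_\AA(\pi) < \mu_\EE(\pi)$, i.e. whether the negation is infeasible. I would first argue that, since both payoffs are defined as $\liminf$ of averages, and since any play's asymptotic behaviour is governed by the cycle(s) it visits infinitely often, the supremum of $\mu_\EE(\pi) - \mu_\AA(\pi)$ over all plays is achieved (or approached) by a play that eventually loops on a single simple cycle in the graph. More precisely, I would reduce to searching for a reachable cycle $c$ in $(V,E)$ on which the average of $r_\EE$ strictly exceeds the average of $r_\AA$. Define a single \emph{difference reward} mapping $r_\Delta = r_\EE - r_\AA$ on each edge; then along a cycle $c$ the quantity $\MP_{r_\EE}(c) - \MP_{r_\AA}(c)$ equals the mean of $r_\Delta$ around $c$, so a cycle witnessing $\mu_\EE > \mu_\AA$ is precisely a cycle with strictly positive mean $r_\Delta$, i.e. a positive cycle for $r_\Delta$ (equivalently, a negative cycle for $-r_\Delta$).

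Hence the algorithm is: compute $r_\Delta = r_\EE - r_\AA$, and decide whether there is a cycle of strictly positive mean $r_\Delta$ that is reachable from $v_0$. Reachability from $v_0$ can be restricted by first pruning to the vertices reachable from $v_0$ (a linear-time graph search). Detecting a positive-mean cycle (equivalently a negative cycle after negation) is a classical task solvable in polynomial time by a Bellman--Ford-style computation, so the whole test runs in polynomial time. I would then verify the reduction in both directions: if such a reachable positive-mean cycle exists, a play reaching it and looping forever on it gives $\mu_\EE(\pi) > \mu_\AA(\pi)$ under the $\liminf$ definition (the averages converge on the periodic tail, so the $\liminf$ equals the cycle mean for each player), so the instance is negative; conversely, if every play satisfies $\mu_\AA(\pi) \geq \mu_\EE(\pi)$, no such cycle can exist, since otherwise the looping play above would be a counterexample.

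The one point that needs genuine care, rather than routine appeal to Bellman--Ford, is the interplay between the two separate $\liminf$'s: in general $\liminf_n \MP_{r_\EE}(\pi_{\leq n})$ and $\liminf_n \MP_{r_\AA}(\pi_{\leq n})$ may be witnessed along different subsequences of prefix lengths, so $\mu_\EE(\pi) - \mu_\AA(\pi)$ is \emph{not} in general the $\liminf$ of $\MP_{r_\Delta}(\pi_{\leq n})$. I therefore expect the main obstacle to be showing that restricting attention to ultimately-periodic (single-cycle) plays loses no generality, so that the difference reduces cleanly to a single positive-mean-cycle search. I would handle this by noting that it suffices to exhibit \emph{one} witnessing play when the instance is negative: on a play that is eventually periodic on a cycle $c$, both prefix averages converge to their respective cycle means, the two $\liminf$'s coincide with those limits, and the difference is exactly the mean of $r_\Delta$ on $c$. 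Thus a positive-mean $r_\Delta$-cycle yields a genuine witness, and this direction is all that is needed to separate positive from negative instances; the converse direction only uses that no such cycle exists, which is immediate. This sidesteps the $\liminf$ subtlety entirely and keeps the algorithm within a single negative-cycle detection.
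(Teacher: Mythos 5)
Your algorithm is the paper's algorithm (reduce to the privilege problem via Corollary~\ref{cor_privilege_problem}, then run Bellman--Ford to detect a reachable cycle $c$ with $\MP_\EE(c) > \MP_\AA(c)$, i.e.\ a negative cycle for $r_\AA - r_\EE$), and your first direction --- such a cycle yields a play $hc^\omega$ with $\mu_\EE > \mu_\AA$, since on an ultimately periodic play both prefix averages converge --- is correct. But there is a genuine logical gap in the other direction. Correctness of the algorithm's ``yes'' answer (no cycle found) requires: \emph{if no reachable simple cycle $c$ satisfies $\MP_\AA(c) < \MP_\EE(c)$, then every play $\pi$ satisfies $\mu_\AA(\pi) \geq \mu_\EE(\pi)$}; equivalently, every bad play (however non-periodic) forces a bad cycle. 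What you offer as the ``converse'' --- if every play satisfies $\mu_\AA(\pi) \geq \mu_\EE(\pi)$, then no such cycle exists --- is merely the contrapositive of your first direction, so you have proved the same implication twice and left the completeness of the test unproved. The claim that this remaining direction ``only uses that no such cycle exists, which is immediate'' conflates converse with contrapositive.

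Moreover, the $\liminf$ subtlety you correctly identified is not sidestepped: it lives precisely in the missing direction, since a bad play need not be ultimately periodic and its two $\liminf$'s may be attained along different subsequences. The repair is short but is real work. Writing $r = r_\AA - r_\EE$, superadditivity of $\liminf$ gives, for any play $\pi$,
\begin{equation*}
\MPi_r(\pi) \;\leq\; \MPi_{r_\AA}(\pi) - \MPi_{r_\EE}(\pi) \;=\; \mu_\AA(\pi) - \mu_\EE(\pi),
\end{equation*}
so a bad play has $\MPi_r(\pi) < 0$. Now if every reachable simple cycle had $\MP_r(c) \geq 0$, then every prefix of $\pi$ would decompose into simple cycles (each of nonnegative $r$-weight) plus at most $|V|$ leftover edges, so $\MP_r(\pi_{\leq n}) \geq -|V| \max_{uv}|r(uv)| / n \to 0$, giving $\MPi_r(\pi) \geq 0$, a contradiction; hence some reachable simple cycle satisfies $\MP_r(c) < 0$, i.e.\ $\MP_\AA(c) < \MP_\EE(c)$. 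This is exactly the claim the paper states first in its proof (every accessible simple cycle satisfying $\MP_\AA(c) \geq \MP_\EE(c)$ implies $\mu_\AA(\pi) \geq \mu_\EE(\pi)$ for every play), and the paper's remark that $\MPi_r(\pi) = \MPi_\AA(\pi) - \MPi_\EE(\pi)$ fails in general but holds for plays of the form $c^\omega$ is precisely how it keeps both directions sound. With the paragraph above added, your proof is complete and matches the paper's.
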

	
	\begin{proof}
		By Corollary~\ref{cor_privilege_problem}, those four problems reduce to the privilege problem.
		Let $\Game_{\|v_0}$ be a mean-payoff game.
		If every simple cycle $c$ accessible from $v_0$ is such that $\MP_\AA(c) \geq \MP_\EE(c)$, then we also have $\mu_\AA(\pi) \geq \mu_\EE(\pi)$ for every play $\pi$; conversely, if there is a simple cycle $c$ accessible from $v_0$ such that $\MP_\AA(c) < \MP_\EE(c)$, then there is a play $\pi$ in $\Game_{\|v_0}$ such that $\mu_\AA(\pi) < \mu_\EE(\pi)$ --- a play that reaches that cycle and loops there forever.
		Such a cycle can also be seen as a negative cycle for the reward function $r = r_\AA - r_\EE$.
		Note that the equality $\MPi_r(\pi) = \MPi_\AA(\pi) - \MPi_\EE(\pi)$ does not hold for every play $\pi$ (because the limit inferior and the additive inverse do not commute in general), but it does when $\pi$ has the form $c^\omega$.
		
		Thus, an algorithm that solves the privilege problem in polynomial time consists in searching for such a negative cycle, using Bellman-Ford's algorithm.
	\end{proof}

        \subsection{Classical rational verification}

The rational verification problems, as they are defined so far, can also be solved using already existing algorithms.

	\begin{theorem}[App.~\ref{pf_mp_verif}] \label{thm_mp_verif}
		In the class of mean-payoff games, the Nash rational and the subgame-perfect rational verification problems, deterministic or not, are $\coNP$-complete.
	\end{theorem}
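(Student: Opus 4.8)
\emph{Proof plan.} By Corollary~\ref{cor_reductions}, for each $\rho \in \{\Nash, \subgameperfect\}$ the deterministic and non-deterministic rational verification problems are interreducible in polynomial time with the $\rho$-universal threshold problem, so the whole statement reduces to showing that the $\rho$-UT problem is $\coNP$-complete. I would work with its complement: does there exist a $\rho$-equilibrium $\bsigma$ with $\mu_i(\< \bsigma \>) \leq t$? This is exactly a constrained existence problem with a single-coordinate upper-bound constraint, and my goal would be to place it in $\NP$ (giving the $\coNP$ upper bound) and to prove it $\NP$-hard (giving the $\coNP$ lower bound).

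For membership, I would exhibit a polynomial certificate of a ``bad'' equilibrium together with a polynomial-time check. In the Nash case I would guess a lasso-shaped outcome $\pi = h c^\omega$ with $h$ and $c$ of polynomial length, so that $\mu_i(\pi) = \MPi_i(\pi)$ is simply the average reward on the cycle $c$. Using Lemma~\ref{lm_ne}, $\pi$ is an NE outcome exactly when, for every player $j$ and every vertex $v$ on $\pi$ with $v \in V_j$, the value of the zero-sum mean-payoff game from $v$ in which the coalition $-j$ minimizes $j$'s payoff is at most $\mu_j(\pi)$ (the finite prefix is irrelevant for the liminf average). Since optimal strategies in zero-sum mean-payoff games are memoryless, I would include in the certificate a memoryless punishing coalition profile for each such $v$ and verify the bound by a negative-cycle computation; finally I would check $\mu_i(\pi) \leq t$. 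That NE outcomes with $\mu_i \leq t$ can be taken in polynomial lasso form is the standard ingredient to borrow from the literature. For the subgame-perfect case, Lemma~\ref{lm_ne} no longer characterizes the outcomes, so I would instead invoke the result of~\cite{Icalp} (built on the negotiation function of~\cite{Concur,CSL}) that constrained existence of an SPE with a prescribed payoff in a mean-payoff game is in $\NP$; the UT complement is the special instance where the constraint bears only on coordinate $i$.

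For $\coNP$-hardness I would reduce from the complement of an $\NP$-hard constrained existence problem. The known hardness constructions for NE~\cite{Ummels05,DBLP:conf/fossacs/Ummels08} and for SPE~\cite{Icalp} already build, from an instance of an $\NP$-complete problem, a mean-payoff game whose prescribed equilibrium constraint essentially concerns one distinguished player; a slight adaptation rephrases them as instances of the $\rho$-UT problem with a single threshold $t$ on that player. This yields $\coNP$-hardness of the $\rho$-UT problem, hence of both rational verification problems.

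The main obstacle is the $\NP$ upper bound in the subgame-perfect case: unlike NEs, SPE outcomes are not captured by the simple punishment characterization of Lemma~\ref{lm_ne}, so the direct ``guess a lasso and verify locally'' argument fails and one must rely on the finer structure of SPE outcomes (fixed points of the negotiation function and $\lambda$-consistency) from~\cite{Concur,CSL,Icalp} to extract a polynomial certificate. A secondary point to check carefully is that the literature's hardness reductions can indeed be recast with the constraint on a single player, so that they fit the UT formulation rather than a full payoff-vector constraint.
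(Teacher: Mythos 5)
Your overall route is the paper's: Corollary~\ref{cor_reductions} reduces everything to the $\rho$-UT problem, the SPE upper bound is imported from~\cite{Icalp}, and the hardness is a single-threshold recasting of known reductions (the paper reuses the parity game $\Game^\phi$ of Theorem~\ref{thm_parity_verif}, turning colour $2$ into reward $1$ and colour $1$ into reward $0$; one construction serves both NE and SPE). The genuine gap is in your Nash upper bound: in mean-payoff games it is \emph{false} that a negative instance is always witnessed by a lasso $hc^\omega$ of polynomial length --- it may be witnessed by no lasso at all. With the $\liminf$ semantics, the payoff vectors achievable by plays that eventually stay in a strongly connected part of the arena form the downward closure of the \emph{convex hull} of the simple-cycle payoff vectors, and meeting several tight constraints from Lemma~\ref{lm_ne} simultaneously may force a genuine convex combination of distinct cycles. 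Such a combination is only achieved in the limit, by alternating between the cycles in blocks of growing length; any finite cycle spends a positive fraction of its steps on the connecting edges and falls strictly short. Concretely, take two self-loops with reward vectors $(1,0)$ and $(0,1)$ for players $1$ and $2$, joined by zero-reward edges, and arrange (via exit-to-sink gadgets) punishment values $p/q$ for player $1$ and $1-p/q$ for player $2$: a play with $\mu_1(\pi) \geq p/q$ and $\mu_2(\pi) \geq 1-p/q$ exists (alternate with proportions $p : q-p$ in growing blocks), but for every lasso the two mean payoffs sum to $(L-2)/L < 1$, where $L$ is the cycle length, so no lasso meets both constraints. This is why the lasso shortcut, which the paper does use for parity (citing~\cite{CSL}) and for quantitative reachability, is unavailable here, and also why your parenthetical worry that this ingredient must be ``borrowed from the literature'' cannot be discharged: the literature does not provide it for mean-payoff.

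The repair is exactly what the paper does: keep your guesses of the requirement $\lambda$ and of the memoryless punishing profiles $\btau^v_{-j}$ (checked against all $\tau_j$ by negative-cycle computations, which certifies $\lambda \geq \lambda_1$), but replace the lasso by the $\NP$ certificate of Theorem~38 of~\cite{Icalp}, which decides the existence of a play $\pi$ with $\mu_i(\pi) \leq t$ and $\mu_j(\pi) \geq \lambda(\pi_k)$ for every player $j$ and every $\pi_k \in V_j$ --- precisely the constrained-play existence question that your Lemma~\ref{lm_ne} analysis produces. With that substitution your Nash argument coincides with the paper's; the SPE membership and the hardness parts are correct as you proposed them.
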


        \subsection{The temptation of chaos} \label{sec_chaos}

It is now worth noting that the definition we gave of rational verification entails, in the case of mean-payoff games, results that may be considered as counter-intuitive.
For instance, consider the game of Figure~\ref{fig_chaos}, where Leader owns no vertex, and consider the only (vacuous) strategy available for Leader.
Does that strategy guarantee a payoff greater than $1$?
		Intuitively, it does not, since Leader always receives the payoff $0$.
		But still, that strategy, that game, and that threshold form a positive instance of subgame-perfect rational verification, because no $\LL$-fixed SPE exists in that game (see~\cite{Concur}).
More generally, the definition we give of rational verification considers that a \emph{good} strategy for Leader is a strategy such that for every response of the environment that is rational, the generated outcome observes some specification.
But a strategy is then good, in that sense, if \emph{no} rational response of the environment exists: that is the phenomenon that we can call \emph{temptation of chaos}.
While that case does never occur in energy and discounted-sum games, where rational responses are always guaranteed to exist (as we will see below), it must be considered in mean-payoff games.
	
		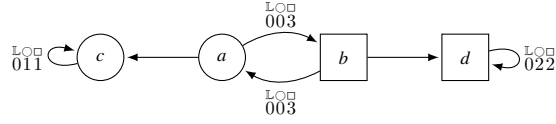
\begin{figure}
			\centering
			\begin{tikzpicture}[->,>=latex,shorten >=1pt, initial text={}, scale=0.8, every node/.style={scale=0.7}]
                %\useasboundingbox (-2, 0.5) rectangle (4, -0.75);
				\node[state] (a) at (0, 0) {$a$};
				\node[state] (c) at (-2, 0) {$c$};
				\node[state, rectangle] (b) at (2, 0) {$b$};
				\node[state, rectangle] (d) at (4, 0) {$d$};
				\path (a) edge (c);
				\path (a) edge[bend left] node[above] {$\stackrel{\LL}{0}\stackrel{\playcircle}{0} \stackrel{\Box}{3}$} (b);
				\path (b) edge[bend left] node[below] {$\stackrel{\LL}{0}\stackrel{\playcircle}{0} \stackrel{\Box}{3}$} (a);
				\path (b) edge (d);
				\path (d) edge [loop right] node {$\stackrel{\LL}{0}\stackrel{\playcircle}{2} \stackrel{\Box}{2}$} (d);
				\path (c) edge [loop left] node {$\stackrel{\LL}{0}\stackrel{\playcircle}{1} \stackrel{\Box}{1}$} (c);
			\end{tikzpicture}
			\caption{The temptation of chaos: an illustration}
			\label{fig_chaos}
		\end{figure}

\subsection{Achaotic rational verification}
 
	To avoid such phenomena, we introduce an alternative definition of rational verification, \emph{achaotic rational verification}: a good strategy for Leader will be a strategy that guarantees the given threshold against every response that is \emph{as rational as possible}.
	To define that problem, we need quantitative relaxations to the notions of NEs and SPEs.
		Let $\Game_{\|v_0}$ be a game.
  Let $\epsilon \geq 0$.
		The strategy profile $\bsigma$ is an \emph{$\epsilon$-NE} (resp. {\em $\LL$-fixed $\epsilon$-NE}) in $\Game_{\|v_0}$ if and only if for each $i \in \Pi$ (resp. $\Pi \setminus \{\LL\}$) and every deviation $\sigma'_i$ of $\sigma_i$, the inequality $\mu_i\left(\< \sigma'_i, \bsigma_{-i} \>\right) \leq \mu_i\left(\< \bsigma \>\right) + \epsilon$ holds: no deviation is profitable \emph{by more than $\epsilon$}.
  Note that $0$-NEs coincide with NEs.
We derive from that notion, as expected, the notions of ($\LL$-fixed) $\epsilon$-SPEs, $\epsilon$-Nash and $\epsilon$-subgame-perfect responses, and the notations $\epsilon\NR(\sigma_\LL)$, $\epsilon\SPR(\sigma_\LL)$, and $\epsilon\rho\R(\sigma_\LL)$.
We can now define our decision problem.

\begin{pb}[Achaotic (deterministic) $\rho$-rational verification in the class $\Cl$]
	Given a game $\Game_{\|v_0} \in \Cl$, a threshold $t \in \QQ$, and a Mealy machine (resp. a deterministic Mealy machine) $\Mach$ on $\Game$, does there exist $\epsilon \geq 0$ satisfying:
 \begin{itemize}
     \item $\epsilon\rho\R(\sigma_\LL) \neq \emptyset$ for some strategy $\sigma_\LL \in \Comp_{\|v_0}(\Mach)$;

     \item and $\mu_\LL(\< \sigma_\LL, \bsigma_{-\LL} \>) > t$ for every $\sigma_\LL \in \Comp_{\|v_0}(\Mach)$, and every $\bsigma_{-\LL} \in \epsilon\rho\R(\sigma_\LL)$?
 \end{itemize}
\end{pb}

We will prove below that in mean-payoff games, there exists a least quantity $\epsilon_{\min}$ such that $\epsilon_{\min}\rho$-responses to a given strategy $\sigma_\LL$ exist.
For instance, in the example depicted by Figure~\ref{fig_chaos}, we have $\epsilon_{\min} = 1$.
Thus, we can rephrase the achaotic rational verification problems as follows: given a game $\Game_{\|v_0}$, a threshold $t \in \QQ$ and a Mealy machine $\Mach$, do we have $\mu_\LL(\< \sigma_\LL, \bsigma_{-\LL} \>) > t$ for every $\sigma_\LL \in \Comp_{\|v_0}(\Mach)$ and every $\bsigma_{-\LL} \in \epsilon_{\min}\rho\R(\sigma_\LL)$?

Among the problems we study here, this new definition is relevant in only one case: subgame-perfect rational verification in mean-payoff games.
In all other cases, the rational verification problems are equivalent to their achaotic versions, because Nash and subgame-perfect responses are guaranteed to exist.
	
\begin{proposition}[App.~\ref{pf_ach_verif}] \label{prop_ach_verif}
    Let $\Cl$ be a class of games, among the classes of energy games and discounted-sum games.
	    Let $\rho \in \{\Nash, \subgameperfect\}$.
	    Then, the positive instances of the achaotic $\rho$-rational verification problem in $\Cl$ are exactly the positive instances of the $\rho$-rational verification problem.
    Similarly, the positive instances of the achaotic Nash-rational verification problem in mean-payoff games are exactly the positive instances of the $\rho$-rational verification problem.
\end{proposition}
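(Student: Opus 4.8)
The plan is to show that, in each case covered by the statement, the only freedom offered by the achaotic problem --- the choice of a relaxation parameter $\epsilon \ge 0$ --- is vacuous, because the value $\epsilon = 0$ already witnesses every positive instance. First I would record two elementary facts about the $\epsilon$-relaxations. By definition $0$-NEs and $0$-SPEs coincide with NEs and SPEs, so $0\rho\R(\sigma_\LL) = \rho\R(\sigma_\LL)$ for every strategy $\sigma_\LL$. Moreover the families $\epsilon\rho\R(\sigma_\LL)$ are monotone in $\epsilon$: if $0 \le \epsilon \le \epsilon'$, then every $\epsilon$-equilibrium is an $\epsilon'$-equilibrium, since the profitability constraint $\mu_i(\<\sigma'_i, \bsigma_{-i}\>) \le \mu_i(\<\bsigma\>) + \epsilon$ only weakens as $\epsilon$ grows. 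In particular $\rho\R(\sigma_\LL) = 0\rho\R(\sigma_\LL) \subseteq \epsilon\rho\R(\sigma_\LL)$ for every $\epsilon \ge 0$.

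The engine of the proof is the guaranteed existence of rational responses: in energy and discounted-sum games every $\sigma_\LL$ admits a $\rho$-response for both $\rho \in \{\Nash, \subgameperfect\}$, and in mean-payoff games every $\sigma_\LL$ admits a Nash response. I would invoke these facts (the ones referred to by ``rational responses are always guaranteed to exist''), which amount to $\rho\R(\sigma_\LL) = 0\rho\R(\sigma_\LL) \ne \emptyset$ for every $\sigma_\LL \in \Comp_{\|v_0}(\Mach)$; the set $\Comp_{\|v_0}(\Mach)$ is itself nonempty because a Mealy machine always has at least one compatible strategy.

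With these in hand the equivalence is a short logical argument, which I would carry out in both directions. For ``achaotic positive $\Rightarrow$ classical positive'', suppose some $\epsilon \ge 0$ satisfies the two achaotic conditions; then for any $\sigma_\LL \in \Comp_{\|v_0}(\Mach)$ and any $\bsigma_{-\LL} \in \rho\R(\sigma_\LL)$, monotonicity gives $\bsigma_{-\LL} \in 0\rho\R(\sigma_\LL) \subseteq \epsilon\rho\R(\sigma_\LL)$, so the second achaotic condition forces $\mu_\LL(\<\sigma_\LL, \bsigma_{-\LL}\>) > t$, which is exactly the classical requirement. For the converse, assuming the classical instance is positive I would simply take $\epsilon = 0$: the first achaotic condition holds because responses exist, so $0\rho\R(\sigma_\LL) \ne \emptyset$, and the second condition, quantified over $\bsigma_{-\LL} \in 0\rho\R(\sigma_\LL) = \rho\R(\sigma_\LL)$, is precisely classical positivity. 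Hence $\epsilon = 0$ witnesses a positive achaotic instance.

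The only genuine mathematical content therefore sits in the existence of rational responses, and this is exactly what delimits the scope of the statement: the argument applies verbatim to the Nash case in mean-payoff games only because Nash responses always exist there, whereas for subgame-perfect responses the analogue fails --- Figure~\ref{fig_chaos} exhibits a strategy with $\SPR(\sigma_\LL) = \emptyset$, which is the very \emph{temptation of chaos} the achaotic definition is designed to repair. I expect the supporting existence results, especially for subgame-perfect responses (where one must produce an SPE in every relevant subgame), to be the substantive work; but since they are established separately, the proof of the proposition itself reduces to the monotonicity-and-$\epsilon{=}0$ bookkeeping above.
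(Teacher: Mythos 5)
Your logical skeleton is exactly the paper's: the paper also observes that $0\rho$-responses are $\rho$-responses, that every $0\rho$-response is an $\epsilon\rho$-response (monotonicity), and then runs precisely your two implications, with $\epsilon=0$ witnessing the achaotic instance in one direction and monotonicity collapsing the other. So the bookkeeping part of your proposal is correct and faithful. The problem is where you locate the mathematical content. You invoke, as a fact ``established separately,'' that in energy and discounted-sum games \emph{every} strategy $\sigma_\LL$ admits a $\rho$-response (and every strategy admits a Nash response in mean-payoff games). This is established nowhere else in the paper: proving a form of this existence statement \emph{is} the paper's proof of this proposition. The phrase you lean on from Section~\ref{sec_chaos} (``rational responses are always guaranteed to exist \dots as we will see below'') is a forward reference to this very appendix, so on the one point that carries all the weight your argument is circular.

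Concretely, the paper proves the weaker statement that \emph{some} $\sigma_\LL \in \Comp_{\|v_0}(\Mach)$ admits a $\rho$-response, and it does so via Theorem~\ref{thm_product_game}: such a $\sigma_\LL$ exists if and only if the product game $\Game_{\|v_0} \otimes \Mach$ has a $\rho$-equilibrium, and $\Game_{\|v_0} \otimes \Mach$ is a \emph{finite} game of the same class, so existence follows from results for finite arenas --- SPEs exist in Boolean games with Borel objectives~\cite{GU08} (energy), in games with continuous payoffs~\cite{DBLP:conf/csl/BrihayeBMR15} (discounted sum), and NEs exist in mean-payoff games~\cite{DBLP:conf/lfcs/BrihayePS13}. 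Your stronger ``every $\sigma_\LL$'' claim does not follow from these citations: fixing an arbitrary $\sigma_\LL$ --- and strategies compatible with a nondeterministic Mealy machine can require infinite memory --- leaves the coalition $-\LL$ playing on the infinite tree of histories consistent with $\sigma_\LL$, an infinite arena outside the scope of those finite-game theorems. The product-game construction is exactly the device that sidesteps this difficulty, by reducing the needed existence statement (for some compatible strategy, which, as you note, suffices since $\Comp_{\|v_0}(\Mach) \neq \emptyset$) to a finite arena. Without that step, or some other proof of response existence, your argument is incomplete, and the step you deferred is in substance the entire proof.
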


%	\subsection{Complexity}

Now, an optimal algorithm for that problem in mean-payoff games requires the following lemma: in each game, there exists a least $\epsilon$ such that $\epsilon$-SPEs exist, and it can be written with a polynomially bounded number of bits.
To prove that, we need to use the notion of \emph{negotiation function}, defined in~\cite{Concur}: a function from vertex labellings to vertex labellings whose least $\epsilon$-fixed point (i.e., the least vertex labelling $\lambda$ that is a fixed point of that function up to $\epsilon$) characterizes $\epsilon$-SPEs.
Our result can be obtained by revisiting a proof of~\cite{Icalp}, that was designed to bound the number of bits required to write that least $\epsilon$-fixed point, for a fixed $\epsilon$.
Hereafter, we write $\lv \epsilon \rv$ for the number of bits required to write $\epsilon$ in a usual encoding.

	\begin{lemma}[App.~\ref{pf_epsilon}] \label{lm_epsilon}
		There exists a polynomial $P_1$ such that in every mean-payoff game $\Game_{\|v_0}$, there exists $\epsilon_{\min}$ with $\lv \epsilon_{\min} \rv \leq P_1(\lv \Game \rv)$ such that $\epsilon_{\min}$-SPEs exist in $\Game_{\|v_0}$, and $\epsilon$-SPEs, for every $\epsilon < \epsilon_{\min}$, do not.
	\end{lemma}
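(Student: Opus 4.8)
The plan is to reduce the statement to a monotonicity-and-finiteness analysis of the negotiation function $\nego$ from~\cite{Concur}. Recall that the least $\epsilon$-fixed point $\lambda^\star_\epsilon \in \bRR^V$ of $\nego$ characterizes $\epsilon$-SPEs: an $\epsilon$-SPE exists in $\Game_{\|v_0}$ if and only if $\lambda^\star_\epsilon(v)$ is finite (real-valued) at every vertex $v$ (the value $+\infty$ signalling a vertex whose owner cannot be guaranteed a consistent threshold). First I would establish monotonicity in $\epsilon$: since tolerating larger deviations only relaxes the fixed-point inequality, every $\epsilon$-fixed point is also an $\epsilon'$-fixed point for $\epsilon' \geq \epsilon$, so $\lambda^\star_{\epsilon'} \leq \lambda^\star_\epsilon$ pointwise, and the set $S = \{\epsilon \geq 0 \mid \epsilon\text{-SPEs exist in } \Game_{\|v_0}\}$ is upward closed. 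Hence $S$ is an interval with left endpoint $\epsilon_{\min} = \inf S$, and it only remains to show that this infimum is attained and has polynomial bit-size.

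To control $\epsilon_{\min}$, I would revisit the mean-payoff computation of $\nego$ carried out in~\cite{Icalp}, but treating $\epsilon$ as a parameter rather than a fixed constant. Every iterate of $\nego$, and therefore each coordinate of $\lambda^\star_\epsilon$, is built from mean-payoff values of simple cycles in auxiliary zero-sum games derived from $\Game$, shifted by integer multiples of $\epsilon$. Consequently each coordinate function $\epsilon \mapsto \lambda^\star_\epsilon(v)$ is nonincreasing and piecewise affine, with a bounded number of pieces, and with slopes and breakpoints drawn from a fixed finite set $D$ of rationals whose numerators and denominators are polynomially bounded in $\lv \Game \rv$ (the mean-payoff of a simple cycle is of the form $p/q$ with $|p| \leq |V| \cdot W$ and $1 \leq q \leq |V|$, where $W$ is the largest absolute reward). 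The quantity $\epsilon_{\min}$ is exactly the least $\epsilon$ at which all coordinates of $\lambda^\star_\epsilon$ simultaneously become finite, i.e. the maximum over the finitely many breakpoints at which some coordinate transitions from $+\infty$ to a finite value; this maximum lies in $D$, which yields $\lv \epsilon_{\min} \rv \leq P_1(\lv \Game \rv)$ for a suitable polynomial $P_1$. Moreover, since the finite branch of each coordinate includes its left endpoint, $\lambda^\star_{\epsilon_{\min}}$ is finite everywhere, so $\epsilon_{\min}$-SPEs exist and the infimum is attained.

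The main obstacle is precisely this last point: upgrading the \emph{fixed-$\epsilon$} bit-bound of~\cite{Icalp} to a statement that is uniform as $\epsilon$ varies, and pinning down the flip point of the finiteness condition. I must verify that the qualitative shape of the fixed-point computation — which cycles are selected and which vertices carry the value $+\infty$ — changes only at finitely many, polynomially describable values of $\epsilon$, and that at the critical value $\epsilon_{\min}$ the last vertex to become finite does so with a \emph{closed} (achieved) condition rather than only in the limit $\epsilon \to \epsilon_{\min}^+$. Establishing this monotone, piecewise-affine dependence on $\epsilon$ together with the closedness of the finite branch at $\epsilon_{\min}$ is the technical heart of the argument, and is where the adapted~\cite{Icalp} analysis does the real work.
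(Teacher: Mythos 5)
Your first step (monotonicity of $\lambda_\delta$ in $\delta$, hence upward-closure of $S = \{\delta \geq 0 \mid \delta\text{-SPEs exist}\}$ and the definition $\epsilon_{\min} = \inf S$) matches the paper. But the two claims that actually carry the lemma --- that the infimum is \emph{attained}, and that $\epsilon_{\min}$ has polynomial bit-size --- are exactly the ones you leave unproven: you reduce them to the assertion that each coordinate $\epsilon \mapsto \lambda_\epsilon(v)$ is piecewise affine with a bounded number of pieces, with breakpoints in a set $D$ of simple-cycle mean-payoff values ``shifted by integer multiples of $\epsilon$,'' and with a closed finite branch at its left endpoint, and you then acknowledge that establishing this is ``the technical heart.'' That is a genuine gap, not a deferred routine verification. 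Nothing in~\cite{Icalp} gives you this parametric structure: that paper fixes $\epsilon$ and describes $\nego(\lambda)$ through a finite union of polyhedra whose defining inequations mix coordinates of $\lambda$ with polynomial-size constraints; the least $\epsilon$-fixed point is then obtained by (transfinite) iteration, so there is no a priori bound on how many times $\epsilon$ gets folded into the values, no bound on the number of affine pieces, and no reason the breakpoints should be bare cycle values. The closedness of the finite branch at $\epsilon_{\min}$ (rather than finiteness only for $\epsilon > \epsilon_{\min}$) is likewise precisely the point at issue, not something that follows from the shape you postulate. (A smaller inaccuracy: existence of $\epsilon$-SPEs is characterized by $\lambda_\epsilon(v_0) \neq +\infty$ at the initial vertex only, not at every vertex.)

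The paper sidesteps the parametric analysis entirely, with two separate arguments. Attainment: it sets $\lambda : v \mapsto \sup_{\delta > \epsilon} \lambda_\delta(v)$ and invokes the \emph{Scott-continuity} of the negotiation function (proved in~\cite{Concur}) to conclude that $\lambda$ is itself an $\epsilon$-fixed point, hence $\lambda = \lambda_\epsilon$; since $\lambda(v_0)$ is a supremum of finite, uniformly bounded values, it is finite, so $\epsilon$-SPEs exist at $\epsilon = \inf S$. This is exactly the closedness fact you need, obtained as a limit statement about fixed points rather than from piecewise-affine structure. Bit-size: instead of tracking how $\lambda_\epsilon$ moves with $\epsilon$, the paper works in the product space of pairs $(\lambda, \bbx)$ with $\bbx \in X_\lambda$ (the polyhedral description of $\nego$ from~\cite{Icalp}), observes that $\epsilon_{\min}$ equals the minimum of the piecewise-linear function $f(\lambda, \bbx) = \max_{i, v \in V_i} (x_{vi} - \lambda(v))$ over this finite union of polyhedra, and that such a minimum is attained at a polyhedron vertex, whose coordinates are polynomially bounded by Corollary~1 of~\cite{Icalp}. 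So $\epsilon_{\min}$ is a value of a linear function at a polynomial-size vertex --- in particular it need not lie in any set of cycle values, which is another sign that your proposed $D$ is the wrong target. To complete your route you would essentially have to reprove both of these facts, so the missing ``technical heart'' is not an adaptation of~\cite{Icalp} but the substance of the paper's proof itself.
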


We are now equipped to prove the following theorem.
 
	\begin{theorem}[App.~\ref{pf_mp_ach_verif}] \label{thm_mp_ach_verif}
		In the class of mean-payoff games, the achaotic subgame-perfect rational verification problem, deterministic or not, is $\Poly^\NP$-complete.
	\end{theorem}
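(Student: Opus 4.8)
The plan is to disentangle the two computational ingredients hidden in the achaotic problem: pinning down the threshold $\epsilon_{\min}$ below which no subgame-perfect response exists, and then checking a universal payoff condition \emph{at that level}. First I would pass to the product game. By an $\epsilon$-analogue of Theorem~\ref{thm_product_game}, which should hold by the very same argument since the product construction only relabels histories and preserves payoffs, the $\epsilon$-subgame-perfect responses to strategies $\sigma_\LL \in \Comp_{\|v_0}(\Mach)$ correspond to the $\epsilon$-SPEs of $\Game_{\|v_0} \otimes \Mach$, together with the Leader payoffs they induce. Combined with Lemma~\ref{lm_epsilon} and the rephrasing stated just before Proposition~\ref{prop_ach_verif}, this reduces the achaotic subgame-perfect rational verification problem to the following $\epsilon_{\min}$-universal threshold problem in the product game: compute the least $\epsilon_{\min}$ admitting an $\epsilon_{\min}$-SPE, and decide whether every $\epsilon_{\min}$-SPE $\btau$ satisfies $\mu'_\LL(\< \btau \>) > t$. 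Since enlarging $\epsilon$ only enlarges the response set, the existential quantifier over $\epsilon$ in the problem statement collapses to testing the condition exactly at $\epsilon = \epsilon_{\min}$.

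For membership in $\Poly^\NP$, I would invoke the $\NP$ oracle in two phases. By Lemma~\ref{lm_epsilon}, $\epsilon_{\min}$ is representable with a number of bits polynomial in $\lv \Game \rv$, while the predicate ``an $\epsilon$-SPE exists'' is monotone in $\epsilon$ and lies in $\NP$ (it is the $\epsilon$-SPE constrained-existence problem shown $\NP$-complete in~\cite{Icalp}). A polynomial binary search over the polynomially many candidate values therefore determines $\epsilon_{\min}$ exactly using polynomially many oracle queries. Once $\epsilon_{\min}$ is fixed, a single additional query decides the complementary existence question ``is there an $\epsilon_{\min}$-SPE $\btau$ with $\mu'_\LL(\< \btau \>) \leq t$?'', which is again in $\NP$; the achaotic instance is positive precisely when this last query answers negatively. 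The whole procedure being a polynomial-time computation with an $\NP$ oracle, it lands in $\Poly^\NP$. Since a deterministic Mealy machine is a Mealy machine, the deterministic version reduces to the non-deterministic one, so both are in $\Poly^\NP$.

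For the lower bound, I would reduce from a canonical $\Poly^\NP$-complete (that is, $\Delta_2^p$-complete) problem of ``optimize-then-test'' shape, such as deciding a designated bit of the optimal weight of a pseudo-Boolean/SAT optimization. The idea is to assemble a mean-payoff game from variable and clause gadgets, in the spirit of the $\NP$-hardness construction of~\cite{Icalp}, in which the least $\epsilon$ admitting an SPE, $\epsilon_{\min}$, equals a scaled copy of that optimal weight, so that merely locating $\epsilon_{\min}$ already absorbs the oracle computations; and then to tune Leader's reward together with the threshold $t$ so that some $\epsilon_{\min}$-SPE forces $\mu_\LL \leq t$ exactly when the tested bit is $1$. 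I expect the gadget design to be the main obstacle: one must guarantee \emph{simultaneously} that $\epsilon_{\min}$ lands exactly on the encoded optimum --- controlling the minimal punishment slack through the negotiation-function characterization of $\epsilon$-SPEs of~\cite{Concur,Icalp} --- and that the family of $\epsilon_{\min}$-SPEs realizes the second-level decision through Leader's payoff, without the two levels interfering. Establishing this for the $\epsilon_{\min}$-universal threshold form transfers to both rational verification variants, since the deterministic version (a special case) reduces to the non-deterministic one; hence both are $\Poly^\NP$-hard, and with the membership argument, $\Poly^\NP$-complete.
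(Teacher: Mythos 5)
Your reduction to the ``$\epsilon_{\min}$-universal threshold'' form and your $\Poly^\NP$ membership argument are correct and essentially identical to the paper's: the paper likewise combines Lemma~\ref{lm_epsilon} with the arguments of Theorem~\ref{thm_product_game} to pass to the problem ``does every $\epsilon_{\min}$-SPE $\bsigma$ satisfy $\mu_\LL(\<\bsigma\>) > t$'', computes $\epsilon_{\min}$ by a dichotomous search using the $\NP$ oracle of~\cite{Icalp} (an $\epsilon$-SPE exists iff $\epsilon \geq \epsilon_{\min}$, and $\epsilon_{\min}$ has polynomially many bits and is bounded by $2\max_{i,uv}|r_i(uv)|$), and spends one last oracle call on the existence of an $\epsilon_{\min}$-SPE giving Leader payoff at most $t$. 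Your observation that monotonicity of $\epsilon\mapsto\epsilon\rho\R(\sigma_\LL)$ collapses the existential quantifier to $\epsilon=\epsilon_{\min}$ is also exactly the paper's rephrasing preceding Proposition~\ref{prop_ach_verif}.

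The gap is in the lower bound: you name the right target problem (a lexicographic ``optimize-then-test'' $\Delta_2^p$-complete problem; the paper uses: is $\nu_{\min}(x_n)=1$ for the lexicographically first satisfying valuation $\nu_{\min}$ of a CNF formula $\phi$?) and the right encoding idea (force $\epsilon_{\min}$ to equal a binary encoding of the optimum, then read the tested bit off the payoffs of $\epsilon_{\min}$-SPEs), but you explicitly defer the gadget construction and its correctness, and that is essentially the entire technical content of the hardness proof. The paper carries it out (Figure~\ref{fig_G_pnp}) by grafting a valuation-selection region onto the chaos example of Section~\ref{sec_chaos}: from $a$, Alice may leave the $a\leftrightarrow b$ alternation and enter a cycle in which Solver repeatedly chooses literals; literal and clause players police consistency and satisfaction by the threat of defecting to a sink $\blacktriangledown$; and Alice's rewards on the edges through $x_i$ are $2-\tfrac{m}{2^{i+1}}$ (with $m=2n+p$), so that cycling under a valuation $\nu$ gives her payoff $2-\sum_{i=1}^n \nu(x_i)/2^i$, i.e.\ her incentive to deviate back toward $c$ is exactly the binary encoding $\sum_i \nu(x_i)/2^i$. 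One must then prove both directions of $\epsilon_{\min}=\sum_i \nu_{\min}(x_i)/2^i$: every satisfying $\nu$ yields an $\epsilon$-SPE $\bsigma_\nu$ with $\epsilon=\sum_i\nu(x_i)/2^i$, and conversely every $\epsilon$-SPE with $\epsilon<1$ must infinitely often send Alice into the valuation region, where it induces a satisfying valuation and hence $\epsilon\geq\sum_i\nu_{\min}(x_i)/2^i$ --- this converse requires a careful subgame analysis of the $a\leftrightarrow b$ alternation (why Bob must exit to $c$, why $\blacktriangledown$ cannot be reached, etc.). Finally Witness's rewards are placed so that Witness's payoff is at most $0$ exactly when the play cycles through $x_n$ (rather than $\neg x_n$) infinitely often, realizing the bit test. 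Without such a construction and these equivalences, the hardness half of the theorem is only a plan; with them, your outline coincides with the paper's proof.
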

	
	\begin{proof}[Proof sketch]
		Using Lemma~\ref{lm_epsilon} and the same arguments as in the proof of Theorem~\ref{thm_product_game}, those two problems are interreducible with the following one: given $\Game_{\|v_0}$ and a $t \in \QQ$, does every $\epsilon_{\min}$-SPE $\bsigma$ in $\Game_{\|v_0}$ satisfy $\mu_\LL(\< \bsigma \>) > t$?
		Let us prove $\Poly^\NP$-completeness for that problem.
		
		\begin{itemize}
			\item \emph{Easiness.}
						By~\cite{Icalp}, there is an $\NP$ algorithm deciding, given $\epsilon$ and $\Game_{\|v_0}$, whether there is an $\epsilon$-SPE in $\Game_{\|v_0}$, i.e. whether $\epsilon \geq \epsilon_{\min}$.
			Using Lemma~\ref{lm_epsilon} and the inequality $\epsilon_{\min} \leq 2 \max_{i, uv} |r_i(uv)|$, a dichotomous search can thus compute $\epsilon_{\min}$ using polynomially many calls to that algorithm.
			Then, one last call can decide whether there exists an $\epsilon_{\min}$-SPE $\bsigma$ such that $\mu_i(\< \bsigma \>) \leq t$.

			\begin{figure*}
				\centering
                \newcommand{\deltay}{0.7}
				\begin{tikzpicture}[->,>=latex,shorten >=1pt, initial text={}, scale=0.8, every node/.style={scale=0.57}]
    %\useasboundingbox (-0.5, 4.2*\deltay) rectangle (16, -5*\deltay);
					\node[state] (?x1) at (0, 0*\deltay) {$?x_1$};
					\node (x1) at (2, 1.5*\deltay) {\dots};
					\node (nx1) at (2, -1.5*\deltay) {\dots};
					\node[state] (?xi) at (4, 0*\deltay) {$?x_i$};
					\node[state] (xi) at (6, 1.5*\deltay) {$x_i$};
					\node[state] (nxi) at (6, -1.5*\deltay) {$\neg x_i$};
					\node (?xi+1) at (8, 0*\deltay) {\dots};
					\node[state] (xn) at (10, 1.5*\deltay) {$x_n$};
					\node[state] (nxn) at (10, -1.5*\deltay) {$\neg x_n$};
					\node[state] (C1) at (12, 0*\deltay) {$C_1$};
					\node (C2) at (14, 0*\deltay) {\dots};
					\node[state] (Cp) at (16, 0*\deltay) {$C_p$};
					\node[state, initial left] (a) at (0, -4*\deltay) {$a$};
					\node[state] (b) at (2, -4*\deltay) {$b$};
					\node[state] (c) at (4, -4*\deltay) {$c$};
					\node[state] (m1) at (6, -3.5*\deltay) {$\blacktriangledown$};
					\node[state, dashed] (m2) at (6, 3.5*\deltay) {$\blacktriangledown$};
					\node[state, dashed] (m3) at (14, -1.5*\deltay) {$\blacktriangledown$};
					
					\node[blue] (A) at (0, -4.6*\deltay) {$\AA$};
					\node[blue] (B) at (2, -3.4*\deltay) {$\BB$};
					\node[blue] (S1) at (-0.4, -0.4*\deltay) {$\SS$};
					\node[blue] (S2) at (3.6, -0.4*\deltay) {$\SS$};
					\node[blue] (pxi) at (6, 0.9*\deltay) {$x_i$};
					\node[blue] (pnxi) at (6, -0.9*\deltay) {$\neg x_i$};
					\node[blue] (pxn) at (10, 0.9*\deltay) {$x_n$};
					\node[blue] (pnxn) at (10, -0.9*\deltay) {$\neg x_n$};
					\node[blue] (pC1) at (12, -0.6*\deltay) {$C_1$};
					\node[blue] (pCp) at (16, -0.6*\deltay) {$C_p$};

					\path (a) edge[bend left] node[above] {$\stack{\AA}{0}~\stack{\BB}{3}~\stack{\WW}{1}$} (b);
					\path (b) edge[bend left] node[below] {$\stack{\AA}{0}~\stack{\BB}{3}~\stack{\WW}{1}$} (a);
					\path (b) edge (c);
					\path (c) edge[loop above] node[above] {$\stack{\AA}{2}~\stack{\BB}{2}~\stack{\WW}{1}$} (c);
					\path (a) edge (?x1);
					\path (?x1) edge (x1);
					\path (?x1) edge (nx1);
					\path (?xi) edge node[above left] {$\stack{x_i}{2m}~\stack{C}{m}~\stack{\AA}{2 - \frac{m}{2^{i+1}}}$} (xi);
					\path (?xi) edge node[below left] {$\stack{\neg x_i}{2m}~\stack{C}{m}~\stack{\AA}{2}$} (nxi);
					\path (xi) edge node[above right] {$\stack{x_i}{2m}~\stack{C}{m}~\stack{\AA}{2 - \frac{m}{2^{i+1}}}$} (?xi+1);
					\path (nxi) edge node[below right] {$\stack{\neg x_i}{2m}~\stack{C}{m}~\stack{\AA}{2}$} (?xi+1);
					\path (xn) edge node[above right] {$\stack{x_n}{2m}~\stack{C}{m}~\stack{\AA}{2 - \frac{m}{2^{n+1}}}$} (C1);
					\path (nxn) edge node[below right] {$\stack{\neg x_n}{2m}~\stack{C}{m}~\stack{\AA}{2}~\stack{\WW}{1}$} (C1);
					\path (C1) edge node[above] {$\stack{\AA}{2}$} (C2);
					\path (C2) edge node[above] {$\stack{\AA}{2}$} (Cp);
					\path (Cp) edge[bend right=40] node[above] {$\stack{\AA}{2}$} (?x1);
					\path (nxi) edge (m1);
					\path (nxn) edge[bend right=15] (m1);
					\path (xi) edge (m2);
					\path (xn) edge[bend left=15] (m2);
					\path (C1) edge[bend left=15] (m3);
					\path (Cp) edge[bend right=15] (m3);
					\path (m1) edge[loop below] node[below] {$\stack{\AA}{1}~\stack{x_1, \neg x_1, \dots, x_n, \neg x_n}{4}~ \stack{C_1, \dots, C_p}{2}~\stack{\WW}{1}$} (m1);
				\end{tikzpicture}
				\caption{The game $\Game_{\|a}$}
				\label{fig_G_pnp}
			\end{figure*}
			
			\item \emph{Hardness.}
			We proceed by reduction from the following $\Poly^\NP$-complete problem: given a Boolean formula $\phi$ in conjunctive normal form over the ordered variables $x_1, \dots, x_n$, is the lexicographically first valuation $\nu_{\min}$ satisfying $\phi$ such that $\nu_{\min}(x_n) = 1$? (and in particular, does such a valuation exist?)
			Let us write $\phi = \bigwedge_{j=1}^p C_j$.
			We construct a game $\Game_{\|a}$, with a player called \emph{Witness} and written $\WW$, in which there exists an $\epsilon_{\min}$-SPE $\bsigma$ such that $\mu_\WW(\< \bsigma\>) \leq 0$ if and only if $\phi$ is satisfiable and $\nu_{\min}(x_n) = 1$.
			That game, depicted in Figure~\ref{fig_G_pnp} (unmentioned rewards are equal to $0$, and we write $m = 2n+p$), has $2n + p + 4$ players: the literal players $x_1, \neg x_1, \dots, x_n, \neg x_n$; the clause players $C_1, \dots, C_p$; the player \emph{Solver}, written $\SS$; the player \emph{Witness}, written $\WW$; the player \emph{Alice}, written $\AA$; and the player \emph{Bob}, written $\BB$.

   This game is based on the classical example of mean-payoff game in which SPEs do not exist, already presented in Section~\ref{sec_chaos}.
   In the latter, from the vertex $a$, Alice can access a sink vertex, where Bob and her both get the payoff $1$.
   Here, they access instead to a region where the choices of Solver define a valuation of $x_1, \dots, x_n$ --- unless one of the literal players chooses to go to the sink vertex $\blacktriangledown$, which will be a profitable deviation if Solver makes inconsistent choices (one literal and, later, its negation).
   That valuation $\nu$ defines Alice's payoff $\mu_\AA(\pi) = 2 - \sum_{i=1}^n \frac{\nu(x_i)}{2^i}$, and therefore defines how much deviating and reaching $c$ is profitable for her.
   Consequently, as we show in the complete version of this proof, the valuation $\nu_{\min}$ is the binary encoding of the quantity $\epsilon_{\min}$, and there is an $\epsilon_{\min}$-SPE in which Witness gets the payoff $0$ or less if and only if $\nu_{\min}(x_n) = 1$. \hspace{1em plus 1fill}\qedhere
   \end{itemize}
\end{proof}

	\bibliography{bibli}

	\appendix

    \section{Proof of Theorem~\ref{thm_deviation_game}} \label{pf_deviation_game}

\begin{customthm}{\ref{thm_deviation_game}}
		There exists a strategy profile $\bsigma \in \Comp_{\|v_0}(\Mach)$ that is \emph{not} an NE (resp. SPE) if and only if there exists a play $\pi$ in the game $\NDev(\Game, \Mach)_{\|(q_0, v_0)}$ (resp. $\SPDev(\Game, \Mach)_{\|(q_0, v_0)}$) such that $\mu_\AA(\pi) < \mu_\EE(\pi)$.
	\end{customthm}

\begin{proof}
	    We present first the proof for SPEs.
	
		\begin{itemize}
			\item \emph{If there exists $\bsigma \in \Comp_{\|v_0}(\Mach)$ that is not an SPE, then there exists a play $\pi$ in $\SPDev(\Game, \Mach)_{\|(q_0, v_0)}$ such that $\mu_\AA(\pi) < \mu_\EE(\pi)$.}
			
			If $\bsigma$ is not an SPE, then there exists a history $hv = h_0 \dots h_{n-1} v$, a player $i$, and a play $\chi'$ compatible with $\bsigma_{-i\|hv}$ such that if $\chi = \< \bsigma_{\|hv} \>$, then we have $\mu_i(h\chi) < \mu_i(h\chi')$.
			Without loss of generality, we can assume $v \in V_i$ and $\chi_1 \neq \chi'_1$.
			Then, there exists a play:
			\begin{align*}
			    \pi &= (q_0, h_0) (q_1, h_1, i) \dots (q_n, v, i) (q_{n+1}, \chi_1, i, q'_{n+1}, \chi'_1) \\
			&(q_{n+2}, \chi_2, i, q'_{n+2}, \chi'_2) \dots
			\end{align*}
			in $\SPDev(\Game, \Mach)_{\|(q_0, v_0)}$ such that $\mu_\AA(\pi) < \mu_\EE(\pi)$.

			\item \emph{If there exists such a play in $\SPDev(\Game_{\|v_0}, \Mach)$, then there exists $\bsigma \in \Comp_{\|v_0}(\Mach)$ that is not an SPE.}
			
			If such a play exists, then it has the form:
			\begin{align*}
			    \pi &= (q_0, h_0)(q_1, h_1, i) \dots (q_{n-1}, h_{n-1}, i) (q_n, \chi_0, i) \\
			    & (q_{n+1}, \chi_1, i, q'_{n+1}, \chi'_1) (q_{n+2}, \chi_2, i, q'_{n+2}, \chi'_2) \dots
			\end{align*}
			where $\mu_i(h\chi) < \mu_i(h\chi')$ (with $\chi'_0 = \chi_0$).
			Then, there exists a strategy profile $\bsigma \in \Comp_{\|v_0}(\Mach)$ such that $\chi = \< \bsigma_{\|h \chi_0} \>$, and $\chi'$ is compatible with $\bsigma_{-i\|h \chi_0}$.
			That strategy profile is therefore not an SPE.
		\end{itemize}

        The proof for NEs follows the same structure, with histories $hv$ that are compatible with a strategy $\bsigma \in \Comp_{\|v_0}(\Mach)$. \hspace{1em plus 1fill}\qedhere
	\end{proof}

\section{Proof of Theorem~\ref{thm_product_game}} \label{pf_product_game}

\begin{customthm}{\ref{thm_product_game}}
		Let $\rho \in \{\Nash, \subgameperfect\}$.
		Let $\Game_{\|v_0}$ be a game, let $\Mach$ be a Mealy machine for Leader in $\Game$, and let $t \in \QQ$.
		Then, every $\rho$-response $\bsigma_{-\LL}$ to every strategy $\sigma_\LL \in \Comp_{\|v_0}(\Mach)$ satisfies $\mu_\LL(\< \bsigma \>) > t$ if and only if every $\rho$-equilibrium $\btau$ in the game $\Game_{\|v_0} \otimes \Mach$ satisfies $\mu'_\LL(\< \btau \>) > t$.
\end{customthm}
	
	\begin{proof}
    We present first the proof when $\rho = \subgameperfect$; again, the proof for $\rho = \Nash$ follows the same structure.
	    
		\begin{itemize}
			\item \emph{If for every $\sigma_\LL \in \Comp_{\|v_0}(\Mach)$ and every response $\bsigma_{-\LL} \in \SPR(\sigma_\LL)$, we have $\mu_\LL(\< \bsigma \>) > t$, then every SPE $\btau$ in $\Game_{\|v_0} \otimes \Mach$ satisfies $\mu'_\LL(\< \btau \>) > t$.}
			
			Let $\btau$ be an SPE in the game $\Game_{\|v_0} \otimes \Mach$.
			Let us define a strategy profile $\bsigma$ in $\Game_{\|v_0}$ as follows: for every history $h = h_0 \dots h_n$ in $\Game_{\|v_0}$, let $H = (h_0, q_0)(h_0, q_0, q_1) \dots (h_n, q_{n-1}, q_n)$ be the unique history of that form in $\Game_{\|v_0} \otimes \Mach$ such that $(h_k, q_k, q_{k+1}) = \tau_\DD((h_0, q_0) \dots (h_k, q_k))$ for each $k$, and let $(v, q_n) = \btau(H)$.
			Then, we define $\bsigma(h) = v$.
			
			Since the only edges available for Demon in the game $\Game_{\|v_0} \otimes \Mach$ are those that are compatible with $\Mach$, we have $\sigma_\LL \in \Comp_{\|v_0}(\Mach)$.
			Moreover, the strategy profile $\bsigma$ is an $\LL$-fixed SPE: let $h = h_0 \dots h_n$ be a history from $v_0$ that is compatible with $\sigma_\LL$, let $i \in \Pi \setminus \{\LL\}$ and let $\sigma'_i$ be a deviation of $\sigma_i$.
   We want to prove that $\mu_i(\< \bsigma_{-i\|h}, \sigma'_{i\|h} \>) \leq \mu_i(\< \bsigma_{\|h} \>)$.
			Then, let us define the history $H$ as above, and let $\tau'_i$ be the strategy that simulates $\sigma'_i$ in the game $\Game_{\|v_0} \otimes \Mach$, i.e. that maps each history of the form $H (v_1, q_n) (v_1, q_n, q_{n+1}) \dots (v_k, q_{n+k-1}, q_{n+k})$ to the vertex $(\sigma'_i(hv_1 \dots v_k), q_{n+k})$.
			Since the strategy profile $\btau$ is an SPE, we have $\mu'_i(H_{< 2n+1} \< \btau_{-i\|H}, \tau'_{i\|H} \>) \leq \mu'_i(H_{< 2n+1} \< \btau_{\|H} \>)$, and therefore $\mu_i(h_{<n} \< \bsigma_{-i\|h}, \sigma'_{i\|h} \>) \leq \mu_i(h_{<n} \< \bsigma_{\|h} \>)$.
			
			Thus, the strategy $\bsigma$ is an $\LL$-fixed SPE that satisfies $\sigma_\LL \in \Comp_{\|v_0}(\Mach)$.
			By hypothesis, it comes that we have $\mu_\LL(\< \bsigma \>) > t$, and therefore $\mu'_\LL(\< \btau \>) > t$.

			\item \emph{If every SPE $\btau$ in $\Game_{\|v_0} \otimes \Mach$ satisfies $\mu'_\LL(\< \btau \>) > t$, then for every $\sigma_\LL \in \Comp_{\|v_0}(\Mach)$ and every response $\bsigma_{-i} \in \SPR(\sigma_\LL)$, we have $\mu_\LL(\< \bsigma \>) > t$.}
			
			Indeed, let $\bsigma$ be an $\LL$-fixed SPE in $\Game_{\|v_0}$ with $\sigma_\LL \in \Comp_{\|v_0}(\Mach)$.
			We write $h \mapsto q_h$ for the mapping establishing the compatibility of the strategy $\sigma_\LL$ with the machine $\Mach$, as defined in Section~\ref{sec_def_Mealy_machine}.
			Let us define a strategy profile $\btau$ in $\Game_{\|v_0} \otimes \Mach$ as follows: for every history $H = (h_0, q_0) (h_0, q_0, q_1) \dots (h_n, q_{n-1})$, we define $\btau(H) = (h_n, q_{n-1}, q_{h_0 \dots h_n})$.
			For every history $H = (h_0, q_0) \dots (h_n, q_{n-1}, q_n)$, we define $\btau(H) = (\bsigma(h_0 \dots h_n), q_n)$.
			
			Then, the strategy profile $\btau$ is an SPE: let $H = H_0 \dots H_m$ be a history in $\Game_{\|v_0} \otimes \Mach$, let $i$ be a player and let $\tau'_i$ be a deviation of $\tau_i$.
			If $i = \LL$, then we have $\mu'_i(H_{<m} \< \btau_{-i\|H}, \tau'_{i\|H} \>) \leq \mu'_i(H_{<m} \< \btau_{\|H} \>)$, because Leader does not control any vertex in $\Game_{\|v_0} \otimes \Mach$, hence actually $\tau'_i = \tau_i$.
			Likewise if $i = \DD$, because Demon gets the payoff $0$ in every play.
			Now, if $i \neq \DD, \LL$, let us consider without loss of generality that $H$ has the form $H = (h_0, q_0) (h_0, q_0, q_1) \dots (h_n, q_n)$ (if the last vertex is controlled by Demon, it can be removed).
			Let $(\pi_0, q_n) (\pi_0, q_n, q_{n+1}) (\pi_1, q_{n+1}) \dots = \< \btau_{-i\|H}, \tau'_{i\|H} \>$.
			The play $\pi = \pi_0 \pi_1 \dots$ is compatible with the strategy profile $\bsigma_{-i\|h}$.
			Therefore, since $\bsigma$ is an $\LL$-fixed SPE, we have $\mu_i(h_{<n} \pi) \leq \mu_i(h_{<n} \< \bsigma_{\|h} \>)$, i.e. $\mu'_i(H_{<2n+1} \< \btau_{-i\|H}, \tau'_{i\|H} \>) \leq \mu'_i(H_{<2n+1} \< \btau_{\|H} \>)$.
			
			Thus, the strategy profile $\btau$ is an SPE: by hypothesis, it comes that $\mu'_\LL(\< \btau \>) > t$, and therefore $\mu_\LL(\< \bsigma \>) > t$. \hspace{1em plus 1fill}\qedhere
		\end{itemize}
	\end{proof}

\section{Proof of Theorem~\ref{thm_parity_verif}} \label{pf_parity_verif}

 	\begin{customthm}{\ref{thm_parity_verif}}
		In the class of parity games, the Nash rational and the subgame-perfect rational verification problems, deterministic or not, are $\coNP$-complete.
	\end{customthm}
	
	\begin{proof}
	    \begin{itemize}     
	        \item \emph{Subgame-perfect rational verification.}

            In~\cite{CSL}, a problem that was more general than the complement of the subgame-perfect universal threshold problem in parity games was proved to be $\NP$-complete.
            In our setting, the upper bound is therefore already established.
            As for the lower bound, we need to present a slight modification of the reduction that was given in the aforementioned paper, itself inspired from the construction provided in~\cite{Ummels05}.
            Let $\phi = \bigwedge_{j=1}^p C_j$ be a formula of the propositional logic, over the variables $x_1, \dots, x_n$.
		      We define the parity game $\Game^\phi_{\|C_1}$ as follows:
		
		\begin{itemize}
			\item the players are the variables $x_1, \dots, x_n$, their negations, the player \emph{Solver}, denoted by $\SS$, and the player \emph{Witness}, denoted by $\WW$;
			
			\item the vertices controlled by Solver are all the clauses $C_j$, and the sink vertex $\blacktriangledown$;
			
			\item the vertices controlled by player $\l = \pm x_i$ are the pairs $(C_j, L)$, where $\l$ is a literal of $C_j$;

        \item Witness controls no vertices;
			
			\item there are edges from each clause vertex $C_j$ to all the vertices $(C_j, \l)$; from each pair vertex $(C_j, \l)$ to the vertex $C_{j+1}$, and to the sink vertex $\blacktriangledown$; and from the sink vertex $\blacktriangledown$ to itself;
			
			\item for Solver, every vertex has the color $\kappa_\SS(v) = 2$, except the vertex $\blacktriangledown$, which has color $1$; for Witness, every vertex has the color $\kappa_\WW(v) = 1$, except the vertex $\blacktriangledown$, which has color $2$; for each literal player $\l$, every vertex has the color $\kappa_\l(v) = 2$, except the vertices of the form $(C, \bar{\l})$, that have the color $1$.
		\end{itemize}

    The arena of the game $\Game^\phi$, when $\phi$ is the tautology $\bigwedge_{j=1}^6 (x_j \vee \neg x_j)$, is given by Figure~\ref{fig_Gphi}.

  \begin{figure}
    \centering
      \begin{tikzpicture}[->,>=latex,shorten >=1pt, scale=0.4, every node/.style={scale=0.6}]
			\node[state] (C1) at (0:5) {$C_1$};
			\node[state] (C11) at (30:6) {$x_1$};
			\node[state] (C12) at (30:4) {$\neg x_1$};
			\node[state] (C2) at (60:5) {$C_2$};
			\node[state] (C21) at (90:6) {$x_2$};
			\node[state] (C22) at (90:4) {$\neg x_2$};
			\node[state] (C3) at (120:5) {$C_3$};
			\node[state] (C31) at (150:6) {$x_3$};
			\node[state] (C32) at (150:4) {$\neg x_3$};
			\node[state] (C4) at (180:5) {$C_4$};
			\node[state] (C41) at (210:6) {$x_4$};
			\node[state] (C42) at (210:4) {$\neg x_4$};
			\node[state] (C5) at (240:5) {$C_5$};
			\node[state] (C51) at (270:6) {$x_5$};
			\node[state] (C52) at (270:4) {$\neg x_5$};
			\node[state] (C6) at (300:5) {$C_6$};
			\node[state] (C61) at (330:6) {$x_6$};
			\node[state] (C62) at (330:4) {$\neg x_6$};
			\node[state] (b) at (0,0) {$\blacktriangledown$};
			
			\path[->] (C1) edge (C11);
			\path[->] (C1) edge (C12);
			\path[->] (C11) edge (C2);
			\path[->] (C12) edge (C2);
			\path[->] (C11) edge[bend left=30] (b);
			\path[->] (C12) edge (b);
			\path[->] (C2) edge (C21);
			\path[->] (C2) edge (C22);
			\path[->] (C21) edge (C3);
			\path[->] (C22) edge (C3);
			\path[->] (C21) edge[bend left=30] (b);
			\path[->] (C22) edge (b);
			\path[->] (C3) edge (C31);
			\path[->] (C3) edge (C32);
			\path[->] (C31) edge (C4);
			\path[->] (C32) edge (C4);
			\path[->] (C31) edge[bend left=30] (b);
			\path[->] (C32) edge (b);
			\path[->] (C4) edge (C41);
			\path[->] (C4) edge (C42);
			\path[->] (C41) edge (C5);
			\path[->] (C42) edge (C5);
			\path[->] (C41) edge[bend left=30] (b);
			\path[->] (C42) edge (b);
			\path[->] (C5) edge (C51);
			\path[->] (C5) edge (C52);
			\path[->] (C51) edge (C6);
			\path[->] (C52) edge (C6);
			\path[->] (C51) edge[bend left=30] (b);
			\path[->] (C52) edge (b);
			\path[->] (C6) edge (C61);
			\path[->] (C6) edge (C62);
			\path[->] (C61) edge (C1);
			\path[->] (C62) edge (C1);
			\path[->] (C61) edge[bend left=30] (b);
			\path[->] (C62) edge (b);
			\path (b) edge[loop above] (b);
		\end{tikzpicture}
		\caption{The game $\Game^\phi$.}
		\label{fig_Gphi}
  \end{figure}

            Intuitively: from the vertex $C_j$, Solver chooses which literal $\l$ of $C_j$ will be satisfied by the valuation she tries to construct.
            Then, from the vertex $(C_j, \l)$, player $\l$ may go to the vertex $\blacktriangledown$ and win; if she does not, then Solver has to make her win, which means that she cannot choose the literal $\bar{\l}$ for another clause (at least not infinitely often).
            Solver wins (and Witness loses) if and only if the vertex $\blacktriangledown$ is never reached.
            As proved in~\cite{CSL}, there exists an SPE that is won by Solver in this game if and only if the formula $\phi$ is satisfiable; that is, \emph{every} SPE is won by Witness if and only if $\phi$ is \emph{not} satisfiable.
            Consequently, the subgame-perfect universal threshold problem in parity game is $\coNP$-complete.
		    The $\coNP$-completeness of the subgame-perfect rational verification problem follows by Corollary~\ref{cor_reductions}.
		    
		    	        \item \emph{Nash rational verification.}
	        
	        By Corollary~\ref{cor_reductions}, the Nash rational verification problem in parity games is $\coNP$-complete if and only if the Nash universal threshold problem is.
	        
	        That problem is $\coNP$-hard by the same reduction as above: in the game $\Game^\phi$, we can show that every NE outcome is an SPE outcome (for example using Theorem~1 of~\cite{CSL}).
	        
	        By Lemma~\ref{lm_ne}, the NE outcomes in a parity game are exactly the plays $\pi$ such that, for each player $j$ such that $\mu_j(\pi) = 0$, for each vertex $v \in \Occ(\pi) \cap V_j$, there exists a strategy profile $\bsigma_{-j}$ from $v$ such that for every strategy $\sigma_j$, we have $\mu_j(\< \bsigma_{-j}, \sigma_j \>) = 0$.
	        Deciding the existence of such a strategy profile is an $\NP \cap \coNP$-easy problem --- see for example~\cite{DBLP:conf/cav/EmersonJS93}.
	        
	        Moreover, according to~\cite{CSL} again, if there exists such a play $\pi$ satisfying $\mu_i(\pi) \leq t$ for a given player $i$ and a given threshold $t$, there exists one of the form $h c^\omega$, where the lengths of $h$ and $c$ are bounded by a polynomial function of the size of $\Game$.
	        Therefore, an $\NP$ algorithm that recognizes a negative instance of the Nash universal threshold problem is the following: first, guess such a play $hc^\omega$, a set $W \subseteq V$, and for each $v \in W$, a certificate of the fact that from $v$, there exists a strategy profile that prevents the player controlling $v$ to get the payoff $1$ (since that problem is $\NP \cap \coNP$-easy, and therefore $\NP$-easy, such a certificate exists, has polynomial size and can be checked in polynomial time).
	        Then, in polynomial time, check those certificates, check that $\mu_i(hc^\omega) \leq t$, and check that the players controlling vertices of $W$ are the only players who lose along $h c^\omega$. \hspace{1em plus 1fill}\qedhere
	    \end{itemize}
\end{proof}

\section{Proof of Theorem~\ref{thm_reach_nash_verif}} \label{pf_reach_nash_verif}

 \begin{customthm}{\ref{thm_reach_nash_verif}}
    In quantitative reachability games, the  Nash rational verification problem, deterministic or not, is $\coNP$-complete.
\end{customthm}

\begin{proof}    
    We prove here that the Nash universal threshold problem is $\coNP$-complete.
    The theorem will follow by Corollary~\ref{cor_reductions}.
    
    \begin{itemize}
        \item \emph{Hardness.}
        
        Let us proceed by reduction from the $\NP$-complete problem $\Sat$ to the complement of the Nash universal threshold problem.
        Let $\phi = \bigwedge_{j=1}^p C_j$ be a formula from the propositional logic in conjunctive normal form, over the variables $x_1, \dots, x_n$.
        We wish to define a quantitative reachability game $\Game_{\|v_0}$, a player $i$ and a threshold $t$, such that the formula $\phi$ is satisfiable if and only if there exists an NE $\bsigma$ satisfying $\mu_i(\< \bsigma \>) \leq t$. 
        
        Let $\Game_{\|v_0}$ be the $p+1$-player game presented in Figure~\ref{fig_G_ne_reach}.
        Its players are \emph{Solver}, written $\SS$, and the clauses $C_1, \dots, C_p$.
        When it is a relevant information, the player controlling a vertex is written in blue.
        The target set of Solver is $\{\blacktriangle\}$, and the target set of each clause player $C_j$ is $\{\blacktriangle\} \cup \{\l ~|~ \l \text{ is a literal of } C_j\}$.
        
        \begin{figure*}
				\centering
				\begin{tikzpicture}[->,>=latex,shorten >=1pt, initial text={}, scale=0.7, every node/.style={scale=0.7}]
					\node[state, initial left] (?x1) at (0, 0) {$?x_1$};
					\node[state] (x1) at (2, 1.5) {$x_1$};
					\node[state] (nx1) at (2, -1.5) {$\neg x_1$};
					\node (?xi) at (4, 0) {};
					\node (dots) at (6, 0) {\dots};
					\node[state] (?xn) at (8, 0) {$?x_n$};
					\node[state] (xn) at (10, 1.5) {$x_n$};
					\node[state] (nxn) at (10, -1.5) {$\neg x_n$};
					\node[state] (C1) at (12, 0) {$C_1$};
					\node (Ci) at (14, 0) {\dots};
					\node[state] (Cp) at (16, 0) {$C_p$};
					\node[state] (s) at (14, 2) {$\blacktriangle$};
					\node[state] (m) at (18, 0) {$\blacktriangledown$};
					
					\node[blue] (S1) at (0, -0.7) {$\SS$};
					\node[blue] (S1) at (8, -0.7) {$\SS$};
					\node[blue] (C1') at (12, -0.7) {$C_1$};
					\node[blue] (Cp') at (16, -0.7) {$C_p$};
					
					\path (?x1) edge (x1);
					\path (?x1) edge (nx1);
					\path (x1) edge (?xi);
					\path (nx1) edge (?xi);
					\path (?xn) edge (xn);
					\path (?xn) edge (nxn);
					\path (xn) edge (C1);
					\path (nxn) edge (C1);
					\path (C1) edge (Ci);
					\path (Ci) edge (Cp);
					\path (C1) edge (s);
					\path (Cp) edge (s);
					\path (Cp) edge (m);
					\path (s) edge[loop above] (s);
					\path (m) edge[loop right] (m);
				\end{tikzpicture}
				\caption{The game $\Game_{\|?x_1}$}
				\label{fig_G_ne_reach}
			\end{figure*}
		
		Finally, we define $t = 0$ and $i = \SS$.
		That construction can be made in polynomial time: let us now prove that there exists an NE $\bsigma$ satisfying $\mu_\SS(\< \bsigma \>) = 0$ if and only if the formula $\phi$ is satisfiable.
		
		\begin{itemize}
		    \item \emph{If there exists such an NE $\bsigma$.}
		    
		    Then, let $\pi = \< \bsigma \>$, and let us define the valuation $\nu$ that maps each variable $x$ to $1$ if $\pi$ traverses the vertex $x$, and to $0$ otherwise.
		    
		    Let $C_j$ be a clause of $\phi$.
		    Since $\mu_\SS(\< \bsigma \>) = 0$, necessarily, the play $\pi$ traverses the vertex $C_j$, and does not reach the vertex $\blacktriangle$.
		    Therefore, if going to $\blacktriangle$ is not a profitable deviation for player $C_j$, it means that a vertex of her target set has already been visited before.
		    In other words, there is at least one literal $\l$ of $C_j$ such that the vertex $\l$ is traversed by $\pi$.
		    If it is a positive literal $\l = x$, then we have $\nu(x) = 1$ and it is satisfied.
		    If it is a negative literal $\l = \neg x$, then the vertex $x$ is not visited, hence $\nu(x) = 0$, and $\l$ is also satisfied, hence $C_j$ is satisfied, hence $\phi$ is satisfied by $\nu$.

		    \item \emph{If $\phi$ is satisfiable.}
		    
		    Let $\nu$ be a valuation that satisfies it.
		    Let us define the strategy profile $\bsigma$ as follows: after every history of the form $h?x$, we have $\sigma_\SS(h?x) = x$ if $\nu(x) = 1$, and $\sigma_\SS(h?x) = \neg x$ otherwise.
		    Then, for every history of the form $hC_j$, we have $\sigma_{C_j}(hC_j) = C_{j+1}$, or $\sigma_{C_j}(hC_j) = \blacktriangledown$ if $j = p$.
		    
		    Thus, we have $\mu_\SS(\< \bsigma \>) = 0$, as desired.
		    Let us now prove that $\bsigma$ is an NE.
		    Since no clause player plans to go to $\blacktriangle$ after any history, Solver does not have any profitable deviation.
		    Let now $C_j$ be a clause player: since the clause $C_j$ is satisfied by $\nu$, one of its literals is satisfied --- let us write it $\l$.
		    Then, the vertex $\l$, which belongs to the target set of player $C_j$, is traversed by the play $\< \bsigma \>$ before it reaches a vertex controlled by player $C_j$; hence that player has no profitable deviation either.
		    The strategy profile $\bsigma$ is an NE satisfying $\mu_\SS(\< \bsigma \>) = 0$.
		\end{itemize}

		\item \emph{Easiness.}
		
		Let $n = \card V$ and $p = \card \Pi$.
  
		A non-deterministic algorithm that decides this problem in polynomial time operates as follows: first, for each player $j$ and each vertex $u \in V_j$, compute the value $\lambda(u) = \inf_{\bsigma_{-j}} \sup_{\sigma_j} \mu_j(\< \bsigma \>)$, the best payoff that player $j$ can enforce from $u$, or in other words (by Martin's theorem) the worst payoff the other players can force player $j$ to get.
        That can be achieved by a classical fixed-point algorithm: first, for each $v \in V$, initialize $\lambda_0(v) = 1$ if $v \in T_j$ and $\lambda_0(v) = 0$ otherwise.
        Then, for each $k$ and each $v$, define:
        $$\lambda_{k+1}(v) = \begin{cases}
            \underset{vw \in E}{\max} \frac{1}{1 + \frac{1}{\lambda_k(w)}} &\text{if } v \in V_j \\
            \underset{vw \in E}{\min} \frac{1}{1 + \frac{1}{\lambda_k(w)}} &\text{otherwise}.
        \end{cases}$$
        Compute that sequence until $\lambda_k = \lambda_{k+1}$: then, we have $\lambda(u) = \lambda_k(u)$.
        Since for each $k$ and $v$, we have $\lambda_k(v) \in \{0, \frac{1}{n}, \frac{1}{n -1}, \dots, 1\}$, that computation requires at most $\card V$ iterations, and can therefore be done in deterministic polynomial time.
        
        Once that computation is done, the Nash rational verification problem reduces, by Lemma~\ref{lm_ne}, to deciding whether there exists a play $\pi$ in $\Game_{\|v_0}$ such that $\mu_i(\< \bsigma \>) \leq t$, and that for each player $j$ and each vertex $\pi_k \in V_j$, we have either $\Occ(\pi_{< k}) \cap T_j \neq \emptyset$ (player $j$ has already reached their target set) or $\mu_j(\pi_{\geq k}) \geq \lambda(\pi_k)$.
        If such a play exists, there exists one that is a lasso $hc^\omega$ where $|h| \leq 2np+n$ and $c$ is a simple cycle.
        
        Indeed, let $\pi$ be such a play.
        For each $k$, let $P_k$ be the set of the players for which a constraint must be satisfied at step $k$ --- i.e., the set of players $j$ such that, for some $\l \leq k$, we have $\pi_\l \in V_j$ and $\lambda(\pi_\l) \neq 0$.
        Then, since for every $v$, we have $\lambda(v) \in \{0, \frac{1}{n}, \frac{1}{n -1}, \dots, 1\}$, once a constraint appears for some player $j$, it must be satisfied within at most $n$ steps --- in other words, if $j \in P_k$, then $j \not\in P_\l$ for $\l \geq k + n$.
        Therefore, let us consider the prefix $\pi_{\leq 2np+n}$: necessarily, there are indices $k < \l \leq 2np+n$ such that $\pi_k = \pi_\l$ and $P_k = \dots = P_\l = \emptyset$.
        Let us, moreover, choose $k$ and $\l$ such that the cycle $\pi_{k+1} \dots \pi_\l$ is a simple cycle (if it is not the case, we can increase $k$ and decrease $\l$).
        And then, the play $\chi = \pi_{\leq k} (\pi_{k+1} \dots \pi_\l)^\omega$ is also an NE play satisfying $\mu_i(\chi) \leq t$.
        
        As a consequence, at this step, the algorithm can guess non-deterministically such a lasso, and then check in polynomial time that player $i$ gets a payoff smaller than or equal to $t$, and that all the constraints are correctly satisfied. \hspace{1em plus 1fill}\qedhere
    \end{itemize}
\end{proof}

\section{Proof of Theorem~\ref{thm_energy_det_ne_checking}} \label{pf_energy_det_ne_checking}

\begin{customthm}{\ref{thm_energy_det_ne_checking}}
    In energy games, the deterministic Nash-checking problem can be decided in polynomial time.
\end{customthm}

\begin{proof}
    Given an energy game $\Game_{\|v_0}$ and a deterministic multiplayer Mealy machine $\Mach$, the unique strategy profile $\bsigma \in \Comp_{\|v_0}(\Mach)$ is not an NE if and only if these two conditions are satisfied for some player $i$:
    \begin{itemize}
        \item $\mu_i(\< \bsigma \>) = 0$;
        
        \item there exists a play $\pi$ compatible with $\bsigma_{-i}$ such that $\mu_i(\pi) = 1$.
    \end{itemize}
    The first condition can be checked in polynomial time, since the play $\< \bsigma \>$ is a lasso whose size is bounded by a polynomial function of $\lv \Mach \rv$.
    The second condition is satisfied if and only if there exists a play giving player $i$ the payoff $1$ in the one-player game:
    $$\Game' = \left( \{i\}, V \times Q, (V \times Q), E', \mu'\right),$$
    where $\mu': (\pi_0, q_0)(\pi_1, q_1) \dots \mapsto \mu_i(\pi)$, and $E'$ contains the transitions $(u, p)(v, q)$ such that either $(p, u, q, v) \in \Delta$, or $u \in V_i$ and there exists $v'$ such that $(p, u, q, v') \in \Delta$.
    Checking the existence of such a play can be done in polynomial time according to Theorem 7 in~\cite{DBLP:conf/formats/BouyerFLMS08}.
\end{proof}

\section{Proof of Theorem~\ref{thm_energy_checking}} \label{pf_energy_checking}

	\begin{customthm}{\ref{thm_energy_checking}}
		In energy games, the Nash-checking, the subgame-perfect-checking, and the deterministic subgame-perfect-checking problems are $\coNP$-complete.
	\end{customthm}
	
\begin{proof}
    \begin{itemize}
        \item \emph{Hardness.}
        
        Let us proceed by reduction from the $\NP$-complete problem $\SubsetSum$.
        Let $\rho \in \{\Nash, \subgameperfect\}$.
        Let $S = \{s_1, \dots, s_n\} \subseteq \NN$ and let $t \in \NN$.
        We want to construct an energy game $\Game_{\|v_0}$ and a multiplayer Mealy machine $\Mach$, such there exists a strategy profile $\bsigma \in \Comp_{\|v_0}(\Mach)$ that is not a $\rho$-equilibrium if and only if there exists a subset $S' \subseteq S$ such that $\sum_{s \in S'} s = t$.
        
        Let $\Game_{\|v_0}$ be the two-player game presented in Figure~\ref{fig_G_subsetsum}, with two players, player $\Circle$ and player $\Box$.
        Once again, all the rewards that are not written, and in particular all player $\Circle$'s rewards, are equal to $0$.
        
        \begin{figure*}
				\centering
				\begin{tikzpicture}[->,>=latex,shorten >=1pt, initial text={}, scale=1, every node/.style={scale=0.7}]
					\node[state, initial left] (?s1) at (0, 0) {$?s_1$};
					\node[state] (s1) at (2, 1.5) {$s_1$};
					\node[state] (ns1) at (2, -1.5) {$\neg s_1$};
					\node (?si) at (4, 0) {};
					\node (dots) at (6, 0) {\dots};
					\node[state] (?sn) at (8, 0) {$?s_n$};
					\node[state] (sn) at (10, 1.5) {$s_n$};
					\node[state] (nsn) at (10, -1.5) {$\neg s_n$};
					\node[state, rectangle] (BB) at (12, 0) {$\Box$};
					\node[state] (s) at (14, 1.5) {$\blacktriangle$};
					\node[state] (m) at (14, -1.5) {$\blacktriangledown$};
					
					\path (?s1) edge node[above left] {$\stack{\Box}{s_1}$} (s1);
					\path (?s1) edge (ns1);
					\path (s1) edge (?si);
					\path (ns1) edge (?si);
					\path (?sn) edge node[above left] {$\stack{\Box}{s_n}$} (sn);
					\path (?sn) edge (nsn);
					\path (sn) edge (BB);
					\path (nsn) edge (BB);
					\path (BB) edge node[above left] {$\stack{\Box}{-t}$} (s);
					\path (BB) edge node[below left] {$\stack{\Box}{-t-1}$} (m);
					\path (s) edge[loop right] (s);
					\path (m) edge[loop right] (m);
				\end{tikzpicture}
				\caption{The game $\Game_{\|?s_1}$}
				\label{fig_G_subsetsum}
			\end{figure*}
			
			\begin{itemize}
			    \item Let us first use this game to prove the $\coNP$-hardness of the Nash-checking problem: let us define $\Mach$ as the one-state non-deterministic machine that enables all the actions for both players.
			    
			    Assume that there exists a subset $S' \subseteq S$ such that $\sum_{s \in S'} s = t$.
			    Then, let us consider the strategy profile $\bsigma$ defined by $\sigma_\Box(h\Box) = \blacktriangledown$ and $\sigma_\playcircle(h?s_i) = s_i$ if $s_i \in S'$, and $\sigma_\playcircle(h?s_i) = \neg s_i$ otherwise, for every $h$ and $i$.
			    That strategy profile is compatible with $\Mach$, and it is not an NE, since player $\Box$ has exactly the energy level $t$ when he reaches the vertex $\Box$, and therefore loses by going to $\blacktriangledown$ while he could win by going to $\blacktriangle$.
			    
			    Conversely, assume that there is a strategy profile $\bsigma$ compatible with $\Mach$ that is not an NE.
			    Let $\pi = \< \bsigma \>$.
			    Since player $\Circle$ cannot lose in $\Game$, the player who has a profitable deviation is player $\Box$.
			    Since player $\Box$ controls only the vertex $\Box$, that can appear only once along $\pi$, the only possibility is that player $\Box$ loses the play $\pi = \pi_{\leq 2n} \Box \blacktriangledown^\omega$ but wins the play $\pi_{\leq 2n} \Box \blacktriangle^\omega$.
			    That means that $\EL_\Box(\pi_{\leq 2n} \Box) = t$.
			    Therefore, if $S' = \{s \in s ~|~ \text{the vertex s appears in } \pi\}$, we have $\sum_{s \in S'} s = t$.
			    
			    Thus, the subset-sum problem reduces in polynomial time to the complement of the Nash-checking problem, which is therefore $\coNP$-hard.

			    \item Let us now use the same game to prove that the deterministic subgame-perfect-checking problem, and therefore the subgame-perfect-checking problem, are $\coNP$-hard.
			    Let us define $\Mach$ as a one-state deterministic machine that defines arbitrarily the actions of player $\playcircle$ and that forces player $\Box$, from the vertex $\Box$, to go to the vertex $\blacktriangledown$.
			    Let $\bsigma$ be the unique strategy profile compatible with $\Mach$.
			    
			    Then, if there exists a subset $S' \subseteq S$ such that $\sum_{s \in S'} s = t$, we can define the history $h = h_0 \dots h_{2n}$ with $h_{2i+1} = s_i$ if $s_i \in S'$ and $h_{2i+1} = \neg s_i$ otherwise.
			    Then, we have $\EL_\Box(h) = t$, hence in the subgame $\Game_{\|h}$, player $\Box$ loses by going to $\blacktriangledown$, while he could win by going to $\blacktriangle$, hence $\bsigma$ is not an SPE.
			    
			    Conversely, if $\bsigma$ is not an SPE, then let $h$ be such that $\bsigma_{\|h}$ is not an NE.
			    Necessarily, the player who has a profitable deviation is player $\Box$.
			    Thus, without loss of generality, we can assume that $h$ ends in the vertex $\Box$.
			    Then, if we define $S' = \{s ~|~ s \text{ appears along } h\}$, we have $\sum_{s \in S'} = t$.
			    
			    Thus, the problem $\SubsetSum$ reduces in polynomial time to the complement of the deterministic subgame-perfect-checking problem, which is therefore $\coNP$-hard.
			\end{itemize}

	    \item \emph{Easiness.}
	    
	    Let $i \in \Pi$.
	    Let us consider the graph $G_i$, whose vertex set is $V \times Q$, and whose edges are partitioned into \emph{green edges}, i.e. the edges $(u, p)(v, q)$ with $(p, u, q, v) \in \Delta$, and the \emph{red edges}, i.e. the edges $(u, p)(v, q)$ with $(p, u, q, w) \in \Delta$ for some $w \neq v$, with $uv \in E$ and $u \in V_i$.
	    We define the reward function $r: (u, p)(v, q) \mapsto r_i(uv)$.
	    That graph may be seen as the product game $\Game_{\|v_0} \otimes \Mach$, but where $\Mach$ is a multiplayer Mealy machine, hence Demon controls all the vertices.
	    
	    There exists a strategy profile $\bsigma \in \Comp_{\|v_0} (\Mach)$ in which player $i$ has a profitable deviation in some subgame (resp. in which player $i$ has a profitable deviation) if and only if in the graph $G_i$, there exists a finite path $p$ (resp. a path $p$ made of green edges) from the vertex $(v_0, q_0)$ to a vertex $(v, q)$ with $\EL_r = n$, and if from $(v, q)$ with energy $n$, there exist a path $(v, q) \pi$ that uses only green edges and that is lost by player $i$, and a path $(v, q) \pi'$ that can use all edges and that is won by player $i$, with $\pi_0 \neq \pi'_0$.
	    
	    A non-deterministic algorithm that recognizes negatives instances of the subgame-perfect-checking problem (resp. the Nash-checking problem) in polynomial time is therefore the following one: first, for each player $i$, for each vertex $w$ of the graph $G_i$, compute in deterministic polynomial time, using the algorithm provided by~\cite{DBLP:conf/formats/BouyerFLMS08}:
	    \begin{itemize}
	        \item the least integer $a(w)$ such that there exists a winning infinite path $\pi'$ starting from $w$ with energy $a(w)$;
	        
	        \item the greatest integer $b(w)$ such that there exists a losing infinite path $\pi$ starting from $w$ with energy $b(w)$ using only green edges.
	    \end{itemize}
	    
	    Then, use the $\NP$ algorithm provided by~\cite{DBLP:conf/concur/HaaseKOW09} to decide whether there exists a vertex $(v, q)$ in $G_i$ and two edges $(v, q)w \neq (v, q)w'$, such that from the vertex $(v_0, q_0)$, one can reach $(v, q)$ (resp. one can reach $(v, q)$ using only green edges) with energy level $n$ satisfying $n + r((v, q)w) \leq b(w)$ and $n + r((v, q)w') \geq a(w)$. \hspace{1em plus 1fill}\qedhere
    \end{itemize}
\end{proof}

\section{Two-counter machines} \label{app_two_counter_machines}

A \emph{two-counter machine} is a tuple
	$\Kount = \left(Q, q_0, q_\f, \Delta_{C_1^+}, \Delta_{C_2^+}, \Delta_{C_1^-}, \Delta_{C_2^-}\right),$
	where $Q$ is a finite set of \emph{states}, where $q_0 \in Q$ is the \emph{initial state} and $q_\f$ is the \emph{final state}, and where $\Delta_{C_1^+}, \Delta_{C_2^+} \in Q \times Q$ are the sets of \emph{incremental transitions}, and $\Delta_{C_1^-}, \Delta_{C_2^-} \subseteq Q \times Q \times Q$ are the sets of \emph{test transitions}, such that every state $q \in Q \setminus \{q_\f\}$ admits exactly one outgoing transition (either incremental or test), and such that the state $q_\f$ admits none.
	For each \emph{counter} $C \in \{C_1, C_2\}$, we define the \emph{interpretation} of $C$ as the function $\hC: q_0 Q^* \to \NN$ defined by $\hC(q_0) = 0$, and $\hC(q_0 \dots q_n) = \hC(q_0 \dots q_{n-1}) + 1$ if $(q_{n-1}, q_n) \in \Delta_{C^+}$, $\hC(q_0 \dots q_n) = \hC(q_0 \dots q_{n-1}) - 1$ if $(q_{n-1}, q_n, q) \in \Delta_{C^-}$ for some $q$, and $\hC(q_0 \dots q_n) = \hC(q_0 \dots q_{n-1})$ otherwise.
	A \emph{run} of $\Kount$ is a sequence $\eta_0 \eta_1 \dots$, either infinite or ending with $q_\f$, such that for every index $k$, we have either $(\eta_k, \eta_{k+1}) \in \Delta_{C^+}$, or $(\eta_k, \eta_{k+1}, q) \in \Delta_{C^-}$ for some $q$ and $\hC(\eta_0 \dots \eta_k) > 0$, or $(\eta_k, q, \eta_{k+1}) \in \Delta_{C^-}$ for some $q$ and $\hC(\eta_0 \dots \eta_k) = 0$, for some $C \in \{C_1, C_2\}$.
	Note that each two-counter machine admits exactly one run.
	A machine \emph{halts} if that unique run is finite.
	The halting problem of a two-counter machine is undecidable, and in particular not co-recursively enumerable.
	
	Figures~\ref{fig_counter_machine1} and~\ref{fig_counter_machine2} present two examples of two-counter machines.
		The arrow with the label $C_1^{+}$ from $q_0$ to $q_1$ indicates a transition $(q_0, q_1) \in \Delta_{C_1^{+}}$.
		The arrows with the label $C_1^{-}$ from $q_1$ to itself and with the label $C_1 = 0$ from $q_1$ to $q_\f$ indicate a transition $(q_1, q_1, q_\f) \in \Delta_{C_1^{-}}$.
		
		\begin{figure}
			\caption{A machine that halts on $(0, 0)$}
				\centering
				\begin{tikzpicture}[->,>=latex,shorten >=1pt, initial text={}, scale=1, every node/.style={scale=0.8}]
					\node[state, initial left] (q0) at (0, 0) {$q_0$};
					\node[state] (q1) at (2, 0) {$q_1$};
					\node[state, double] (qf) at (4, 0) {$q_\f$};
					
					\path (q0) edge node[above] {$C_1^{+}$} (q1);
					\path (q1) edge [loop above] node[above] {$C_1^{-}$} (q1);
					\path (q1) edge node[above] {$C_1=0$} (qf);
				\end{tikzpicture}
				\label{fig_counter_machine1}
        \end{figure}
        \begin{figure}
			\caption{A machine that does not halt on $(0, 0)$}
				\centering
				\begin{tikzpicture}[->,>=latex,shorten >=1pt, initial text={}, scale=1, every node/.style={scale=0.8}]
					\node[state, initial left] (q0) at (0, 0) {$q_0$};
					\node[state] (q1) at (2, 0) {$q_1$};
					\node[state, double] (qf) at (4, 0) {$q_\f$};
					
					\path[bend left] (q0) edge node[above] {$C_1^{+}$} (q1);
					\path[bend left] (q1) edge node[below] {$C_1^{-}$} (q0);
					\path (q1) edge node[above] {$C_1=0$} (qf);
				\end{tikzpicture}
				\label{fig_counter_machine2}	
		\end{figure}
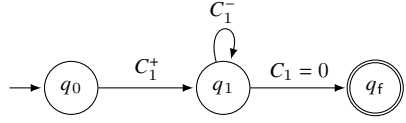

\section{Proof of Theorem~\ref{thm_energy_nash_verif}} \label{pf_energy_nash_verif}

\begin{customthm}{\ref{thm_energy_nash_verif}}
    In energy games, the  Nash rational verification problem, deterministic or not, is undecidable and co-recursively enumerable.
\end{customthm}

\begin{proof}
    We prove here that the Nash UT problem is undecidable and co-recursively enumerable.
    The theorem will follow by Corollary~\ref{cor_reductions}.

\begin{itemize}
    \item \emph{Undecidability.}

    We show undecidability by reduction from the halting problem of a two-counter machine.

\begin{figure}
    \centering
    \begin{subfigure}[b]{0.24\textwidth}
			\centering
			\begin{tikzpicture}[->,>=latex,shorten >=1pt, initial text={}, scale=0.8, every node/.style={scale=0.7}]
				\node (1) at (3, 0) {};
				\node[state, initial left] (q0) at (0, 0) {$q_0^1$};
                \node[state] (q0') at (2, 0) {$q_0^2$};
                \node[blue] (pq0) at (0, -0.6) {$C_1^\top$};
                \node[blue] (pq0') at (2, -0.6) {$C_2^\top$};
				\path (q0) edge (q0');
                \path (q0') edge (1);
				\path (q0) edge[loop above] (q0);
                \path (q0') edge[loop above] (q0');
			\end{tikzpicture}
			\caption{Initial state}
			\label{fig_gadget_initial_app}
		\end{subfigure}
		\begin{subfigure}[b]{0.2\textwidth}
			\centering
			\begin{tikzpicture}[->,>=latex,shorten >=1pt, initial text={}, scale=0.8, every node/.style={scale=0.7}]
				\node[state, initial above] (qf) at (0, 0) {$q_\f$};
				\path (qf) edge[loop right] node[right] {$\stack{C_1^\bot}{-1}\,\stack{C_2^\bot}{-1}\,\stack{\WW}{-1}$} (qf);
			\end{tikzpicture}
			\caption{Final state}
			\label{fig_gadget_final_app}
		\end{subfigure}
		\begin{subfigure}[b]{0.2\textwidth}
			\centering
			\begin{tikzpicture}[->,>=latex,shorten >=1pt, initial text={}, scale=1, every node/.style={scale=0.8}]
				\node[state, initial left] (q) at (0, 0) {$q$};
				\node (2) at (1, 0) {};
				\path (q) edge node[above] {$\stack{C^\top}{1}\,\stack{C^\bot}{1}$} (2);
			\end{tikzpicture}
			\caption{Incrementations}
			\label{fig_gadget_incrementation_app}
		\end{subfigure}
	\begin{subfigure}[b]{0.3\textwidth}
		\centering
		\begin{tikzpicture}[->,>=latex,shorten >=1pt, initial text={}, scale=1, every node/.style={scale=0.8}]
            \newcommand{\deltay}{0.5}
            %\useasboundingbox (-1.2, 2.1*\deltay) rectangle (2, -0.75*\deltay);
			\node (1) at (-1, 0) {};
			\node[state] (q) at (0, 0) {$q$};
			\node (2) at (2, 0*\deltay) {(if $C > 0$)};
			\node[state] (q') at (0, 2*\deltay) {$q'$};
			\node (3) at (2, 2*\deltay) {(if $C = 0$)};
			\node[state] (s) at (-1.2, 2*\deltay) {$\blacktriangle$};
   \node[blue] (pq) at (-0.4, -0.4) {$C^\top$};
   \node[blue] (pq') at (-0.4, 2*\deltay-0.4) {$C^\bot$};
			
			\path (1) edge (q);
			\path (q) edge node[above] {$\stack{C^\top}{-1}\,\stack{C^\bot}{-1}$} (2);
			\path (q) edge (q');
			\path (q') edge (3);
			\path (q') edge node[above] {$\stack{C^\bot}{-1}$} (s);
			\path (s) edge [loop left] (s);
		\end{tikzpicture}
		\caption{Tests}
		\label{fig_gadget_test_app}
	\end{subfigure}
    \caption{Gadgets}
    \label{fig_gadget_app}
\end{figure}

	Let $\Kount$ be a two-counter machine.
	We define an energy game $\Game_{\|q_0^1}$ with five players --- players $C_1^\top$, $C_1^\bot$, $C_2^\top$, $C_2^\bot$, and $\WW$, called \emph{Witness} --- by assembling the gadgets presented in Figure~\ref{fig_gadget_app} --- the rewards that are not presented are equal to $0$, and the players controlling relevant vertices are written in blue.
	For each state of $\Kount$, we define from one to two vertices, plus the additional vertex $\blacktriangle$.
	Then, a play in $\Game_{\|v_0}$ that does not reach the vertex $\blacktriangle$ simulates a sequence of transitions of $\Kount$, that can be a valid run or not: at each step, the counter $C_i$ is captured by the energy level of player $C_i^\top$, always equal to the energy level of player $C_i^\bot$.
	Let us now prove that the game $\Game_{\|q_0^1}$ admits an NE where Witness loses if and only if the machine $\Kount$ terminates.
	
	\begin{itemize}
		\item \emph{If such an NE exists.}
		
		Then, let us write $\bsigma$ for it, and let $\pi = \< \bsigma \>$.
		Since $\pi$ is lost by Witness, the play $\pi$ reaches the vertex $q_\f$, and therefore simulates a terminating sequence of transitions of $\Kount$ (note that one transition may be represented by several edges).
        We must now prove that that run is valid, i.e. that tests are simulated correctly.
        
        Let us assume that at some point along the play $\pi$, from a vertex $q$, player $C_i^\top$, with $i \in \{1, 2\}$, does not take the edge to the vertex $q'$ while her energy level is zero.
        Then, she loses, hence she has a profitable deviation at the beginning of the play, by looping on the vertex $q_0^i$.

        Let us now assume that she goes to the vertex $q'$, while her energy is positive.
        Then, player $C_i^\bot$'s energy is also positive: he can go to the vertex $\blacktriangle$ and win.
        That would be a profitable deviation, since, as the play $\pi$ reaches the vertex $q_\f$, it is lost by player $C_i^\bot$.

        Therefore, the play $\pi$ does not fake any test, and simulates correctly the machine $\Kount$, which terminates.
	
		\item \emph{If the machine $\Kount$ terminates.}
		
		Then, let us define a strategy profile $\bsigma$ in $\Game_{\|q_0^1}$ as follows: in tests of counter $C$, player $C^\top$ goes to $q'$ if and only if her energy level is positive; from $q'$, player $C^\bot$ never goes to the vertex $\blacktriangle$.
        Let $\pi = \< \bsigma \>$: since tests are simulated correctly, the play $\pi$ simulates the run of $\Kount$, and therefore reaches the vertex $q_\f$.
        It is therefore lost by Witness, $C_1^\bot$ and $C_2^\bot$.
  
		Witness has no profitable deviation, since he does not control any vertex.
        The only vertices controlled by each player $C^\bot$ are the vertices of the form $q'$.
        Those vertices are reached only when player $C^\top$'s, and therefore player $C^\bot$'s energy level is zero.
        Then, deviating and going to the vertex $\blacktriangle$ is not profitable for player $C^\bot$, since it makes him immediately lose.
        The strategy profile $\bsigma$ is therefore an NE, lost by Witness.
	\end{itemize}

    Every NE outcome in the game $\Game_{\|q_0^1}$ is won by Witness if and only if the machine $\Kount$ does not terminate.
    Therefore, the halting problem of two-counter machines reduces to the Nash UT problem in energy games, which is therefore undecidable.

    \item \emph{Co-recursive enumerability.}
    
    First let us prove that in an energy game $\Game_{\|v_0}$, if there exists an NE that makes some player $i$ lose, then there exists a finite-memory one.

    Indeed, in such a case, let $\pi = \< \bsigma \>$.
    By Dickson's lemma, there exist two indices $m$ and $n$, with $m < n$, such that $\pi_m = \pi_n$ and such that for every player $j$ that does not lose the play $\pi$, we have $\EL_j(\pi_{\leq m}) \leq \EL_j(\pi_{\leq n})$.
    Moreover, we can chose $m$ great enough to have $\EL_j(\pi_{\leq m}) = \bot$ for every player $j$ that loses the play $\pi$.
    Thus, the players winning the play $\pi$ are exactly the players winning the play $\chi = \pi_{< m} \left(\pi_m \dots \pi_{n - 1}\right)^\omega$.

    Now, let $k \leq n$, and let $j$ be the player controlling the vertex $\pi_k$.
    If $j$ is a player who loses the play $\pi$, then since $\bsigma$ is an NE, the strategy profile $\bsigma_{\|\pi_{\leq k}}$ is a strategy profile against which player $j$ cannot win.
    It is known (see for example~\cite{DBLP:conf/formats/BouyerFLMS08}, Lemma~10) that memoryless strategies are sufficient to falsify an energy objective.
    Therefore, let $\btau^k_{-j}$ be a memoryless strategy profile, from the vertex $\pi_k$, against which player $j$ cannot win.
    Let $\tau^k_j$ be an arbitrary memoryless strategy.
    If $j$ is not a player who loses the play $\pi$, then we define $\btau^k$ as an arbitrary memoryless strategy profile.

    Let $\Mach$ be the deterministic Mealy machine defined as follows: it has $2n$ states, namely $q_0, \dots, q_{n-1}$ and $q'_0, \dots, q'_{n-1}$.
    From each state $q_k$, the transition reading the vertex $\pi_k$ leads to the state $q_{k+1}$ (or $q_m$ if $k = n-1$), and outputs the vertex $\pi_{k+1}$ (or $\pi_m$ if $k = n-1$).
    The transition reading any other vertex $v$ (if $k \geq 1$) leads to the state $q'_k$ and outputs the vertex $\btau^{k-1}(\pi_{< k} v)$.
    From the state $q'_k$, the transition reading each vertex $v$ leads to $q'_k$, and outputs the vertex $\btau^{k-1}(h v)$, for any history $h$ (since $\btau^{k-1}$ is memoryless).
    Thus, the only strategy profile compatible with the machine $\Mach$ is the strategy profile that follows the play $\chi$, and that punishes any player who deviates by following the memoryless strategy profile $\btau^k$.
    That strategy profile is finite-memory, and is lost by player $i$, as desired.

    Thus, a semi-algorithm that recognizes the negative instances of the UT problem consists in enumerating the deterministic multiplayer Mealy machines on $\Game_{\|v_0}$, and for each of them, to check (by diagonalization):
    \begin{itemize}
        \item whether the only strategy profile compatible with it is an NE: that problem is decidable (in polynomial time) by Theorem~\ref{thm_energy_checking};

        \item whether that strategy profile makes player $i$ lose: that is recursively enumerable, by constructing step by step its outcome and computing the energy levels on the fly.
    \end{itemize}
    We have a negative instance of the UT problem if and only if at least one Mealy machine satisfies those two conditions.
    The Nash UT problem is therefore co-recursively enumerable. \hspace{1em plus 1fill}\qedhere
\end{itemize}
\end{proof}

\section{Proof of Theorem~\ref{thm_energy_sp_verif}} \label{pf_energy_sp_verif}

\begin{customthm}{\ref{thm_energy_sp_verif}}
		In energy games, the subgame-perfect rational verification problem, deterministic or not, is undecidable, even when Leader plays against only two players.
	\end{customthm}

\begin{proof}
    We prove first the undecidability of the subgame-perfect UT problem with two players.
	We proceed by reduction from the halting problem of a two-counter machine.
	
	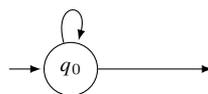
\begin{figure}
			\centering
			\begin{tikzpicture}[->,>=latex,shorten >=1pt, initial text={}, scale=1, every node/.style={scale=0.8}]
				\node (1) at (2, 0) {};
				\node[state, initial left] (q0) at (0, 0) {$q_0$};
				\path (q0) edge (1);
				\path (q0) edge[loop above] (q0);
			\end{tikzpicture}
			\caption{Gadget for the initial state}
			\label{fig_gadget_initial_sp}
		\end{figure}
		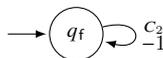
\begin{figure}
			\centering
			\begin{tikzpicture}[->,>=latex,shorten >=1pt, initial text={}, scale=1, every node/.style={scale=0.8}]
				\node (1) at (-1, 0) {};
				\node[state] (qf) at (0, 0) {$q_\f$};
				\path (1) edge (qf);
				\path (qf) edge[loop right] node[right] {$\stack{C_2}{-1}$} (qf);
			\end{tikzpicture}
			\caption{Gadget for the final state}
			\label{fig_gadget_final_sp}
		\end{figure}
		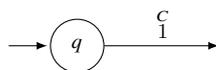
\begin{figure}
			\centering
			\begin{tikzpicture}[->,>=latex,shorten >=1pt, initial text={}, scale=1, every node/.style={scale=0.8}]
				\node[state] (q) at (0, 0) {$q$};
				\node (1) at (-1, 0) {};
				\node (2) at (2, 0) {};
				\path (1) edge (q);
				\path (q) edge node[above] {$\stack{C}{1}$} (2);
			\end{tikzpicture}
			\caption{Gadget for incrementations of counter $C \in \{C_1, C_2\}$}
			\label{fig_gadget_incrementation_sp}
		\end{figure}
	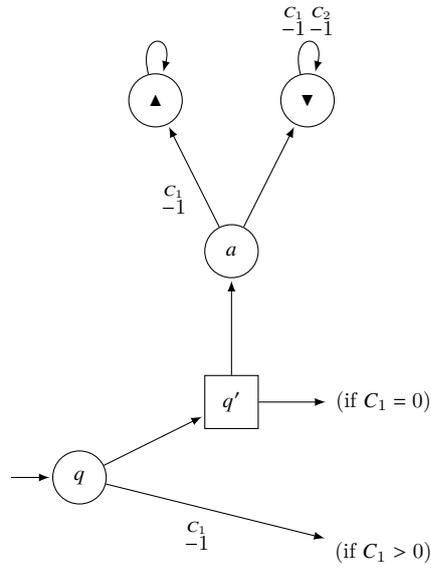
\begin{figure}
		\centering
		\begin{tikzpicture}[->,>=latex,shorten >=1pt, initial text={}, scale=1, every node/.style={scale=0.8}]
			\node (1) at (-1, 0) {};
			\node[state] (q) at (0, 0) {$q$};
			\node (2) at (4, -1) {(if $C_1 > 0$)};
			\node[state, rectangle] (q') at (2, 1) {$q'$};
			\node (3) at (4, 1) {(if $C_1 = 0$)};
			\node[state] (q'') at (2, 3) {$a$};
			\node[state] (top) at (1, 5) {$\blacktriangle$};
			\node[state] (bot) at (3, 5) {$\blacktriangledown$};
			
			\path (1) edge (q);
			\path (q) edge node[below left] {$\stack{C_1}{-1}$} (2);
			\path (q) edge (q');
			\path (q') edge (3);
			\path (q') edge (q'');
			\path (q'') edge node[below left] {$\stack{C_1}{-1}$} (top);
			\path (q'') edge (bot);
			\path (top) edge [loop above] (top);
			\path (bot) edge [loop above] node[above] {$\stack{C_1}{-1} \, \stack{C_2}{-1}$} (bot);
		\end{tikzpicture}
		\caption{Gadget for tests of counter $C_1$}
		\label{fig_gadget_test_C1_sp}
	\end{figure}
	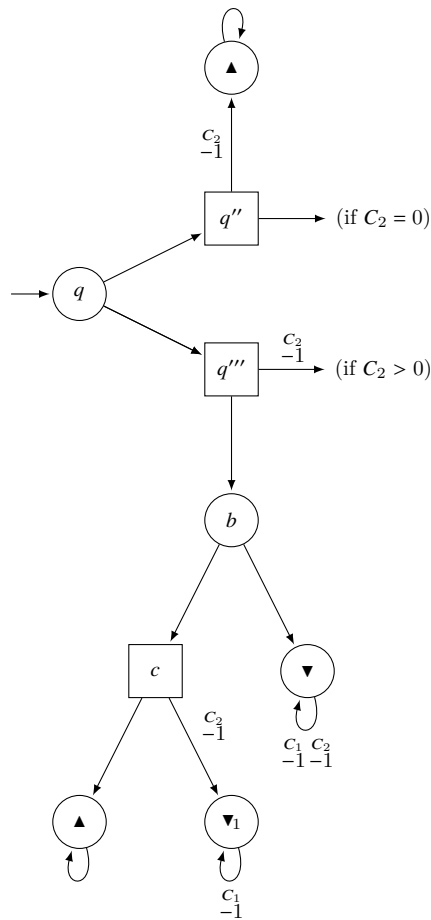
\begin{figure}
				\centering
				\begin{tikzpicture}[->,>=latex,shorten >=1pt, initial text={}, scale=1, every node/.style={scale=0.8}]
					\node (1) at (-1, 0) {};
					\node[state] (q) at (0, 0) {$q$};
					\node[state, rectangle] (q') at (2, 1) {$q''$};
					\node (2) at (4, 1) {(if $C_2 = 0$)};
					\node[state] (s1) at (2, 3) {$\blacktriangle$};
					\node[state, rectangle] (q'') at (2, -1) {$q'''$};
					\node (3) at (4, -1) {(if $C_2 > 0$)};
					\node[state] (q''') at (2, -3) {$b$};
					\node[state, rectangle] (q'''') at (1, -5) {$c$};
					\node[state] (s2) at (0, -7) {$\blacktriangle$};
					\node[state] (mc) at (2, -7) {$\triangleone$};
					\node[state] (m) at (3, -5) {$\blacktriangledown$};
					
					\path (1) edge (q);
					\path (q) edge (q');
					\path (q) edge (q'');
					\path (q') edge node[left] {$\stack{C_2}{-1}$} (s1);
					\path (s1) edge[loop above] (s1);
					\path (q') edge (2);
					\path (q) edge (q'');
					\path (q'') edge node[above] {$\stack{C_2}{-1}$} (3);
					\path (q'') edge (q''');
					\path (q''') edge (m);
					\path (q''') edge (q'''');
					\path (q'''') edge (s2);
					\path (q'''') edge node[above right] {$\stack{C_2}{-1}$} (mc);
					\path (m) edge [loop below] node[below] {$\stack{C_1}{-1} \, \stack{C_2}{-1}$} (m);
					\path (s2) edge [loop below] (s2);
					\path (mc) edge [loop below] node[below] {$\stack{C_1}{-1}$} (mc);
				\end{tikzpicture}
				\caption{Gadget for tests of counter $C_2$}
				\label{fig_gadget_test_C2_sp}
			\end{figure}

	Let $\Kount$ be a two-counter machine.
	We define an energy game $\Game_{\|q_0}$ with two players, player $C_1$ and player $C_2$, by assembling the gadgets presented in Figures~\ref{fig_gadget_initial_sp}, \ref{fig_gadget_final_sp}, \ref{fig_gadget_incrementation_sp}, \ref{fig_gadget_test_C1_sp}, and \ref{fig_gadget_test_C2_sp} --- the rewards that are not presented are equal to $0$, the round vertices are those controlled by player $C_1$, and the square ones are those controlled by player $C_2$.
	For each state of the machine $\Kount$, we define from one to three vertices, plus six additional vertices, written $a$, $b$, $c$, $\blacktriangle$, $\blacktriangledown$, and $\triangleone$.
	
	Then, a play in $\Game$ that does not reach one of the sink vertices $\blacktriangle$, $\blacktriangledown$, or $\triangleone$ simulates a sequence of transitions of the machine $\Kount$, that can be a run or not: at each step, the counter $C$ is captured by the energy level of player $C$.
	Let us now prove that the game $\Game_{\|q_0}$ admits an SPE where player $C_2$ loses if and only if the machine $\Kount$ terminates.
	
	\begin{itemize}
		\item \emph{If such an SPE exists.}
		
		Then, let us write $\bsigma$ for it, and let $\pi = \< \bsigma \>$.
  Let us show that $\pi$ simulates a run of $\Kount$.
		Since $\pi$ is lost by player $C_2$, there are \emph{a priori} three possibilities:
		
		\begin{itemize}			
			\item \emph{The play $\pi$ reaches the vertex $\blacktriangledown$.}
			Then, player $C_1$ loses, and has therefore a profitable deviation by looping on $\pi_0 = q_0$, which is impossible.
			
			\item \emph{Player $C_1$ makes player $C_2$ lose by going to a vertex of the form $q'''$, when his energy level is zero.} Thus, the play $\pi$ simulates a spurious run of $\Kount$, since such an action amounts to fake a test of $C_2$ above zero.
			But then, player $C_2$ has a profitable deviation by going to $b$: there, player $C_1$ cannot go to the vertex $\blacktriangledown$, because it would make her lose, while she can win by going to the vertex $c$; indeed, from there, player $C_2$ cannot go to the vertex $\triangleone$ because it would make him lose, since he has no more energy, while he can go to the vertex $\blacktriangle$ and win --- and let player $C_1$ win.
			Therefore, that case is also impossible.
			
			\item \emph{The play $\pi$ reaches the vertex $q_\f$.}
			Then, it simulates a correct run of the machine $\Kount$, that reaches the state $q_\f$.
			Indeed, we have already shown that $\pi$ cannot fake a test of $C_2$ above $0$.
			It cannot fake a test of $C_1$ above $0$ either, because then, player $C_1$ would lose, while she can win by looping on $q_0$.
			It cannot fake a test of $C_2$ to $0$, because then, from the vertex $q''$, if player $C_2$'s energy is greater than $0$, he has a profitable deviation by going to the vertex $\blacktriangle$.
			Finally, it cannot fake a test of $C_1$ to $0$, because then, from the vertex $q'$, if player $C_1$'s energy is greater than $0$, player $C_2$ has a profitable deviation by going to $a$, from where player $C_1$ cannot go to the vertex $\blacktriangledown$, because it would make her lose while she can win by going to $\blacktriangle$.
   Therefore, the machine $\Kount$ terminates.
		\end{itemize}
	
		\item \emph{If the machine $\Kount$ terminates.}
		
		Then, let us define a strategy profile $\bsigma$ in $\Game_{\|q_0}$ as follows:
		\begin{itemize}
			\item in tests of counter $C_1$, player $C_1$ goes to $q'$ if and only if her energy level is zero or $\bot$; from $q'$, player $C_2$ goes to the vertex $a$ if and only if player $C_1$'s energy level is positive;
			
			\item in tests of counter $C_2$, player $C_1$ goes to $q''$ if and only if the energy level of player $C_2$ is zero or $\bot$; from $q''$, player $C_2$ goes to the vertex $\blacktriangle$ if and only if his energy level is positive, and from $q'''$, he goes to $b$ if and only if it is zero or $\bot$;
			
			\item from the vertex $a$, player $C_1$ goes to the vertex $\blacktriangledown$ if and only if her energy level is zero or $\bot$;
			
			\item from the vertex $b$, player $C_1$ goes to the vertex $\blacktriangledown$ if and only if the energy level of player $C_2$ is positive;
			
			\item from the vertex $c$, player $C_2$ goes to the vertex $\triangleone$ if and only if his energy level is positive.
		\end{itemize}
		
		Let us show that $\bsigma$ is an SPE.
		Let $hv$ be a history from the vertex $q_0$: let us prove that $\bsigma_{\|hv}$ is an NE.
		
		\begin{itemize}
			\item \emph{If $v$ is a vertex of the form $q$.}
			
			Then, the play $\< \bsigma_{\|hv} \>$ simulates the correct run of the machine $\Kount$ from $q$ when the counters are initialized to $\EL_{C_1}(hv)$ and $\EL_{C_2}(hv)$ (if one of those energy levels is $\bot$, then it simulates the correct run of $\Kount$ where that counter is locked to $0$)
			Therefore, player $C_1$ wins (or has already lost), hence she cannot have a profitable deviation.
			As for player $C_2$, he cannot have a profitable deviation from a vertex of the form $q'$: when such a vertex is reached, player $C_1$'s energy is zero or $\bot$, hence if player $C_2$ chooses to go to $a$, player $C_1$ will go to $\blacktriangledown$ and he will lose.
			He cannot have a profitable deviation from a vertex of the form $q''$ either: when such a vertex is reached, he has a zero or $\bot$ energy level, hence if he chooses to go to the vertex $\blacktriangle$, he loses.
			Finally, he cannot have a profitable deviation from a vertex of the form $q'''$: when such a vertex is reached, he has a positive energy level, hence if he goes to the vertex $b$, player $C_1$ will go to the vertex $\blacktriangledown$ and he will lose.
			
			\item \emph{If $v$ is a vertex of the form $q'$.}
			
			Then, if player $C_1$'s energy level is positive, player $C_2$ goes to $a$, then player $C_1$ goes to $\blacktriangle$, and both player $C_1$ and $C_2$ win --- and have therefore no profitable deviation.
			Otherwise, the substrategy profile $\bsigma_{\|hv}$ simulates a correct run of $\Kount$, and we can use the same arguments as in the previous point.
			
			\item \emph{If $v$ is a vertex of the form $q''$.}
			
			Then, if player $C_2$'s energy level is positive, he goes to $\blacktriangle$ and wins --- and no player has a profitable deviation.
			Otherwise, the substrategy profile $\bsigma_{\|hv}$ simulates a correct run of $\Kount$, and we can use the same arguments than in the first point.
			
			\item \emph{If $v$ is a vertex of the form $q'''$.}
			
			Then, if player $C_2$'s energy level is zero or $\bot$, player $C_2$ goes to $b$, then player $C_1$ goes to $c$, and finally player $C_2$ goes to $\blacktriangle$, and both player $C_1$ and $C_2$ win --- they have therefore no profitable deviation.
			Otherwise, the substrategy profile $\bsigma_{\|hv}$ simulates a correct run of $\Kount$, and we can use the same arguments than in the first point.
			
			\item \emph{If $v = a$.}
			
			Then, either player $C_1$ has a positive energy level, and then she goes to $\blacktriangle$ and wins, or she has a zero or $\bot$ energy level, and then she cannot win.
			
			\item \emph{If $v = b$.}
			
			Then, either player $C_2$ has a zero or $\bot$ energy level, and then the play ends in $\blacktriangle$ and both players win (or have already lost), or he has a positive energy level, and then player $C_1$ cannot win, since player $C_2$ plans to go to $\triangleone$.
			
			\item \emph{If $v = c$.}
			
			Then, either $C_2$ has a positive energy level, and he goes to $\triangleone$ and wins, or he has a zero energy level, and he goes to $\blacktriangle$ and wins, or he has already lost.
			
			\item \emph{If $v = \blacktriangledown$, $\triangleone$, $q_\f$ or $\blacktriangle$,} then the proof is immediate.
		\end{itemize}
	
		Therefore, the strategy profile $\bsigma$ is an SPE, that simulates the run of the machine $\Kount$.
  That run terminates, hence the play $\< \bsigma \>$ reaches the vertex $q_\f$, and is lost by player $C_2$. 
	\end{itemize}

    Thus, the halting problem of two-counter machines reduces to the subgame-perfect UT problem in energy games with two-players, which is therefore undecidable.
    That same problem reduces, by adding a Leader controlling no vertex, to the deterministic subgame-perfect rational verification problem in energy games where Leader plays against two players.
    Consequently, that problem is itself undecidable.
\end{proof}

\section{Proof of Proposition~\ref{pptn_energy_infinite_memory}} \label{pf_energy_infinite_memory}

\begin{figure}
			\centering
			\begin{tikzpicture}[->,>=latex,shorten >=1pt, initial text={}, scale=0.8, every node/.style={scale=0.7}]
   %\useasboundingbox (0, 1.1) rectangle (6, -1.1);
				\node[state, initial left] (a) at (0, 0) {$a$};
				\node[state, rectangle] (b) at (2, 1) {$b$};
				\node[state, diamond] (c) at (2, -1) {$c$};
				\node[state] (d) at (4, 0) {$d$};
				\node[state] (e) at (6, 0) {$e$};
				\path (a) edge[bend right=20] (b);
				\path (b) edge[bend right=20] node[above left] {$\stackrel{\playcircle}{1}\stackrel{\Box}{1} \stackrel{\Diamond}{1}$} (a);
				\path (a) edge[bend left=20] (c);
				\path (c) edge[bend left=20] node[below left] {$\stackrel{\playcircle}{1}\stackrel{\Box}{1} \stackrel{\Diamond}{1}$} (a);
				\path (b) edge node[above] {$\stackrel{\playcircle}{1}$} (d);
				\path (c) edge node[below] {$\stackrel{\playcircle}{1}$} (d);
				\path (d) edge [loop above] node {$\stackrel{\playcircle}{-1} \, \stackrel{\Box}{-1} \, \stackrel{\Diamond}{-1}$} (d);
				\path (d) edge (e);
				\path (e) edge [loop right] (e);
			\end{tikzpicture}
			\caption{A game where infinite memory is necessary to make player $\Box$ lose}
			\label{fig_energy_infinite_memory_bis}
		\end{figure}
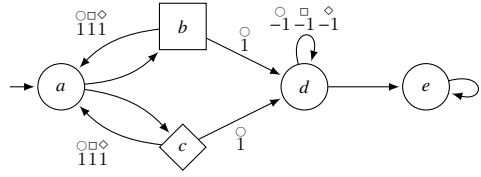

\begin{customprop}{\ref{pptn_energy_infinite_memory}}
		In the energy game presented in Figure~\ref{fig_energy_infinite_memory_bis}, there exists an SPE that makes player $\Box$ lose, but no finite memory SPE can achieve that result.
	\end{customprop}

\begin{proof}
	Consider the strategy profile $\bsigma$ defined by $\< \bsigma \> = acdde^\omega$, by $\< \bsigma_{\|hca} \> = abd^{\EL_\playcircle(hcabd)} e^\omega$ for every history $hca$, by $\< \bsigma_{\|hba} \> = acd^{\EL_\playcircle(hbacd)} e^\omega$ for every history $hba$, and by $\< \bsigma_{\|hd} \> = d^{\EL_\playcircle(hd)} e^\omega$ for every history $hd$.

 %for every integer $k$, by
  %  $\< \bsigma_{\|(acab)^k a} \> = ac d^{2k+1} e^\omega$, and $\< \bsigma_{\|(acab)^k aca} \> = ab d^{2k+2} e^\omega$.
    Intuitively: in every subgame, player $\Circle$ makes player $\Box$ and player $\Diamond$ lose.
    But to do so, she needs one of those players to cooperate with her: for example, in the main subgame $\Game_{\|a}$, she wants to make player $\Box$ lose, and to do so, she traverses the vertex $c$ to go to the vertex $d$.
    But then, player $\Diamond$ may deviate, and go back to the vertex $a$.
    Then, player $\Diamond$ has to be punished: and to do so, player $\Circle$ must go to the vertex $d$ through the vertex $b$, with player $\Box$'s cooperation\dots and so on.
    Once the vertex $d$ is reached (which will eventually be the case in every subgame), player $\Circle$'s energy level is equal to player $\Box$'s and player $\Diamond$'s one, plus 1.
    Thus, to make those players lose without losing herself, player $\Circle$ loops on the vertex $d$ exactly the right number of times, before going to the vertex $e$.

    Thus, the strategy profile $\bsigma$ is an SPE in which player $\Box$ loses --- but it is not a finite memory SPE.
    Let now $\btau$ be a finite memory SPE in $\Game_{\|a}$, compatible with a deterministic Mealy machine $\Mach$, and let us assume that $\mu_\Box(\< \btau \>) = 0$.
    Since, by hypothesis, we have $\mu_\Box(\< \btau \>) = 0$ (and therefore $\mu_\Diamond(\< \btau \>) = 0$, since players $\Box$ and $\Diamond$ always receive the same rewards), we also have, by induction and because $\btau$ is an SPE, the equality $\mu_\Box(\< \btau_{\|ha} \>) = \mu_\Diamond(\< \btau_{\|h a} \>) = 0$ for every history $h$ that is compatible with $\sigma_\playcircle$.
    
    Let now $n$ be the number of states of the machine $\Mach$, and let us consider a history $ha$ with $|h| = 2n$ (and therefore $\EL_\playcircle(ha) = \EL_\Box(ha) = \EL_\Diamond(ha) = n$).
    By the previous proposition, we know that $\mu_\Box(\< \btau_{\|ha} \>) = \mu_\Diamond(\< \btau_{\|h a} \>) = 0$, i.e. that the play $\< \btau_{\|ha} \>$ reaches the vertex $d$, and takes the edge $dd$ more than $n$ times.
    Since the machine $\Mach$ has only $n$ states, it means that that play actually loops on the vertex $d$ infinitely often, and is therefore also lost by player $\Circle$.
    Then, player $\Circle$ has a profitable deviation in that subgame, by going to the vertex $e$: contradiction.

    There exists an SPE that makes player $\Box$ lose in that game, but no finite-memory one. \hspace{1em plus 1fill}\qedhere
\end{proof}

\section{Proof of Theorem~\ref{thm_ds_checking}} \label{pf_ds_checking}

\begin{customthm}{\ref{thm_ds_checking}}
		In discounted-sum games, the Nash-checking and the subgame-perfect checking problems, deterministic or not, can be decided in polynomial time.
	\end{customthm}
	
	\begin{proof}
        By Corollary~\ref{cor_privilege_problem}, those four problems reduce to the privilege problem.
	
		Let $\Game_{\|v_0}$ be a discounted-sum game.
		Let $r$ be the reward function defined by $r(uv) = r_\EE(uv) - r_\AA(uv)$ for each edge $uv$.
        Then, a play $\pi$ in $\Game_{\|v_0}$ satisfies $\mu_\EE(\pi) > \mu_\AA(\pi)$ if and only if it satisfies $\DS^\lambda_r(\pi) > 0$.
        
        If such a play exists, then there exists one that has the form $hc^\omega$, with $|h|, |c| \leq n$, where $n$ is the number of vertices in $\Game$.
        Indeed, let $\pi$ be a play with $\DS^\lambda_r(\pi) > 0$.
        Let $k$ be the smaller index such that the vertex $\pi_k$ appears again in $\pi_{>k}$, and let $\l$ be the index of the second occurrence of $\pi_k$ along $\pi$.
        If we have $\DS^\lambda_r\left(\pi_{\leq k} (\pi_k \dots \pi_\l)^\omega\right) > 0$, then we can take $h = \pi_{\leq k}$ and $c = \pi_{k+1} \dots \pi_\l$.
        Otherwise, we can remove the sequence of $\pi_k \dots \pi_\l$ and start again with the play $\pi' = \pi_0 \dots \pi_k \pi_{\l+1} \dots$
        The procedure is necessarily finite since $\DS^\lambda_r(\pi) > 0$.

        Therefore, such a play exists if and only if there exists a vertex $v$ that can be reached from $v_0$ in $k \leq n$ steps with discounted sum $x$, and that can be reached from itself in $\l \leq n$ steps with discounted sum $y$, with $x$ and $y$ such that:
        $$x + \frac{y \lambda^k}{1 - \lambda^\l} > 0.$$
        The existence of such a vertex can be checked in polynomial time by Bellman-Ford's algorithm.
	\end{proof}

 \section{Proof of Theorem~\ref{thm_ds_verif_easiness}} \label{pf_ds_verif_easiness}

	\begin{customthm}{\ref{thm_ds_verif_easiness}}
		In discounted-sum games, the Nash rational and the subgame-perfect rational verification problems, deterministic or not, are recursively enumerable.
	\end{customthm}
	
	\begin{proof}
	    We present here a semi-algorithm that recognizes positive instances of the subgame-perfect universal (resp. Nash universal) threshold problem.
     The result will follow by Corollary~\ref{cor_reductions}.
	
		Given a discounted-sum game $\Game_{\|v_0}$ with discount factor $\lambda$, a player $i$ and a threshold $t$, we give an algorithm that stops if and only if there exists no SPE (resp. NE) $\bsigma$ in $\Game_{\|v_0}$ with $\mu_i(\< \bsigma \>) < t$.
		But first, let us give a preliminary result that will justify the correctness of our result.
		
		\begin{itemize}
			\item {\em Preliminary result: For every play $\pi$, each player $j$ and every index $n$, we have:
				$$\mu_j(\pi) \in \left[ \DS_j\left(\pi_{\leq n}\right) - M \lambda^n,    \DS_j\left(\pi_{\leq n}\right) + M \lambda^n \right],$$
				where:
    $$M = \frac{1}{1 - \lambda} \max_{uv \in E} ~ \max_{j \in \Pi} ~ |r_j(uv)|$$
    is a bound on the payoff (in absolute value) of every player.}
			
			Let us proceed by induction on $n$.
			When $n = 0$, for every play $\pi$ starting from $v_0$ and each player $j$, we have:
			\begin{align*}
				\mu_j(\pi) & = \sum_k r_j(\pi_k \pi_{k+1}) \lambda^k \\
				& \leq \sum_k \lambda^k \max_{uv \in E} \max_{j \in \Pi} |r_j(uv)| = M
			\end{align*}
			and symetrically $\mu_j(\pi) \geq -M$, which is the desired interval since $\DS_i(\pi_0) = 0$.
			
			Now, if the desired property is true for $n \geq 0$, let us prove that it is true for $n + 1$.
			Let $v \pi$ be a play.
			By induction hypothesis, we have $\mu_j(\pi) \in \left[ \DS_j(\pi_{\leq n}) - M \lambda^n, \DS_j(\pi_{\leq n}) + M \lambda^n \right].$
			Hence:
			\begin{align*}
				\mu_j(v\pi) & = r_j(v\pi_0) + \sum_k r_j(\pi_k \pi_{k+1})\lambda^{k+1} \\
				& = r_j(v\pi_0) + \lambda \mu_j(\pi) \\
				& \in \left[ r_j(v\pi_0) + \lambda \DS_j(\pi_{\leq n}) - M \lambda^{n+1}\right.,\\
				& ~~~ \left. r_j(v\pi_0) + \lambda \DS_j(\pi_{\leq n}) + M \lambda^{n+1} \right] \\
				& = \left[ \DS_j(v_0\pi_{\leq n}) - M \lambda^{n+1}, \DS_j(v_0\pi_{\leq n}) + M \lambda^{n+1} \right],
			\end{align*}
			as desired.
			
			\item \emph{Algorithm.}
			
			Let $\left(h^{(n)}\right)_{n \in \NN}$ be a recursive enumeration of the nonempty histories in $\Game_{\|v_0}$ by increasing order of lengths (we can, for example, order histories of the same length with the lexicographic order induced by some arbitrary order on vertices).
			
			Now, let $T$ be the infinite tree whose nodes of depth $n+1$ are all possible $n$-uples $\left(\bsigma(h^{(0)}), \dots, \bsigma(h^{(n)})\right)$, and where the children of the node $\left(\bsigma(h^{(0)}), \dots, \bsigma(h^{(n)})\right)$ are the nodes of the form $\left(\bsigma(h^{(0)}), \dots, \bsigma(h^{(n)}), \bsigma(h^{(n+1)})\right)$.
			Thus, every node partially defines a complete strategy profile $\bsigma$, and every infinite branch entirely defines it --- and every complete strategy profile is defined by an infinite branch.
			
			A node $\left(\bsigma(h^{(0)}), \dots, \bsigma(h^{(n)})\right)$ is called \emph{subgame-irrational} if there exist two indices $\l, m \leq n$, with $|h^{(\l)}| = |h^{(m)}| = p$, an index $k \leq p$ and a player $j$ such that:
			\begin{itemize}
				\item we have $h^{(\l)}_{\leq k} = h^{(m)}_{\leq k}$;
				
				\item the history $h^{(\l)}_{\geq k}$ is compatible with the strategy profile $\bsigma_{\|h^{(\l)}_{\leq k}}$, as partially defined there;
				
				\item the history $h^{(m)}_{\geq k}$ is compatible with the strategy profile $\bsigma_{-j\|h^{(m)}_{\leq k}}$;
				
				\item and finally, we have:
				$$\DS_j\left(h^{(m)}_{\geq k}\right) - \DS_j\left(h^{(\l)}_{\geq k}\right) > 2M \lambda^{p-k-1}.$$
			\end{itemize}
			
			The same node is called \emph{Nash-irrational} if, moreover, the history $h^{(\l)}_{\leq k}$ is compatible with $\bsigma$, as defined so far in the branch.
			
			The same node is called \emph{off-topic} if we have:
			$$\DS_i\left(h\right) > t - M \lambda^{|h|-1},$$
   where $h$ is the longest history that is compatible with $\bsigma$ as defined so far.
			
			Our algorithm consists in constructing the tree $T'$, obtained from the tree $T$ by cutting every branch after the first subgame-irrational (resp. irrational) or off-topic node, and terminating once that construction is finished.

			\item \emph{Correctness.}
			
			Let us consider a complete strategy profile $\bsigma$.
			Using the preliminary result, we know that we have $\mu_i(\< \bsigma \>) \leq t$ if and only if the corresponding branch contains no off-topic node.
			
			Let us now assume that that branch contains a subgame-irrational (resp. Nash-irrational) node $\left(\bsigma(h^{(0)}), \dots, \bsigma(h^{(n)})\right)$.
			Let us use the notations $k, \l, m, p$ and $j$ from the definition of subgame-irrationality (resp. Nash-irrationality).
			Let us define $\pi = \< \bsigma_{\|h^{(\l)}_{\leq k}} \>$: note that the history $h^{(\l)}_{\geq k}$ is a prefix of length $p-k$ of the play $\pi$.
			Similarly, let us extend the history $h^{(m)}_{\geq k}$ into a play $\pi'$, compatible with the strategy profile $\bsigma_{-j\|h^{(m)}_{\leq k}}$.
			By the preliminary result, the inequality:
			$$\DS_j\left(h^{(m)}_{\geq k}\right) - \DS_j\left(h^{(\l)}_{\geq k}\right) > 2 M \lambda^{p-k-1}$$
			implies $\mu_j(\pi') > \mu_j(\pi)$, and therefore the strategy profile $\bsigma$ is not an SPE (resp. not an NE, since the history $h^{(\l)}_{\leq k}$ is compatible with $\bsigma$).
			
			Conversely, if $\bsigma$ is not an SPE (resp. NE), then there exists a history $hv$ (resp. a history $hv$ compatible with $\bsigma$), a player $j$ and a strategy $\sigma'_j$ such that we have:
			$$\mu_j\left(\< \bsigma_{\|hv} \>\right) < \mu_j\left(\< \bsigma_{-j\|hv}, \sigma'_{j\|hv} \>\right).$$
			
			Then, let $\pi = \< \bsigma_{\|hv} \>$, and let $\pi' = \< \bsigma_{-j\|hv}, \sigma'_{j\|hv} \>$: since $\mu_j(\pi) < \mu_j(\pi')$, by the preliminary result, there exists an index $q$ such that:
			$$\DS_j\left(\pi_{\leq q}\right) + M \lambda^q < \DS_j\left(\pi'_{\leq q}\right) - M \lambda^q.$$
			Let now $\l$ and $m$ be the indices such that $h^{(\l)} = h\pi_{\leq q}$, and $h^{(m)} = h\pi'_{\leq q}$.
			Let $p = |h^{(\l)}| = |h^{(m)}|$, and let $k = |h|$.
			Then, we have $\DS_j\left(h^{(m)}_{\geq k}\right) - \DS_j\left(h^{(\l)}_{\geq k}\right) > 2M \lambda^{p-k-1}$: along the branch corresponding to $\bsigma$, the node of depth $\max\{\l, m\}$ is subgame-irrational (resp. Nash-irrational).
			
			As a consequence, the game $\Game_{\|v_0}$, the player $i$ and the threshold $t$ form a positive instance of the subgame-perfect universal (resp. Nash universal) threshold problem if and only if every branch of the tree $T$ contains either a subgame-irrational (resp. a Nash-irrational) or an off-topic node; i.e., by K\H{o}nig's Lemma, if and only if the tree $T'$ is finite; i.e., if and only if our algorithm terminates. \hspace{1em plus 1fill}\qedhere
		\end{itemize}
	\end{proof}

\section{Proof of Theorem~\ref{thm_mp_verif}} \label{pf_mp_verif}

\begin{customthm}{\ref{thm_mp_verif}}
		In the class of mean-payoff games, the Nash rational and the subgame-perfect rational verification problems, deterministic or not, are $\coNP$-complete.
	\end{customthm}
	
	\begin{proof}
	\begin{itemize}
        \item \emph{Lower bounds}

        The $\NP$-hardness of Nash and subgame-perfect universal threshold problems can be shown by the same proof as for parity games: given a formula $\phi$, the game $\Game^\phi$, defined in the proof of Theorem~\ref{thm_parity_verif}, can be transformed into a mean-payoff game as follows: when the color of a vertex $v$ for a given player $i$ is $2$, the reward granted to player $i$ on each edge leading to $v$ is defined as equal to $1$.
     When that color is $1$, that reward is defined as equal to $0$.
     Thus, we can prove that, again, Witness gets a payoff greater than $0$ in every SPE in that game if and only if $\phi$ is not satisfiable.
 
	    \item \emph{Subgame-perfect rational verification.}
	    
	    By~\cite{Icalp}, the complement of the universal threshold problem is $\NP$-easy.     
	The $\coNP$-completeness of the rational verification problem comes by Corollary~\ref{cor_reductions}.

		\item \emph{Nash rational verification}
		
		By~\cite{Icalp}, Theorem~38, given a mean-payoff game $\Game$, a player $i$, a threshold $t$, and a vertex labelling $\lambda: V \to \QQ \cup \{\pm \infty\}$, deciding the existence of a play $\pi$ such that $\mu_i(\pi) \leq t$, and such that $\mu_j(\pi) \geq \lambda(\pi_k)$ for each player $j$ and every vertex $\pi_k \in V_j$, is $\NP$-easy.
        Let us define $\lambda_1$ that maps each vertex $v$ to the value $\inf_{\bsigma_{-j}} \sup_{\sigma_j} \mu_j(\< \bsigma_{-j}, \sigma_j \>)$, where $j$ is the player controlling $v$.
        
        Since memoryless strategies are optimal in two-player zero-sum mean-payoff games (see for instance~\cite{DBLP:journals/tcs/BjorklundSV04}), the value $\lambda_1(v)$ is the payoff of player $j$ in a cycle of $\Game$, and can therefore be written with a number of bits that is polynomial in $\lv \Game \rv$.
        Therefore, there exists a polynomial $P$ such that $\lambda_1$ can be encoded using at most $P(\lv \Game \rv)$ bits.
        
        Thus, a non-deterministic polynomial algorithm that recognizes negative instances of the Nash universal threshold problem consists in guessing:
        \begin{itemize}
            \item a requirement $\lambda$, of size at most $P(\lv \Game \rv)$,

            \item a memoryless strategy profile $\btau^v_{-j}$ from $v$ for each $j$ and $v \in V_j$,
            
            \item and a certificate of the existence of a play $\pi$ satisfying the aforementioned properties,
	       \end{itemize}
        and then in checking, deterministically, that that certificate is valid and that each strategy profile $\tau^v_{-j}$ is such that for each strategy $\tau_j$, we have $\mu_j(\< \btau^v_{-j}, \tau_j\>) \leq \lambda(v)$, i.e. that $\lambda \geq \lambda_1$ --- and therefore, by Lemma~\ref{lm_ne}, that $\pi$ is an NE outcome. \hspace{1em plus 1fill}\qedhere
    \end{itemize}
\end{proof}

\section{Proof of Proposition~\ref{prop_ach_verif}} \label{pf_ach_verif}

\begin{customprop}{\ref{prop_ach_verif}}
	Let $\Cl$ be a class of games, among the classes of energy games and discounted-sum games.
	    Let $\rho \in \{\Nash, \subgameperfect\}$.
	    Then, the positive instances of the achaotic $\rho$-rational verification problem in $\Cl$ are exactly the positive instances of the $\rho$-rational verification problem.
    Similarly, the positive instances of the achaotic Nash-rational verification problem in mean-payoff games are exactly the positive instances of the $\rho$-rational verification problem.
\end{customprop}
	
\begin{proof}
	This proposition is a consequence of the following fact: in both of the settings given above, in every game $\Game_{\|v_0}$ of the studied class, for every Mealy machine $\Mach$ for Leader, there exists a strategy $\sigma_{\LL} \in \Comp_{\|v_0}(\Mach)$ that admits a $\rho$-response.
	
	Indeed, by Theorem~\ref{thm_product_game}, there exists a strategy $\sigma_{\LL} \in \Comp_{\|v_0}(\Mach)$ that admits a $\rho$-response if and only if there exists a $\rho$-equilibrium in $\Game_{\|v_0} \otimes \Mach$.

    \begin{itemize}
        \item In the first case, it is always true: in the two mentioned classes, SPEs are always guaranteed to exist --- and therefore so are NEs.
        Indeed, if $\Cl$ is the class of energy games, then $\Game_{\|v_0}$, and therefore $\Game_{\|v_0} \otimes \Mach$, is a Boolean game with Borel winning conditions.
	    By~\cite{GU08}, every such game contains an SPE.
        If $\Cl$ is the class of discounted-sum games, then $\Game_{\|v_0}$, and therefore $\Game_{\|v_0} \otimes \Mach$, is a game with payoff functions that are continuous for the canonical distance on infinite words.
	    By~\cite{DBLP:conf/csl/BrihayeBMR15}, every such game contains an SPE.

        \item In the second case, it is also always true as well: SPEs may not exist in mean-payoff games, but NEs always do --- see for example~\cite{DBLP:conf/lfcs/BrihayePS13}.
    \end{itemize}

    Therefore, in all those settings, if the game $\Game_{\|v_0}$ and the Mealy machine $\Mach$ are such that $\mu_\LL(\< \bsigma \>) > t$ for every $\sigma_\LL \in \Comp_{\|v_0}(\Mach)$ and all $\bsigma_{-\LL} \in \rho\R(\sigma_\LL)$, i.e. if we have a positive instance of the $\rho$-rational verification problem, then $\epsilon = 0$ is such that $0\rho\R(\sigma_\LL) \neq \emptyset$ for some $\sigma_\LL \in \Comp_{\|v_0}(\Mach)$, and simultaneously such that $\mu_\LL(\< \bsigma \>) > t$ for every $\sigma_\LL \in \Comp_{\|v_0}(\Mach)$ and all $\bsigma_{-\LL} \in 0\rho\R(\sigma_\LL)$ --- i.e., we have a positive instance of the achaotic $\rho$-rational verification problem.
    
    Conversely, if we have a positive instance of the achaotic $\rho$-rational verification problem, i.e. if there exists $\epsilon \geq 0$ such that $\epsilon\rho\R(\sigma_\LL) \neq \emptyset$ for some $\sigma_\LL \in \Comp_{\|v_0}(\Mach)$, and simultaneously such that $\mu_\LL(\< \bsigma \>) > t$ for every $\sigma_\LL \in \Comp_{\|v_0}(\Mach)$ and all $\bsigma_{-\LL} \in \epsilon\rho\R(\sigma_\LL)$, then it is also the case of $0$, since $0\rho$-responses are $\epsilon\rho$-responses; hence we have $\mu_\LL(\< \bsigma \>) > t$ for every $\sigma_\LL \in \Comp_{\|v_0}(\Mach)$ and all $\bsigma_{-\LL} \in \rho\R(\sigma_\LL)$, i.e. we have a positive instance of the $\rho$-rational verification problem.
\end{proof}

\section{Proof of Lemma~\ref{lm_epsilon}} \label{pf_epsilon}

\begin{customlm}{\ref{lm_epsilon}}
		There exists a polynomial $P_1$ such that in every mean-payoff game $\Game_{\|v_0}$, there exists $\epsilon_{\min}$ with $\lv \epsilon_{\min} \rv \leq P_1(\lv \Game \rv)$ such that $\epsilon_{\min}$-SPEs exist in $\Game_{\|v_0}$, and $\epsilon$-SPEs, for every $\epsilon < \epsilon_{\min}$, do not.
	\end{customlm}
	
	\begin{proof}
		First, let us show that $\epsilon_{\min}$ exists.
		According to~\cite{Concur}, $\epsilon$-SPEs are characterized by a mapping called \emph{negotiation function}, that maps each \emph{requirement}, i.e. each vertex labelling $\lambda: V \to \RR \cup \{\pm \infty\}$, to a requirement $\nego(\lambda)  \geq \lambda$.
		The same article defines \emph{$\lambda$-consistent plays} as plays $\pi$ satisfying $\mu_i(\pi) \geq \lambda(v)$ for each player $i$ and each vertex $v \in V_i$, and shows that each play is a $\delta$-SPE outcome if and only if it is $\lambda$-consistent for some requirement $\lambda$ that is a {\em $\delta$-fixed point} of the negotation function, i.e. that satisfies $\nego(\lambda)(v) \leq \lambda(v) + \delta$ for each $v$.
		Moreover, it proves that for every $\delta \geq 0$, we can define a least $\delta$-fixed point $\lambda_\delta$ of the negotiation function, which satisfies $\lambda_\delta(v_0) \neq +\infty$ if and only if $\delta$-SPEs exist.
		
		Let us now define $\epsilon = \inf\{\delta \geq 0 ~|~ \text{$\delta$-SPEs in $\Game_{\|v_0}$ exist}\}$.
		We only need to show that $\epsilon$ is also such that $\epsilon$-SPEs exist, or in other words, that $\lambda_\epsilon(v_0) \neq +\infty$.
		Let us note that for every $\delta, \delta'$ with $\delta \leq \delta'$, the requirement $\lambda_\delta$ is a $\delta'$-fixed point of the negotiation function, and therefore satisfies $\lambda_\delta \geq \lambda_{\delta'}$.
		
		Let us now define the requirement $\lambda: v \mapsto \sup_{\delta > \epsilon} \lambda_\delta(v)$.
		For each $v$ and every $\delta > \epsilon$, we have $\nego(\lambda_\delta)(v) \leq \lambda_\delta(v) - \delta$; since according to~\cite{Concur}, the negotiation function is Scott-continuous, we have $\nego(\lambda)(v) \leq \lambda(v) - \epsilon$, i.e. $\lambda$ is an $\epsilon$-fixed point of the negotiation function, and therefore $\lambda = \lambda_\epsilon$.
		Since $\lambda(v_0) = \sup_\delta \lambda_\delta(v_0) \neq +\infty$, we obtain that $\epsilon$-SPEs exist in $\Game_{\|v_0}$, and therefore that $\epsilon_{\min} = \epsilon$.
		
		Now, it has been shown in~\cite{Icalp}, in the proof of Theorem~1, that one can define for every $\lambda$ a finite union of polyhedra $X_\lambda \subseteq \RR^{V \times \Pi}$, where each of those polyhedra is defined by a set of inequations either of the form $x_{vi} \geq \lambda(w)$ or of size bounded by a polynomial function of $\lv \Game \rv$, and such that for each $i \in \Pi$ and $v \in V_i$, we have $\nego(\lambda)(v) = \min\left\{x_{vi} ~\left|~ \bbx \in X_\lambda\right.\right\}$.
		
		Let us now define the set $Y = \left\{\left(\lambda, \bbx\right) ~\left|~ \bbx \in X_\lambda\right.\right\} \subseteq \RR^V \times \RR^{V \times \Pi}$.
		The set $Y$ is itself a finite union of polyhedra defined by the same inequations than $X_\lambda$, or by inequations of bounded size (those of the form $x_{vi} \geq \lambda(w)$, where $\lambda(w)$ is no longer a constant but a coordinate of $\left(\lambda, \bbx\right)$).
		Let us also define the mapping:
		$$f: \left\{ \begin{matrix}
			Y & \to & \RR \\
			\left(\lambda, \bbx\right) & \mapsto & \underset{i \in  \Pi, v \in V_i}{\max} \left(x_{vi} - \lambda(v)\right)
		\end{matrix} \right..$$
		Then, we have:
		$$\epsilon_{\min} = \min_{\left(\lambda, \bbx\right)} f\left(\lambda, \bbx\right),$$
		and since $f$ is a piecewise linear mapping on a finite union of polyhedra that has a minimum, that minimum is reached on a vertex $\left(\lambda, \bbx\right)$ of one of those polyhedra.
		By Corollary~1 of~\cite{Icalp}, such a vertex has coordinates bounded by a polynomial of the maximal size of the inequations defining the polyhedron to which it belongs, i.e. by a polynomial of $\lv \Game \rv$.
		Therefore, that is also the case of $f(\lambda, \bbx) = \epsilon_{\min}$.
	\end{proof}

\section{Proof of Theorem~\ref{thm_mp_ach_verif}} \label{pf_mp_ach_verif}

			\begin{figure*}
				\centering
                \newcommand{\deltay}{0.7}
				\begin{tikzpicture}[->,>=latex,shorten >=1pt, initial text={}, scale=0.8, every node/.style={scale=0.57}]
    %\useasboundingbox (-0.5, 4.2*\deltay) rectangle (16, -5*\deltay);
					\node[state] (?x1) at (0, 0*\deltay) {$?x_1$};
					\node (x1) at (2, 1.5*\deltay) {\dots};
					\node (nx1) at (2, -1.5*\deltay) {\dots};
					\node[state] (?xi) at (4, 0*\deltay) {$?x_i$};
					\node[state] (xi) at (6, 1.5*\deltay) {$x_i$};
					\node[state] (nxi) at (6, -1.5*\deltay) {$\neg x_i$};
					\node (?xi+1) at (8, 0*\deltay) {\dots};
					\node[state] (xn) at (10, 1.5*\deltay) {$x_n$};
					\node[state] (nxn) at (10, -1.5*\deltay) {$\neg x_n$};
					\node[state] (C1) at (12, 0*\deltay) {$C_1$};
					\node (C2) at (14, 0*\deltay) {\dots};
					\node[state] (Cp) at (16, 0*\deltay) {$C_p$};
					\node[state, initial left] (a) at (0, -4*\deltay) {$a$};
					\node[state] (b) at (2, -4*\deltay) {$b$};
					\node[state] (c) at (4, -4*\deltay) {$c$};
					\node[state] (m1) at (6, -3.5*\deltay) {$\blacktriangledown$};
					\node[state, dashed] (m2) at (6, 3.5*\deltay) {$\blacktriangledown$};
					\node[state, dashed] (m3) at (14, -1.5*\deltay) {$\blacktriangledown$};
					
					\node[blue] (A) at (0, -4.6*\deltay) {$\AA$};
					\node[blue] (B) at (2, -3.4*\deltay) {$\BB$};
					\node[blue] (S1) at (-0.4, -0.4*\deltay) {$\SS$};
					\node[blue] (S2) at (3.6, -0.4*\deltay) {$\SS$};
					\node[blue] (pxi) at (6, 0.9*\deltay) {$x_i$};
					\node[blue] (pnxi) at (6, -0.9*\deltay) {$\neg x_i$};
					\node[blue] (pxn) at (10, 0.9*\deltay) {$x_n$};
					\node[blue] (pnxn) at (10, -0.9*\deltay) {$\neg x_n$};
					\node[blue] (pC1) at (12, -0.6*\deltay) {$C_1$};
					\node[blue] (pCp) at (16, -0.6*\deltay) {$C_p$};

					\path (a) edge[bend left] node[above] {$\stack{\AA}{0}~\stack{\BB}{3}~\stack{\WW}{1}$} (b);
					\path (b) edge[bend left] node[below] {$\stack{\AA}{0}~\stack{\BB}{3}~\stack{\WW}{1}$} (a);
					\path (b) edge (c);
					\path (c) edge[loop above] node[above] {$\stack{\AA}{2}~\stack{\BB}{2}~\stack{\WW}{1}$} (c);
					\path (a) edge (?x1);
					\path (?x1) edge (x1);
					\path (?x1) edge (nx1);
					\path (?xi) edge node[above left] {$\stack{x_i}{2m}~\stack{C}{m}~\stack{\AA}{2 - \frac{m}{2^{i+1}}}$} (xi);
					\path (?xi) edge node[below left] {$\stack{\neg x_i}{2m}~\stack{C}{m}~\stack{\AA}{2}$} (nxi);
					\path (xi) edge node[above right] {$\stack{x_i}{2m}~\stack{C}{m}~\stack{\AA}{2 - \frac{m}{2^{i+1}}}$} (?xi+1);
					\path (nxi) edge node[below right] {$\stack{\neg x_i}{2m}~\stack{C}{m}~\stack{\AA}{2}$} (?xi+1);
					\path (xn) edge node[above right] {$\stack{x_n}{2m}~\stack{C}{m}~\stack{\AA}{2 - \frac{m}{2^{n+1}}}$} (C1);
					\path (nxn) edge node[below right] {$\stack{\neg x_n}{2m}~\stack{C}{m}~\stack{\AA}{2}~\stack{\WW}{1}$} (C1);
					\path (C1) edge node[above] {$\stack{\AA}{2}$} (C2);
					\path (C2) edge node[above] {$\stack{\AA}{2}$} (Cp);
					\path (Cp) edge[bend right=40] node[above] {$\stack{\AA}{2}$} (?x1);
					\path (nxi) edge (m1);
					\path (nxn) edge[bend right=15] (m1);
					\path (xi) edge (m2);
					\path (xn) edge[bend left=15] (m2);
					\path (C1) edge[bend left=15] (m3);
					\path (Cp) edge[bend right=15] (m3);
					\path (m1) edge[loop below] node[below] {$\stack{\AA}{1}~\stack{x_1, \neg x_1, \dots, x_n, \neg x_n}{4}~ \stack{C_1, \dots, C_p}{2}~\stack{\WW}{1}$} (m1);
				\end{tikzpicture}
				\caption{The game $\Game_{\|a}$}
				\label{fig_G_pnp_bis}
			\end{figure*}

\begin{customthm}{\ref{thm_mp_ach_verif}}
		In the class of mean-payoff games, the achaotic subgame-perfect rational verification problem, deterministic or not, is $\Poly^\NP$-complete.
	\end{customthm}

\begin{proof}
		Using Lemma~\ref{lm_epsilon} and the same arguments as in the proof of Theorem~\ref{thm_product_game}, those two problems reduce to the following problem, and conversely: given a game $\Game_{\|v_0}$ and a threshold $t \in \QQ$, does every $\epsilon_{\min}$-SPE $\bsigma$ in $\Game_{\|v_0}$ satisfy $\mu_\LL(\< \bsigma \>) > t$?
		Let us prove that that problem is $\Poly^\NP$-complete.
		
		\begin{itemize}
			\item \emph{Easiness.}
			
			By~\cite{Icalp}, there is an $\NP$ algorithm that decides, given $\epsilon$ and $\Game_{\|v_0}$, whether there exists an $\epsilon$-SPE in $\Game_{\|v_0}$, i.e. whether $\epsilon \geq \epsilon_{\min}$.
			Using Lemma~\ref{lm_epsilon}, a dichotomous search can therefore compute $\epsilon_{\min}$ by a polynomial number of calls to that algorithm.
			Then, one last call to that same algorithm can decide whether there exists an $\epsilon_{\min}$-SPE $\bsigma$ such that $\mu_i(\< \bsigma \>) \leq t$.

			\item \emph{Hardness.}
			
			We proceed by reduction from the following $\Poly^\NP$-complete problem: given a Boolean formula $\phi$ in conjunctive normal form over the ordered variables $x_1, \dots, x_n$, is the lexicographically first valuation $\nu_{\min}$ satisfying $\phi$ such that $\nu_{\min}(x_n) = 1$? (and in particular, does such a valuation exist?)
			Let us write $\phi = \bigwedge_{j=1}^p C_j$.
			We construct a game $\Game_{\|a}$, with a player called \emph{Witness} and written $\WW$, in which there exists an $\epsilon_{\min}$-SPE $\bsigma$ such that $\mu_\WW(\< \bsigma\>) \leq 0$ if and only if $\phi$ is satisfiable and $\nu_{\min}(x_n) = 1$.
			That game, depicted in Figure~\ref{fig_G_pnp_bis}, has $2n + p + 4$ players: the literal players $x_1, \neg x_1, \dots, x_n, \neg x_n$; the clause players $C_1, \dots, C_p$; the player \emph{Solver}, written $\SS$; the player \emph{Witness}, written $\WW$; the player \emph{Alice}, written $\AA$; and the player \emph{Bob}, written $\BB$.
			It contains $3n + p + 4$ vertices:
			\begin{itemize}
				\item the initial vertex $v_0 = a$, controlled by Alice;
				
				\item two vertices $b$ and $c$, controlled by Bob;
				
				\item for each variable $x_i$, a vertex $?x_i \in V_\SS$ , a vertex $x_i \in V_{x_i}$, and a vertex $\neg x_i \in V_{\neg x_i}$;
				
				\item for each clause $C_j$, a vertex $C_j \in V_{C_j}$;
				
				\item a sink vertex $\blacktriangledown$ (drawn three times in Figure~\ref{fig_G_pnp} for convenience).
			\end{itemize}
			
			which are connected by the following edges (unmentioned rewards are equal to $0$, and we write $m = 2n+p$):
			\begin{itemize}
				\item from the vertex $a$ to the vertex $b$ and from the vertex $b$ to the vertex $a$, two edges that give Alice the reward $0$, Bob the reward $3$, and Witness the reward $1$;
				
				\item from $a$ to $?x_1$ and from $b$ to $c$, an edge;
				
				\item from $c$ to itself, an edge giving both Alice and Bob the reward $2$, and giving Witness the reward $1$;
				
				\item from each $?x_i$ to $\neg x_i$ and from $\neg x_i$ to $?x_{i+1}$ (or to $C_1$ if $i = n$), an edge giving:
				\begin{itemize}
					\item the reward $2m$ to $\neg x_i$,
					\item the reward $m$ to every player $C_j$ such that the clause $C_j$ contains the literal $\neg x_i$,
					\item the reward $2$ to Alice;
					\item and if $i = n$, the reward $1$ to Witness;
				\end{itemize}
				
				\item from each $?x_i$ to $x_i$ and from $x_i$ to $?x_{i+1}$ (or to $C_1$ if $i = n$), an edge giving:
				\begin{itemize}
					\item the reward $2m$ to $x_i$,
					\item the reward $m$ to every player $C_j$ such that the clause $C_j$ contains the literal $x_i$,
					\item and the reward $2 - \frac{m}{2^{i+1}}$ to Alice;
				\end{itemize}
				
				\item from each $C_j$ to $C_{j+1}$ (or $?x_1$ if $j = p$), an edge giving the reward $2$ to Alice;
				
				\item from the sink vertex $\blacktriangledown$ to itself, an edge giving the reward $1$ to Alice, the reward $2$ to each clause player, the reward $4$ to each literal player, and $1$ to Witness.
			\end{itemize}

Let us now present the correspondence between $\epsilon$-SPEs and valuations satisfying $\phi$.

\begin{itemize}
    \item \emph{The strategy profile $\bsigma_\nu$.}

    Let $\nu$ be a valuation satisfying $\phi$.
    We define the memoryless strategy profile $\bsigma_\nu$ as follows: from the vertex $a$, Alice always goes to ${?x_1}$; from the vertex $b$, Bob always goes to $c$; from each vertex ${?x_i}$, Solver always goes to $x_i$ if $\nu(x_i) = 1$, and to $\neg x_i$ otherwise; from their vertices, the clause players and the literal players that are satisfied by $\nu$ do never go to the vertex $\blacktriangledown$; and the literal players that are not satisfied by $\nu$ do whenever they have the opportunity.
					
	Such a strategy profile is an $\epsilon$-SPE, where $\epsilon = \sum_{i=1}^n \frac{\nu(x_i)}{2^i}$.
	Indeed, Solver's payoff is constant, and Witness does not control any vertex, hence they have no profitable deviation.
	In every subgame where they have actions to choose, all the clause players get at least the payoff $2$, since at least one of their literals is satisfied (and therefore the corresponding vertex is visited at each turn).
	Similarly, from the vertex he controls, each literal player gets the payoff $4$, either because he is satisfied by $\nu$ or by going to the vertex $\blacktriangledown$; they have therefore no profitable deviation.
	As for Bob, in every subgame where he has an action to choose, he gets the payoff $2$, and cannot get a better one, since Alice always plans to go to $?x_1$.
	Let us now focus on Alice.
	From the vertex $a$, the only one that she controls, she could get the payoff $2$ by going to $b$.
	By going to the vertex $?x_1$, she gets the payoff:
	\begin{gather*}
	\frac{1}{2n+p} \left(\sum_{i=1}^n 2\left( 2 - (2n+p) \frac{\nu(x_i)}{2^{i+1}} \right) + 2p \right) \\
	= 2 - \sum_{i=1}^n \frac{\nu(x_i)}{2^i} = 2 - \epsilon,
	\end{gather*}
	hence $\bsigma_\nu$ is an $\epsilon$-SPE, and is not a $\delta$-SPE for any $\delta < \epsilon$.

    \item \emph{The valuation $\nu_{\bsigma}$.}

    Let $\bsigma$ be a $1$-SPE such that the play $\pi = \< \bsigma \>$ traverses the vertex $?x_1$, but does never reach the vertex $\blacktriangledown$.
    We define the valuation $\nu_{\bsigma}$ by $\nu_{\bsigma}(x_i) = 1$ if and only if $\pi$ traverses the vertex $x_i$.
    
	Let us note that for every vertex $x_i$ that is traversed by $\pi$, the player $x_i$ can deviate and go to the vertex $\blacktriangledown$, where he can get the payoff $4$.
	Since $\bsigma$ is a $1$-SPE, we have therefore $\mu_{x_i}(\pi) \geq 3$, and therefore $\mu_{\neg x_i}(\pi) \leq 1$.
    As a consequence, the vertices $x_i$ and $\neg x_i$ cannot both be traversed; and if the vertex $\neg x_i$ is traversed, then we have $\nu_{\bsigma}(x_i) = 0$.
				
	Let us prove that the valuation $\nu_{\bsigma}$ satisfies $\phi$.
	Let us consider a clause $C_j$ of $\phi$.
	Since the vertex $C_j$ is visited (infinitely often), there is a deviation available for player $C_j$ that grants her the payoff $2$.
	Since $\bsigma$ is a $1$-SPE, we have $\mu_{C_j}(\pi) \geq 1$.
	Therefore, there is at least one literal $\l$ of $C_j$ such that the vertex $\l$ is visited by $\pi$, i.e. such that $\nu_{\bsigma}$ satisfies $\l$.
    Consequently, the valuation $\nu_{\bsigma}$ satisfies $C_j$, and therefore $\phi$.

    Moreover, since the play $\pi$ does never traverse both $x_i$ and $\neg x_i$ for any $i$, it grants Alice the payoff:
    \begin{gather*}
	\frac{1}{2n+p} \left(\sum_{i=1}^n 2\left( 2 - (2n+p) \frac{\nu_{\bsigma}(x_i)}{2^{i+1}} \right) + 2p \right) \\
	= 2 - \sum_{i=1}^n \frac{\nu_{\bsigma}(x_i)}{2^i}.
	\end{gather*}
\end{itemize}

Therefore, if $\nu_{\min}$ is the least valuation satisfying $\phi$ (and in particular if such a valuation exists), then we have $\epsilon_{\min} = \sum_{i=1}^n \frac{\nu_{\min}(x_i)}{2^i}$.
Indeed, the strategy profile $\bsigma_{\nu_{\min}}$ is a $\sum_{i=1}^n \frac{\nu_{\min}(x_i)}{2^i}$-SPE, and we can prove that there is no $\epsilon$-SPE with $\epsilon < \sum_{i=1}^n \frac{\nu_{\min}(x_i)}{2^i}$: if $\bsigma$ is an $\epsilon$-SPE with $\epsilon < 1$, then necessarily, there exist infinitely many integers $k$ such that $\sigma_\AA((ab)^ka) = {?x_1}$, because otherwise each play $\< \bsigma_{\|(ab)^ka} \>$ would either end in the vertex $c$ (and then Bob would have a deviation profitable by more than $1$ by refusing to go to $c$), or would be equal to $(ab)^\omega$ (and then Alice would have a deviation profitable by more than $1$ by going to $?x_1$).
Then, for each such $k \neq 0$, we have $\sigma_\BB((ab)^k) = c$ (if Bob does not go to $c$, then he gets only the payoff $1$, while by going to $c$, he gets the payoff $2$).
Therefore, if we choose some $k$ such that $\sigma_\AA((ab)^ka) = {?x_1}$, we also have $\mu_\AA(\< \bsigma_{\|(ab)^kab} \>) = 2$.
Since $\bsigma$ is an $\epsilon$-SPE with $\epsilon < 1$, we have therefore $\mu_\AA(\< \bsigma_{\|(ab)^ka} \>) > 1$, which means that the play $\< \bsigma_{\|(ab)^ka} \>$ does never reach the vertex $\blacktriangledown$.
Then, the strategy profile $\bsigma_{\|(ab)^ka}$ is a $1$-SPE such that the play $\< \bsigma_{\|(ab)^ka} \>$ traverses the vertex $?x_1$, but does never reach the vertex $\blacktriangledown$: the valuation $\nu_{\bsigma_{\|(ab)^ka}}$ is defined, satisfies the formula $\phi$, and is such that:
$$\mu_\AA(\< \bsigma_{\|(ab)^ka} \>) = 2 - \sum_{i=1}^n \frac{\nu_{\bsigma_{\|(ab)^ka}}(x_i)}{2^i}.$$
By going to the vertex $b$ instead of $?x_1$, Alice has therefore a deviation that is profitable by $\sum_{i=1}^n \frac{\nu_{\bsigma_{\|(ab)^ka}}(x_i)}{2^i}$.
Moreover, since the valuation $\nu_{\bsigma_{\|(ab)^ka}}$ satisfies $\phi$, it is lexicographically greater than or equal to $\nu_{\min}$, hence the inequality:
$$\sum_{i=1}^n \frac{\nu_{\bsigma_{\|(ab)^ka}}(x_i)}{2^i} \geq \sum_{i=1}^n \frac{\nu_{\min}(x_i)}{2^i},$$
and therefore $\epsilon \geq \sum_{i=1}^n \frac{\nu_{\min}(x_i)}{2^i}$, as desired.
			
We can now conclude that there exists an $\epsilon_{\min}$-SPE $\bsigma$ in this game such that $\mu_\WW(\< \bsigma \>) \leq 0$ if and only if $\nu_{\min}(x_n) = 1$:

\begin{itemize}
    \item if there exists such an $\epsilon_{\min}$-SPE, then necessarily the play $\< \bsigma \>$ visits infinitely often the vertex $x_n$ (every other play gives Witness a positive payoff).
    Therefore, the valuation $\nu_{\min} = \nu_{\bsigma}$ is such that $\nu_{\min}(x_n) = 1$.

    \item Conversely, if $\nu_{\min}(x_n) = 1$, then the $\epsilon_{\min}$-SPE $\bsigma_{\nu_{\min}}$ traverses the vertex $?x_1$ and does never traverse $\neg x_n$, nor reach the vertex $\blacktriangledown$; it grants therefore Witness the payoff $0$. \hspace{1em plus 1fill}\qedhere
\end{itemize}
		\end{itemize}
	\end{proof}

\end{document}